\newtheorem{theorem}{Theorem}
\newtheorem{corollary}{Corollary}
\newtheorem{definition}{Definition}
\newtheorem{example}{Example}
\newtheorem{lemma}{Lemma}
\newtheorem{proposition}{Proposition}
\newtheorem{remark}{Remark}
\numberwithin{equation}{section}
\chardef\@x10\chardef\@xv60
\def\tcitime{
\def\@time{%
  \@minute\time\@hour\@minute\divide\@hour\@xv
  \ifnum\@hour<\@x 0\fi\the\@hour:%
  \multiply\@hour\@xv\advance\@minute-\@hour
  \ifnum\@minute<\@x 0\fi\the\@minute
  }}%
\def\QCTOpt[#1]#2{%
  \def\QCTOptB{#1}
  \def\QCTOptA{#2}
}
\def\QCTNOpt#1{%
  \def\QCTOptA{#1}
  \let\QCTOptB\empty
}
\def\Qct{%
  \@ifnextchar[{%
    \QCTOpt}{\QCTNOpt}
}
\def\QCBOpt[#1]#2{%
  \def\QCBOptB{#1}
  \def\QCBOptA{#2}
}
\def\QCBNOpt#1{%
  \def\QCBOptA{#1}
  \let\QCBOptB\empty
}
\def\Qcb{%
  \@ifnextchar[{%
    \QCBOpt}{\QCBNOpt}
}
\def\PrepCapArgs{%
  \ifx\QCBOptA\empty
    \ifx\QCTOptA\empty
      {}%
    \else
      \ifx\QCTOptB\empty
        {\QCTOptA}%
      \else
        [\QCTOptB]{\QCTOptA}%
      \fi
    \fi
  \else
    \ifx\QCBOptA\empty
      {}%
    \else
      \ifx\QCBOptB\empty
        {\QCBOptA}%
      \else
        [\QCBOptB]{\QCBOptA}%
      \fi
    \fi
  \fi
}
\def\GRAPHICSPS#1{%
 \ifcase\GRAPHICSTYPE
   \special{ps: #1}%
 \or
   \special{language "PS", include "#1"}%
 \fi
}%
\def\graffile#1#2#3#4{%
    \leavevmode
    \raise -#4 \BOXTHEFRAME{%
        \hbox to #2{\raise #3\hbox to #2{\null #1\hfil}}}%
}%
\def\draftbox#1#2#3#4{%
 \leavevmode\raise -#4 \hbox{%
  \frame{\rlap{\protect\tiny #1}\hbox to #2%
   {\vrule height#3 width\z@ depth\z@\hfil}%
  }%
 }%
}%
\newif\ifwasdraft
\def\GRAPHIC#1#2#3#4#5{%
 \ifnum\draft=\@ne\draftbox{#2}{#3}{#4}{#5}%
  \else\graffile{#1}{#3}{#4}{#5}%
  \fi
 }%
\def\addtoLaTeXparams#1{%
    \edef\LaTeXparams{\LaTeXparams #1}}%
\newif\ifBoxFrame \BoxFramefalse
\newif\ifOverFrame \OverFramefalse
\newif\ifUnderFrame \UnderFramefalse
\def\BOXTHEFRAME#1{%
   \hbox{%
      \ifBoxFrame
         \frame{#1}%
      \else
         {#1}%
      \fi
   }%
}
\def\doFRAMEparams#1{\BoxFramefalse\OverFramefalse\UnderFramefalse\readFRAMEparams#1\end}%
\def\readFRAMEparams#1{%
 \ifx#1\end%
  \let\next=\relax
  \else
  \ifx#1i\dispkind=\z@\fi
  \ifx#1d\dispkind=\@ne\fi
  \ifx#1f\dispkind=\tw@\fi
  \ifx#1t\addtoLaTeXparams{t}\fi
  \ifx#1b\addtoLaTeXparams{b}\fi
  \ifx#1p\addtoLaTeXparams{p}\fi
  \ifx#1h\addtoLaTeXparams{h}\fi
  \ifx#1X\BoxFrametrue\fi
  \ifx#1O\OverFrametrue\fi
  \ifx#1U\UnderFrametrue\fi
  \ifx#1w
    \ifnum\draft=1\wasdrafttrue\else\wasdraftfalse\fi
    \draft=\@ne
  \fi
  \let\next=\readFRAMEparams
  \fi
 \next
 }%
\def\IFRAME#1#2#3#4#5#6{%
      \bgroup
      \let\QCTOptA\empty
      \let\QCTOptB\empty
      \let\QCBOptA\empty
      \let\QCBOptB\empty
      #6%
      \parindent=0pt%
      \leftskip=0pt
      \rightskip=0pt
      \setbox0 = \hbox{\QCBOptA}%
      \@tempdima = #1\relax
      \ifOverFrame
          \typeout{This is not implemented yet}%
          \show\HELP
      \else
         \ifdim\wd0>\@tempdima
            \advance\@tempdima by \@tempdima
            \ifdim\wd0 >\@tempdima
               \textwidth=\@tempdima
               \setbox1 =\vbox{%
                  \noindent\hbox to \@tempdima{\hfill\GRAPHIC{#5}{#4}{#1}{#2}{#3}\hfill}\\%
                  \noindent\hbox to \@tempdima{\parbox[b]{\@tempdima}{\QCBOptA}}%
               }%
               \wd1=\@tempdima
            \else
               \textwidth=\wd0
               \setbox1 =\vbox{%
                 \noindent\hbox to \wd0{\hfill\GRAPHIC{#5}{#4}{#1}{#2}{#3}\hfill}\\%
                 \noindent\hbox{\QCBOptA}%
               }%
               \wd1=\wd0
            \fi
         \else
            \ifdim\wd0>0pt
              \hsize=\@tempdima
              \setbox1 =\vbox{%
                \unskip\GRAPHIC{#5}{#4}{#1}{#2}{0pt}%
                \break
                \unskip\hbox to \@tempdima{\hfill \QCBOptA\hfill}%
              }%
              \wd1=\@tempdima
           \else
              \hsize=\@tempdima
              \setbox1 =\vbox{%
                \unskip\GRAPHIC{#5}{#4}{#1}{#2}{0pt}%
              }%
              \wd1=\@tempdima
           \fi
         \fi
         \@tempdimb=\ht1
         \advance\@tempdimb by \dp1
         \advance\@tempdimb by -#2%
         \advance\@tempdimb by #3%
         \leavevmode
         \raise -\@tempdimb \hbox{\box1}%
      \fi
      \egroup%
}%
\def\DFRAME#1#2#3#4#5{%
 \begin{center}
     \let\QCTOptA\empty
     \let\QCTOptB\empty
     \let\QCBOptA\empty
     \let\QCBOptB\empty
     \ifOverFrame 
        #5\QCTOptA\par
     \fi
     \GRAPHIC{#4}{#3}{#1}{#2}{\z@}
     \ifUnderFrame 
        \nobreak\par #5\QCBOptA
     \fi
 \end{center}%
 }%
\def\FFRAME#1#2#3#4#5#6#7{%
 \begin{figure}[#1]%
  \let\QCTOptA\empty
  \let\QCTOptB\empty
  \let\QCBOptA\empty
  \let\QCBOptB\empty
  \ifOverFrame
    #4
    \ifx\QCTOptA\empty
    \else
      \ifx\QCTOptB\empty
        \caption{\QCTOptA}%
      \else
        \caption[\QCTOptB]{\QCTOptA}%
      \fi
    \fi
    \ifUnderFrame\else
      \label{#5}%
    \fi
  \else
    \UnderFrametrue%
  \fi
  \begin{center}\GRAPHIC{#7}{#6}{#2}{#3}{\z@}\end{center}%
  \ifUnderFrame
    #4
    \ifx\QCBOptA\empty
      \caption{}%
    \else
      \ifx\QCBOptB\empty
        \caption{\QCBOptA}%
      \else
        \caption[\QCBOptB]{\QCBOptA}%
      \fi
    \fi
    \label{#5}%
  \fi
  \end{figure}%
 }%
\def\makeactives{
  \catcode`\"=\active
  \catcode`\;=\active
  \catcode`\:=\active
  \catcode`\'=\active
  \catcode`\~=\active
}
   \gdef\activesoff{%
      \def"{\string"}
      \def;{\string;}
      \def:{\string:}
      \def'{\string'}
      \def~{\string~}
    }
\def\FRAME#1#2#3#4#5#6#7#8{%
 \bgroup
 \@ifundefined{bbl@deactivate}{}{\activesoff}
 \ifnum\draft=\@ne
   \wasdrafttrue
 \else
   \wasdraftfalse%
 \fi
 \def\LaTeXparams{}%
 \dispkind=\z@
 \def\LaTeXparams{}%
 \doFRAMEparams{#1}%
 \ifnum\dispkind=\z@\IFRAME{#2}{#3}{#4}{#7}{#8}{#5}\else
  \ifnum\dispkind=\@ne\DFRAME{#2}{#3}{#7}{#8}{#5}\else
   \ifnum\dispkind=\tw@
    \edef\@tempa{\noexpand\FFRAME{\LaTeXparams}}%
    \@tempa{#2}{#3}{#5}{#6}{#7}{#8}%
    \fi
   \fi
  \fi
  \ifwasdraft\draft=1\else\draft=0\fi{}%
  \egroup
 }%
\def\TEXUX#1{"texux"}
\long\def\QQQ#1#2{%
     \long\expandafter\def\csname#1\endcsname{#2}}%
\long\def\QQA#1#2{}%
\def\QTR#1#2{{\csname#1\endcsname #2}}
\def\EXPAND#1[#2]#3{}%
\def\NOEXPAND#1[#2]#3{}%
\def\LaTeXparent#1{}%
\def\ChildStyles#1{}%
\def\ChildDefaults#1{}%
\def\QTagDef#1#2#3{}%
\def\QQfnmark#1{\footnotemark}
\def\makeatletter\input gnuindex.sty\makeatother\makeindex{\makeatletter\input gnuindex.sty\makeatother\makeindex}%
\def\initial#1{\bigbreak{\raggedright\large\bf #1}\kern 2\p@\penalty3000}}%
 \def\abstract{%
  \if@twocolumn
   \section*{Abstract (Not appropriate in this style!)}%
   \else \small 
   \begin{center}{\bf Abstract\vspace{-.5em}\vspace{\z@}}\end{center}%
   \quotation 
   \fi
  }%
   \def\registered{\relax\ifmmode{}\r@gistered
                    \else$\m@th\r@gistered$\fi}%
 \def\r@gistered{^{\ooalign
  {\hfil\raise.07ex\hbox{$\scriptstyle\rm\text{R}$}\hfil\crcr
  \mathhexbox20D}}}}{}%
\newdimen\theight
\def\Column{%
 \vadjust{\setbox\z@=\hbox{\scriptsize\quad\quad tcol}%
  \theight=\ht\z@\advance\theight by \dp\z@\advance\theight by \lineskip
  \kern -\theight \vbox to \theight{%
   \rightline{\rlap{\box\z@}}%
   \vss
   }%
  }%
 }%
\def\qed{%
 \ifhmode\unskip\nobreak\fi\ifmmode\ifinner\else\hskip5\p@\fi\fi
 \hbox{\hskip5\p@\vrule width4\p@ height6\p@ depth1.5\p@\hskip\p@}%
 }%
\def\miss{\hbox{\vrule height2\p@ width 2\p@ depth\z@}}%
\def\tcol#1{{\baselineskip=6\p@ \vcenter{#1}} \Column}  %
\def\newfmtname{LaTeX2e}
\def\chkcompat{%
   \if@compatibility
   \else
     \usepackage{latexsym}
   \fi
}
  \DeclareOldFontCommand{\rm}{\normalfont\rmfamily}{\mathrm}
  \DeclareOldFontCommand{\sf}{\normalfont\sffamily}{\mathsf}
  \DeclareOldFontCommand{\tt}{\normalfont\ttfamily}{\mathtt}
  \DeclareOldFontCommand{\bf}{\normalfont\bfseries}{\mathbf}
  \DeclareOldFontCommand{\it}{\normalfont\itshape}{\mathit}
  \DeclareOldFontCommand{\sl}{\normalfont\slshape}{\@nomath\sl}
  \DeclareOldFontCommand{\sc}{\normalfont\scshape}{\@nomath\sc}
\def\alpha{\Greekmath 010B }%
\def\beta{\Greekmath 010C }%
\def\gamma{\Greekmath 010D }%
\def\delta{\Greekmath 010E }%
\def\epsilon{\Greekmath 010F }%
\def\zeta{\Greekmath 0110 }%
\def\eta{\Greekmath 0111 }%
\def\theta{\Greekmath 0112 }%
\def\iota{\Greekmath 0113 }%
\def\kappa{\Greekmath 0114 }%
\def\lambda{\Greekmath 0115 }%
\def\mu{\Greekmath 0116 }%
\def\nu{\Greekmath 0117 }%
\def\xi{\Greekmath 0118 }%
\def\pi{\Greekmath 0119 }%
\def\rho{\Greekmath 011A }%
\def\sigma{\Greekmath 011B }%
\def\tau{\Greekmath 011C }%
\def\upsilon{\Greekmath 011D }%
\def\phi{\Greekmath 011E }%
\def\chi{\Greekmath 011F }%
\def\psi{\Greekmath 0120 }%
\def\omega{\Greekmath 0121 }%
\def\varepsilon{\Greekmath 0122 }%
\def\vartheta{\Greekmath 0123 }%
\def\varpi{\Greekmath 0124 }%
\def\varrho{\Greekmath 0125 }%
\def\varsigma{\Greekmath 0126 }%
\def\varphi{\Greekmath 0127 }%
\def\nabla{\Greekmath 0272 }
\def\FindBoldGroup{%
   {\setbox0=\hbox{$\mathbf{x\global\edef\theboldgroup{\the\mathgroup}}$}}%
}
\def\Greekmath#1#2#3#4{%
    \if@compatibility
        \ifnum\mathgroup=\symbold
           \mathchoice{\mbox{\boldmath$\displaystyle\mathchar"#1#2#3#4$}}%
                      {\mbox{\boldmath$\textstyle\mathchar"#1#2#3#4$}}%
                      {\mbox{\boldmath$\scriptstyle\mathchar"#1#2#3#4$}}%
                      {\mbox{\boldmath$\scriptscriptstyle\mathchar"#1#2#3#4$}}%
        \else
           \mathchar"#1#2#3#4%
        \fi 
    \else 
        \FindBoldGroup
        \ifnum\mathgroup=\theboldgroup 
           \mathchoice{\mbox{\boldmath$\displaystyle\mathchar"#1#2#3#4$}}%
                      {\mbox{\boldmath$\textstyle\mathchar"#1#2#3#4$}}%
                      {\mbox{\boldmath$\scriptstyle\mathchar"#1#2#3#4$}}%
                      {\mbox{\boldmath$\scriptscriptstyle\mathchar"#1#2#3#4$}}%
        \else
           \mathchar"#1#2#3#4%
        \fi     	    
	  \fi}
\newif\ifGreekBold  \GreekBoldfalse
\let\SAVEPBF=\pbf
\def\pbf{\GreekBoldtrue\SAVEPBF}%
  \newcounter{equationnumber}  
  \def\mathletters{%
     \addtocounter{equation}{1}
     \edef\@currentlabel{\theequation}%
     \setcounter{equationnumber}{\c@equation}
     \setcounter{equation}{0}%
     \edef\theequation{\@currentlabel\noexpand\alph{equation}}%
  }
    \def\BibTeX{{\rm B\kern-.05em{\sc i\kern-.025em b}\kern-.08em
                 T\kern-.1667em\lower.7ex\hbox{E}\kern-.125emX}}}{}%
\def\AmS{{\protect\usefont{OMS}{cmsy}{m}{n}%
                A\kern-.1667em\lower.5ex\hbox{M}\kern-.125emS}}}{}%
\let\DOTSI\relax
\def\RIfM@{\relax\ifmmode}%
\def\FN@{\futurelet\next}%
\def\iint{\DOTSI\intno@\tw@\FN@\ints@}%
\def\iiint{\DOTSI\intno@\thr@@\FN@\ints@}%
\def\iiiint{\DOTSI\intno@4 \FN@\ints@}%
\def\idotsint{\DOTSI\intno@\z@\FN@\ints@}%
\def\ints@{\findlimits@\ints@@}%
\newif\iflimtoken@
\newif\iflimits@
\def\findlimits@{\limtoken@true\ifx\next\limits\limits@true
 \else\ifx\next\nolimits\limits@false\else
 \limtoken@false\ifx\ilimits@\nolimits\limits@false\else
 \ifinner\limits@false\else\limits@true\fi\fi\fi\fi}%
\def\multint@{\int\ifnum\intno@=\z@\intdots@                          
 \else\intkern@\fi                                                    
 \ifnum\intno@>\tw@\int\intkern@\fi                                   
 \ifnum\intno@>\thr@@\int\intkern@\fi                                 
 \int}
\def\multintlimits@{\intop\ifnum\intno@=\z@\intdots@\else\intkern@\fi
 \ifnum\intno@>\tw@\intop\intkern@\fi
 \ifnum\intno@>\thr@@\intop\intkern@\fi\intop}%
\def\intic@{%
    \mathchoice{\hskip.5em}{\hskip.4em}{\hskip.4em}{\hskip.4em}}%
\def\negintic@{\mathchoice
 {\hskip-.5em}{\hskip-.4em}{\hskip-.4em}{\hskip-.4em}}%
\def\ints@@{\iflimtoken@                                              
 \def\ints@@@{\iflimits@\negintic@
   \mathop{\intic@\multintlimits@}\limits                             
  \else\multint@\nolimits\fi                                          
  \eat@}
 \else                                                                
 \def\ints@@@{\iflimits@\negintic@
  \mathop{\intic@\multintlimits@}\limits\else
  \multint@\nolimits\fi}\fi\ints@@@}%
\def\intkern@{\mathchoice{\!\!\!}{\!\!}{\!\!}{\!\!}}%
\def\plaincdots@{\mathinner{\cdotp\cdotp\cdotp}}%
\def\intdots@{\mathchoice{\plaincdots@}%
 {{\cdotp}\mkern1.5mu{\cdotp}\mkern1.5mu{\cdotp}}%
 {{\cdotp}\mkern1mu{\cdotp}\mkern1mu{\cdotp}}%
 {{\cdotp}\mkern1mu{\cdotp}\mkern1mu{\cdotp}}}%
\def\RIfM@{\relax\protect\ifmmode}
\def\text{\RIfM@\expandafter\text@\else\expandafter\mbox\fi}
\let\nfss@text\text
\def\text@#1{\mathchoice
   {\textdef@\displaystyle\f@size{#1}}%
   {\textdef@\textstyle\tf@size{\firstchoice@false #1}}%
   {\textdef@\textstyle\sf@size{\firstchoice@false #1}}%
   {\textdef@\textstyle \ssf@size{\firstchoice@false #1}}%
   \glb@settings}
\def\textdef@#1#2#3{\hbox{{%
                    \everymath{#1}%
                    \let\f@size#2\selectfont
                    #3}}}
\newif\iffirstchoice@
\def\Let@{\relax\iffalse{\fi\let\\=\cr\iffalse}\fi}%
\def\vspace@{\def\vspace##1{\crcr\noalign{\vskip##1\relax}}}%
\def\multilimits@{\bgroup\vspace@\Let@
 \baselineskip\fontdimen10 \scriptfont\tw@
 \advance\baselineskip\fontdimen12 \scriptfont\tw@
 \lineskip\thr@@\fontdimen8 \scriptfont\thr@@
 \lineskiplimit\lineskip
 \vbox\bgroup\ialign\bgroup\hfil$\m@th\scriptstyle{##}$\hfil\crcr}%
\def\Sb{_\multilimits@}%
\def\endSb{\crcr\egroup\egroup\egroup}%
\def\Sp{^\multilimits@}%
\newdimen\ex@
\def\rightarrowfill@#1{$#1\m@th\mathord-\mkern-6mu\cleaders
 \hbox{$#1\mkern-2mu\mathord-\mkern-2mu$}\hfill
 \mkern-6mu\mathord\rightarrow$}%
\def\leftarrowfill@#1{$#1\m@th\mathord\leftarrow\mkern-6mu\cleaders
 \hbox{$#1\mkern-2mu\mathord-\mkern-2mu$}\hfill\mkern-6mu\mathord-$}%
\def\leftrightarrowfill@#1{$#1\m@th\mathord\leftarrow
\mkern-6mu\cleaders
 \hbox{$#1\mkern-2mu\mathord-\mkern-2mu$}\hfill
 \mkern-6mu\mathord\rightarrow$}%
\def\overrightarrow{\mathpalette\overrightarrow@}%
\def\overrightarrow@#1#2{\vbox{\ialign{##\crcr\rightarrowfill@#1\crcr
 \noalign{\kern-\ex@\nointerlineskip}$\m@th\hfil#1#2\hfil$\crcr}}}%
\def\overleftarrow{\mathpalette\overleftarrow@}%
\def\overleftarrow@#1#2{\vbox{\ialign{##\crcr\leftarrowfill@#1\crcr
 \noalign{\kern-\ex@\nointerlineskip}$\m@th\hfil#1#2\hfil$\crcr}}}%
\def\overleftrightarrow{\mathpalette\overleftrightarrow@}%
\def\overleftrightarrow@#1#2{\vbox{\ialign{##\crcr
   \leftrightarrowfill@#1\crcr
 \noalign{\kern-\ex@\nointerlineskip}$\m@th\hfil#1#2\hfil$\crcr}}}%
\def\underrightarrow{\mathpalette\underrightarrow@}%
\def\underrightarrow@#1#2{\vtop{\ialign{##\crcr$\m@th\hfil#1#2\hfil
  $\crcr\noalign{\nointerlineskip}\rightarrowfill@#1\crcr}}}%
\def\underleftarrow{\mathpalette\underleftarrow@}%
\def\underleftarrow@#1#2{\vtop{\ialign{##\crcr$\m@th\hfil#1#2\hfil
  $\crcr\noalign{\nointerlineskip}\leftarrowfill@#1\crcr}}}%
\def\underleftrightarrow{\mathpalette\underleftrightarrow@}%
\def\underleftrightarrow@#1#2{\vtop{\ialign{##\crcr$\m@th
  \hfil#1#2\hfil$\crcr
 \noalign{\nointerlineskip}\leftrightarrowfill@#1\crcr}}}%
\def\qopnamewl@#1{\mathop{\operator@font#1}\nlimits@}
\let\nlimits@\displaylimits
\def\setboxz@h{\setbox\z@\hbox}
\def\varlim@#1#2{\mathop{\vtop{\ialign{##\crcr
 \hfil$#1\m@th\operator@font lim$\hfil\crcr
 \noalign{\nointerlineskip}#2#1\crcr
 \noalign{\nointerlineskip\kern-\ex@}\crcr}}}}
 \def\rightarrowfill@#1{\m@th\setboxz@h{$#1-$}\ht\z@\z@
  $#1\copy\z@\mkern-6mu\cleaders
  \hbox{$#1\mkern-2mu\box\z@\mkern-2mu$}\hfill
  \mkern-6mu\mathord\rightarrow$}
\def\leftarrowfill@#1{\m@th\setboxz@h{$#1-$}\ht\z@\z@
  $#1\mathord\leftarrow\mkern-6mu\cleaders
  \hbox{$#1\mkern-2mu\copy\z@\mkern-2mu$}\hfill
  \mkern-6mu\box\z@$}
\def\projlim{\qopnamewl@{proj\,lim}}
\def\injlim{\qopnamewl@{inj\,lim}}
\def\varinjlim{\mathpalette\varlim@\rightarrowfill@}
\def\varprojlim{\mathpalette\varlim@\leftarrowfill@}
\def\varliminf{\mathpalette\varliminf@{}}
\def\varliminf@#1{\mathop{\underline{\vrule\@depth.2\ex@\@width\z@
   \hbox{$#1\m@th\operator@font lim$}}}}
\def\varlimsup{\mathpalette\varlimsup@{}}
\def\varlimsup@#1{\mathop{\overline
  {\hbox{$#1\m@th\operator@font lim$}}}}
\def\align{\@verbatim \frenchspacing\@vobeyspaces \@alignverbatim
You are using the "align" environment in a style in which it is not defined.}
\let\csname endalign*\endcsname =\endtrivlist
\def\alignat{\@verbatim \frenchspacing\@vobeyspaces \@alignatverbatim
You are using the "alignat" environment in a style in which it is not defined.}
\let\csname endalignat*\endcsname =\endtrivlist
\def\xalignat{\@verbatim \frenchspacing\@vobeyspaces \@xalignatverbatim
You are using the "xalignat" environment in a style in which it is not defined.}
\let\csname endxalignat*\endcsname =\endtrivlist
\def\gather{\@verbatim \frenchspacing\@vobeyspaces \@gatherverbatim
You are using the "gather" environment in a style in which it is not defined.}
\let\csname endgather*\endcsname =\endtrivlist
\def\multiline{\@verbatim \frenchspacing\@vobeyspaces \@multilineverbatim
You are using the "multiline" environment in a style in which it is not defined.}
\let\csname endmultiline*\endcsname =\endtrivlist
\def\arrax{\@verbatim \frenchspacing\@vobeyspaces \@arraxverbatim
You are using a type of "array" construct that is only allowed in AmS-LaTeX.}
\def\tabulax{\@verbatim \frenchspacing\@vobeyspaces \@tabulaxverbatim
You are using a type of "tabular" construct that is only allowed in AmS-LaTeX.}
\let\csname endarrax*\endcsname =\endtrivlist
\let\csname endtabulax*\endcsname =\endtrivlist
\def\@@eqncr{\let\@tempa\relax
    \ifcase\@eqcnt \def\@tempa{& & &}\or \def\@tempa{& &}%
      \else \def\@tempa{&}\fi
     \@tempa
     \if@eqnsw
        \iftag@
           \@taggnum
        \else
           \@eqnnum\stepcounter{equation}%
        \fi
     \fi
     \global\tag@false
     \global\@eqnswtrue
     \global\@eqcnt\z@\cr}
 \def\endequation{%
     \ifmmode\ifinner 
      \iftag@
        \addtocounter{equation}{-1} 
        $\hfil
           \displaywidth\linewidth\@taggnum\egroup \endtrivlist
        \global\tag@false
        \global\@ignoretrue   
      \else
        $\hfil
           \displaywidth\linewidth\@eqnnum\egroup \endtrivlist
        \global\tag@false
        \global\@ignoretrue 
      \fi
     \else   
      \iftag@
        \addtocounter{equation}{-1} 
        \eqno \hbox{\@taggnum}
        \global\tag@false%
        $$\global\@ignoretrue
      \else
        \eqno \hbox{\@eqnnum}
        $$\global\@ignoretrue
      \fi
     \fi\fi
 } 
 \newif\iftag@ \tag@false
 \def\tag{\@ifnextchar*{\@tagstar}{\@tag}}
 \def\@tag#1{%
     \global\tag@true
     \global\def\@taggnum{(#1)}}
 \def\@tagstar*#1{%
     \global\tag@true
     \global\def\@taggnum{#1}%
}
\begin{document}


\RUNAUTHOR{Li and Prokhorov}

\RUNTITLE{Semiparametric Bounds}

\ARTICLEAUTHORS{\AUTHOR{Zhaolin Li}
\AFF{The University of Sydney Business School, Sydney, NSW2006, Australia, \EMAIL{erick.li@sydney.edu.au}}
\AUTHOR{Artem Prokhorov}
\AFF{The University of Sydney Business School \& CEBA \& CIREQ, Sydney, NSW2006, Australia, \EMAIL{artem.prokhorov@sydney.edu.au}}
} 

\TITLE{Tail Probability and Expected Loss Revisited: \\ Theory and Applications of Semiparametric Bounds}

\ABSTRACT{Many management decisions involve accumulated random realizations for which only the first and second moments of their distribution are available. The sharp Chebyshev-type bound for the tail probability and Scarf bound for the expected loss are widely used in this setting. We revisit the tail behavior of such quantities with a focus on independence. 
Conventional primal-dual approaches from optimization are ineffective in this setting. Instead, we use probabilistic inequalities to derive new bounds and offer new insights. For non-identical distributions attaining the tail probability bounds, we show that the extreme values are equidistant regardless of the distributional differences. 
For the bound on the expected loss, we show that the impact of each random variable on the expected sum can be isolated using an extension of the Korkine identity. We illustrate how these new results open up abundant practical applications, including improved pricing of product bundles, more precise option pricing, more
efficient insurance design, and better inventory management. For example, we establish a new solution to the optimal bundling problem, yielding a 17\% uplift in per-bundle profits, and a new solution to the inventory problem, yielding a 5.6\% cost reduction for a model with $20$ retailers. }

\KEYWORDS{Concentration inequality, sum of independent random variables,
bound for tail probability, bound for expected linear loss, optimal bundle
pricing, inventory management, option pricing}


\title{}
\author{}

\maketitle

\bigskip

\section{Introduction}

The exploration of the bounds on tail probability and expected loss
pertaining to the sum of random variables has a long and distinguished
history in statistical theory and in managerial applications. In relation to
a \emph{single} random variable with given mean and variance, Chebyshev's
and Markov's inequalities are the most widely known results pertaining to
tail probability \citep[see, e.g.,][]{M1956}, whereas Scarf's inequality is
a well-known bound on linear expected loss \citep[see, e.g.,][]{Scarf2002OR}. These inequalities found numerous practical applications in such areas as
retail pricing \citep{E2010JEMS, 2018selltomoment}, inventory management %
\citep{Scarf2002OR}, option pricing \citep{LO1987, BP2002option}, insurance
planning \citep{JANSEN1986}, and designing quota-bonus or loan contracts %
\citep{LK2021}. 


For sums of \emph{independent} random variables, \citeauthor{chernoff52}'s (\citeyear{chernoff52}) use of the moment generating
function  marked a milestone as it inspired numerous
subsequent results known as Hoeffding, Azuma, and McDiarmid inequalities,
among others \citep[see,
e.g.,][]{Hoeffding1963,Azuma1967,McDiarmid1989}. 
This has significantly facilitated further development and application of bounds on
tail probabilities 
\citep[see, e.g.,][]{freedman:75, pinelis:94, delapena:99,
PIJ2004, marinelli:24}.
However, practically relevant multivariate extensions of Chebyshev's and Scarf's inequalities in the operations research literature that 
fully exploit independence seem relatively scarce and recent. 
For example, \cite{CHP2022} seem to be the first to 
employ the Chebyshev-type one-sided inequality 
for pricing. 

A natural approach to deriving probabilistic inequalities under independence 
is to use the central limit theorem. However, we are interested in non-asymptotic results and, as we show in simulation experiments, the normal approximation turns out to be poor in our settings. Moreover, the fact that we impose no assumptions on higher-order moments (beyond the mean and
variance) means that we cannot make use of the classical Berry-Esseen
inequality or similar results that provide error bounds for normal approximations 
(see, e.g., Section 27 in \cite{billingsley:95} or Chapter 5 in 
\cite{delaPena/etal:09}). 

This paper reconsiders the behavior of sums of independent random variables under the known mean and variance. We start by showing that in the independently and identically distributed (iid) case, the distribution attaining the Chernoff-type bound is a two-point distribution. The primal-dual approaches, standard in operations research literature, are not effective at fully capturing the independence constraints. We discuss the reasons for this and we show how the Chernov-type bound we derive relate to
the existing results, including the bounds obtained by aggregating the Chebyshev-type inequality. 

We establish that, even with heterogeneous means
and variances, the extreme distribution 
displays the property of \emph{equal range}, which is a new result. The property means that the distributions that achieve the bound have the same gap between the maximum and minimum values, and that having different means and variances does not change this. 
We work out important practical implications of
this new result. For example, the equal range property immediately implies
that a mixed bundling strategy does not strictly outperform a pure bundling
strategy in terms of the worst-case analysis.

When analyzing the bound on the linear loss of the sum, 
we lose the product form characterizing tail probabilities, due to the
piecewise nature of the linear loss function. Therefore, a direct
application of Chernoff-type analysis is impossible. Nonetheless, we are
able to achieve important improvements of the aggregation-based bound. This
is done using two other important results which are new and may be of use in other areas of
statistics.

The first result is to show how Korkine's identity 
\citep[see,
e.g.,][pp.~242-243]{MPF1993} can be used in a multivariate environment to isolate the impact of each random variable on the absolute value of the sum.
A single-variable version of Korkine's identity pertains to the covariance
between the random variable $X$ and the indicator variable $\mathbb{I}%
_{\{X>0\}}$. By maximizing the covariance and keeping the mean of the
indicator unchanged, it is possible to derive the extreme distribution. It turns out
that the same insights apply to the multivariate scenario. In this case, the extreme distribution is a two-point distribution for each variable so that the sum follows a binomial distribution, enabling us to compute the bound on the expected linear loss.

The second result is to obtain the solution of a non-standard
optimization problem arising in this setting. The difficulty here is that
the objective function based on the endogenous binomial distribution is
piecewise with respect to the chosen tail probability. To overcome this
hurdle, we first derive the relationship between the tail probability of
each iid distribution and the tail probability of the sum. Subsequently, we
use log-convexity to show that the optimal tail probability of each iid
distribution is one of the two extreme points. This allows us to disregard
the piecewise nature of the objective function. Along the way, we provide
bounds for quantiles, which incorporate a well known result for the median,
and we show that the equal range condition continues to hold for the
distributions attaining the bound for the linear loss 
under
independent but non-identical distributions.



We recognize that with a sufficiently large $N$, both the tail probability and
expected linear loss behave 
according to the normal distribution. However,
there exist a multitude of practical examples where $N$ is restricted. For
instance, the anti-trust laws regulate insurance firms from entering
multiple end markets. Telecommunication companies offer bundles of only several services such as fixed line phone, broadband
internet, and mobile plan. Travel booking websites offer bundles of airfare, hotel, car rental, and theme park tickets. 
Our numerical examples confirm that when $N$ is moderate, the gap between the new bound and normal
estimate remains significant. Moreover, because the new bounds have neat
closed-form expressions, they are suitable for decision makers who prefer
reliable estimates that require minimum computational or data-collection
effort. For instance, many farmers sign hedge-to-arrival or future
contracts on grain each year. Most grain contracts are traded over the counter and it is common to have knowledge of the first two moments of the historical distribution of a crop.
Thus, closed-form benchmarks can aid farmers/brokers in quickly deciding when to lock-in the grain prices \citep[see, e.g.,][]{LL2024}.

In a different but not unrelated setting, \cite{OSSMO2013Siam} proposed an uncertainty quantification framework for developing tail probability bounds based on the bounded difference property, rather than on the knowledge of the first two moments. They prove that the extreme distribution, i.e. the distribution achieving their bound, is a discrete distribution facilitating the subsequent optimization analysis to compute the bound. For instance, their framework applies to McDiarmid's inequality (which also requires bounded differences). However, 
the ambiguity set implied by the bounded difference property seems less general 
than that implied by the knowledge of mean and variance. 
Moreover, 
the two-point distribution we derive provides an endogenously determined range, 
particularly important for cases of non-equal means and variances. 

We focus on applications in pricing, in the context of bundling and options,
and on applications in inventory management. We work out the details of
three special cases where the use of the new bounds results in sizeable
improvement of the expected profit. 
An important observation in these practical examples is that our derivation
of the bounds involves endogenous Bernoulli distributions. In this setting,
the equal range property displayed by the extreme distribution becomes
crucial. It ensures that the lattices in the support of the multivariate
distributions are in fact squares with the same size despite possibly
unequal means and variances. This property significantly simplifies the
derivation for the sum of multiple non-identical binomial random variables
and has implications for bundle pricing. 

The remainder of this paper is organized as follows. Section \ref%
{S:Benchmark} recaps the benchmark results with a single random variable.
Section \ref{S:Tail} presents a bound on the tail probability associated
with the sum of independent random variables. Section \ref{S:Loss} develops
a bound on the expected absolute value for the same setting. Section \ref%
{S:Application} solves practical problems in relation to bundle pricing,
inventory management, and option pricing based on the newly developed
bounds. Section \ref{S:Conclusion} concludes. An online Supplement contains technical proofs used to derive the results of Section \ref{S:Loss}.

\section{Benchmark Results\label{S:Benchmark}}

We define $\xi =X _{1}+X _{2}+\ldots +X _{N}$ as the sum of $N $ i.i.d.~random
variables (where $N\geq 1$) and let $q$ be a finite constant. The random
variable $X _{n}$, $n=1, \ldots, N$, and the constant $q$ have many
different important interpretations as summarized in Table \ref{Tab:Application}. For each application in Table \ref{Tab:Application}, the assumption of independence is arguably the most prominent in the respective literature. For example, \cite{ibragimov/walden:10} and references therein consider i.i.d.~valuations of goods across customers; \cite{wang/cai:23} assume independence of consumer demand and loans; independence of increments in the log-return processes has been traditionally assumed for option pricing \cite[see, e.g.,][]{broadie/detemple:04}; independence across income sources in loan contracts, across customers' losses given default in loan portfolios and across insurance claim amounts is often assumed automatically \cite[see, e.g.,][]{PARIS:05, GARRIDO/etal:16}.

\begin{table}[tbph]
\caption{Application Examples}
\label{Tab:Application}\centering%
\begin{tabular}{ccc}
Application & $X _{n}$ & $q$ \\ \hline
Bundle pricing & Valuation for good $n$ & Bundle price \\ 
Inventory management & Demand of retailer $n$ & Inventory level \\ 
Option pricing & Price change on day $n$ & Strike price \\ 
Loan contract & Income from source $n$ & Loan amount \\ 
Insurance policy & Damage on asset $n$ & Maximum benefit \\ 
&  & 
\end{tabular}%
\end{table}

We assume that the underlying distribution of $X_{n}$ remains unknown but $%
\mathbb{E}\left( X_{n}\right) \equiv \mu $ and $Var\left( X_{n}\right)
\equiv \sigma ^{2}>0$ are known and finite. \cite{LO1987} refers to this
setting as\emph{\ semiparametric}. We are interested in the lower bound on
the tail probability $\Pr \left( \xi >q\right) $ and in bounds on expected
losses $\mathbb{E}\left( \xi -q\right) ^{+}$ and $\mathbb{E}\left( \xi
-q\right) ^{-}$, where $\left( \cdot \right) ^{+}=\max \left( 0,\cdot
\right) $ and $\left( \cdot \right) ^{-}=\min \left( \cdot, 0 \right) $. 
Since $\left( \xi -q\right) ^{+}$ $=\frac{\xi -q}{2}+\frac{1}{2}\left\vert
\xi -q\right\vert $, it is sometimes more convenient to use the upper bound
of $\mathbb{E}\left( \left\vert \xi -q\right\vert \right) $, instead of $%
\mathbb{E}\left( \xi -q\right) ^{+}$, to solve practical problems.

\subsection{Single Variable}

We first recap the known results with $N=1$ (where we can write $\xi =X $)
as benchmarks.

\begin{lemma}
\label{L:N=1}(Single-Variable Bounds) It holds that (a) $\Pr \left( X
-q>0\right) \geq 1-\frac{\sigma ^{2}}{\left( \mu -q\right) ^{2}+\sigma ^{2}}$
if $\mu >q$; and (b) $\mathbb{E}\left\vert X -q\right\vert \leq \sqrt{\left(
\mu -q\right) ^{2}+\sigma ^{2}}$.
\end{lemma}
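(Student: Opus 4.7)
The plan is to recognize Lemma~\ref{L:N=1} as a packaging of two classical semi-parametric bounds: part~(a) is a one-sided Chebyshev (Cantelli) inequality, and part~(b) is an immediate consequence of Jensen's inequality applied to the concave square-root function.

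For part~(a), I would use the standard ``shift and square'' argument. Introduce an auxiliary parameter $u > 0$ and note that $\{X \leq q\} \subseteq \{(\mu - X + u)^2 \geq (\mu - q + u)^2\}$, using that $\mu > q$ guarantees $\mu - q + u > 0$. Markov's inequality applied to the nonnegative random variable $(\mu - X + u)^2$, whose expectation equals $\sigma^2 + u^2$ once $X - \mu$ is centered, then gives
\[
\Pr(X \leq q) \;\leq\; \frac{\sigma^2 + u^2}{(\mu - q + u)^2}.
\]
The crux is to minimize this bound in $u > 0$. A one-line first-order condition delivers $u^\star = \sigma^2/(\mu - q)$, which is admissible precisely because of the hypothesis $\mu > q$; substituting back collapses the right-hand side to $\sigma^2 / [(\mu - q)^2 + \sigma^2]$, and taking complements yields the claimed lower bound on $\Pr(X > q)$. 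The only point requiring a line of verification is that $u^\star$ is a minimum rather than a maximum, which follows at once from the sign of the derivative of $(\sigma^2+u^2)/(\mu-q+u)^2$ on either side of $u^\star$.

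For part~(b), I would write $|X - q| = \sqrt{(X - q)^2}$ and invoke Jensen's inequality for the concave map $t \mapsto \sqrt{t}$ on $[0,\infty)$, obtaining $\mathbb{E}|X - q| \leq \sqrt{\mathbb{E}(X - q)^2}$. The bias-variance decomposition $\mathbb{E}(X - q)^2 = (\mu - q)^2 + \sigma^2$ then closes the argument. In contrast to part~(a), no sign condition on $\mu - q$ is required here, since both sides of the inequality depend on $\mu - q$ only through $(\mu-q)^2$. I do not anticipate a genuine obstacle in either part; the entire lemma should fit in a few lines.
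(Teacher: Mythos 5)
Your proposal is correct in both parts, but part (a) takes a genuinely different route from the paper. The paper proves (a) by applying the Cauchy--Schwarz inequality to the product $(X-q)\,\mathbb{I}_{\{X>q\}}$, obtaining $0<\mu-q\leq\sqrt{((\mu-q)^2+\sigma^2)\Pr(X>q)}$ and rearranging; you instead use the classical Cantelli ``shift and square'' argument, bounding $\Pr(X\leq q)$ by Markov's inequality applied to $(\mu-X+u)^2$ and optimizing over $u>0$. Both derivations are valid and yield the identical bound (your computation of $u^{\star}=\sigma^{2}/(\mu-q)$ and the resulting value $\sigma^2/((\mu-q)^2+\sigma^2)$ checks out). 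The practical difference is what each argument exposes: the paper's Cauchy--Schwarz route carries the equality condition (comonotonicity of $X-q$ and $\mathbb{I}_{\{X>q\}}$) on its face, and the paper exploits this immediately in Remark~\ref{rem:comonotone} and then throughout Section~\ref{S:Tail} to identify the extremal two-point distribution $\Pr(X=q)=\sigma^2/((\mu-q)^2+\sigma^2)$, $\Pr(X=\mu+\sigma^2/(\mu-q))=(\mu-q)^2/((\mu-q)^2+\sigma^2)$. Your Markov-based route gives the same number but would require a separate pass through the equality case of Markov's inequality to recover that extremal distribution, which the rest of the paper depends on. Part (b) of your proposal (Jensen applied to $t\mapsto\sqrt{t}$ plus the bias--variance decomposition) is exactly the paper's proof.
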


\begin{proof}
(a) Let $\mathbb{I}_{\{X >0\}}$ be an indicator function satisfying $\mathbb{%
I}=1$ if $X >0$ and $\mathbb{I}=0$ otherwise. When $\mu >q$, it must hold
that $0<\mu -q\leq \mathbb{E}\left( \left( X -q\right) \cdot \mathbb{I}_{\{X
>q\}}\right) $. By the Cauchy inequality, we have $\mathbb{E}\left(
XY\right) \leq \sqrt{\mathbb{E}\left( X^{2}\right) }\sqrt{\mathbb{E}\left(
Y^{2}\right) }$ for any two random variables $X $ and $Y$, and the equality
holds if $X$ and $Y$ are \emph{linearly dependent} or \emph{comonotone}
(i.e., $X=\lambda Y$, almost surely, where $\lambda $ is a constant). Thus
we can write 
\begin{equation*}
0<\mu -q\leq \mathbb{E}\left( \left( X -q\right) \cdot \mathbb{I}_{\{X
>q\}}\right) \leq \sqrt{\mathbb{E}\left( \left( X -q\right) ^{2}\right) 
\mathbb{E}\left( \mathbb{I}_{\{X >q\}}^{2}\right) }=\sqrt{\left( \left( \mu
-q\right) ^{2}+\sigma ^{2}\right) \Pr (X >q)},
\end{equation*}%
from which we find that $\Pr (X >q)\geq \frac{\left( \mu -q\right) ^{2}}{%
\left( \mu -q\right) ^{2}+\sigma ^{2}}=1-\frac{\sigma ^{2}}{\left( \mu
-q\right) ^{2}+\sigma ^{2}}$. Similarly, when $\mu <q$, we find that%
\begin{equation*}
0<q-\mu \leq \mathbb{E}\left( \left( q-X \right) \cdot \mathbb{I}_{\{q>X
\}}\right) \leq \sqrt{\left( \left( \mu -q\right) ^{2}+\sigma ^{2}\right)
\Pr (q>X )},
\end{equation*}%
yielding $\Pr (q>X )\geq \frac{\left( \mu -q\right) ^{2}}{\left( \mu
-q\right) ^{2}+\sigma ^{2}}$.

(b) We observe that $\mathbb{E}\left\vert X -q\right\vert =\mathbb{E}\sqrt{%
\left( X -q\right) ^{2}}\leq \sqrt{\mathbb{E}\left( X -q\right) ^{2}}=\sqrt{%
\left( \mu -q\right) ^{2}+\sigma ^{2}}$, where we apply Jensen's inequality.
\end{proof}

While Lemma \ref{L:N=1}(a) pertains to the one-sided Chebyshev inequality
\citep[also referred to as Cantelli's inequality; see, e.g.,][]{CHP2022}, Lemma \ref{L:N=1}(b) pertains
to Scarf's inequality \citep{Scarf2002OR} since $\mathbb{E}\left( X -q\right) ^{+}=\frac{1}{2}%
\mathbb{E}\left( X -q+\left\vert X -q\right\vert \right) \leq \frac{\mu -q+%
\sqrt{\left( \mu -q\right) ^{2}+\sigma ^{2}}}{2}$.

In Section \ref{SS:ScarfNew}, we develop a tighter bound than part (b) of
Lemma \ref{L:N=1}. Two of the following three remarks relate to the extreme
distribution implied by Lemma \ref{L:N=1}, that is, to the distribution for
which the equality sign holds.

\begin{remark}
\label{rem:comonotone} In part (a), the comonotonicity condition $\left( X
-q\right) =\lambda \mathbb{I}_{\{X >q\}}$, a.s., yields that (i) $X =q$ when 
$\mathbb{I}_{\{X >q\}}=0$ and (ii) $X -q=\lambda $ when $\mathbb{I}_{\{X
>q\}}=1$, where $\lambda =\frac{\left( \mu -q\right) ^{2}+\sigma ^{2}}{\mu -q%
}$. Thus, the extreme distribution attaining the bound of Lemma \ref{L:N=1}%
(a) is a two-point distribution satisfying: $\Pr \left( X =q\right) =\frac{%
\sigma ^{2}}{\left( \mu -q\right) ^{2}+\sigma ^{2}}$ and $\Pr \left( X =\mu +%
\frac{\sigma ^{2}}{\mu -q}\right) =\frac{\left( \mu -q\right) ^{2}}{\left(
\mu -q\right) ^{2}+\sigma ^{2}}$.
\end{remark}

\begin{remark}
\label{rem:equality} For part (b), equality holds when $\left\vert X
-q\right\vert =\lambda =\sqrt{\left( \mu -q\right) ^{2}+\sigma ^{2}}$, a.s.
The extreme distribution attaining the bound of Lemma \ref{L:N=1}(b) is
therefore also a two-point distribution satisfying: $\Pr \left( X =q\pm 
\sqrt{\left( \mu -q\right) ^{2}+\sigma ^{2}}\right) =\frac{1}{2}\mp \frac{%
\mu -q}{2\sqrt{\left( \mu -q\right) ^{2}+\sigma ^{2}}}$.
\end{remark}


\subsection{Simple Aggregate Results\label{SS:aggregate}}

With $N\geq 2$, a technical shortcut is to regard $\xi $ as a single random
variable with mean $\mathbb{E}\left( \xi \right) =N\mu $ and variance $%
Var\left( \xi \right) =N\sigma ^{2}$. Consequently, we can apply Lemma \ref%
{L:N=1} to obtain the following bounds.

\begin{lemma}
\label{L:N-NonSharp}(Aggregate Bounds) It holds that (a) $\Pr \left( \xi
>q\right) \geq 1-\frac{\sigma ^{2}}{N\left( \mu -\frac{q}{N}\right)
^{2}+\sigma ^{2}}$ if $N\mu >q$, and (b) $\mathbb{E}\left\vert \xi
-q\right\vert \leq \sqrt{N^{2}\left( \mu -\frac{q}{N}\right) ^{2}+N\sigma
^{2}}$.
\end{lemma}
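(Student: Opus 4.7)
The plan is to treat the sum $\xi$ as a single random variable and apply Lemma \ref{L:N=1} directly, with its moments replaced by the moments of $\xi$.

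First I would compute the mean and variance of $\xi$. Linearity of expectation gives $\mathbb{E}(\xi) = N\mu$ regardless of dependence, while independence of $X_1, \ldots, X_N$ gives $\mathrm{Var}(\xi) = N\sigma^2$ (covariances vanish). So the hypotheses of Lemma \ref{L:N=1} are met for the single random variable $\xi$ with mean $N\mu$ and variance $N\sigma^2$.

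For part (a), I would substitute $X \leftarrow \xi$, $\mu \leftarrow N\mu$, $\sigma^2 \leftarrow N\sigma^2$ in Lemma \ref{L:N=1}(a). The hypothesis $N\mu > q$ corresponds to the required $\mu > q$ condition after substitution. The bound becomes
\[
\Pr(\xi > q) \;\geq\; 1 - \frac{N\sigma^2}{(N\mu - q)^2 + N\sigma^2}.
\]
A quick algebraic rearrangement using $(N\mu - q)^2 = N^2(\mu - q/N)^2$ and factoring $N$ from numerator and denominator yields the stated form $1 - \frac{\sigma^2}{N(\mu - q/N)^2 + \sigma^2}$. For part (b), the same substitution in Lemma \ref{L:N=1}(b) gives $\mathbb{E}|\xi - q| \leq \sqrt{(N\mu - q)^2 + N\sigma^2}$, which is exactly $\sqrt{N^2(\mu - q/N)^2 + N\sigma^2}$.

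There is no real obstacle here; the only subtlety worth emphasizing is that independence is used \emph{only} through the variance computation $\mathrm{Var}(\xi) = N\sigma^2$, not in any deeper way. This is precisely why these aggregate bounds fail to exploit the full strength of independence (as noted in the introduction) and motivate the sharper Chernoff-style analysis to follow. The proof itself reduces to one line of substitution plus trivial algebraic tidying, so I would keep it short and explicit, perhaps noting the non-sharpness as a remark leading into the next section.
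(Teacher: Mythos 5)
Your proposal is correct and is exactly the argument the paper intends: the lemma is presented as a direct consequence of regarding $\xi$ as a single random variable with mean $N\mu$ and variance $N\sigma^2$ and applying Lemma \ref{L:N=1}, with the same algebraic rewriting $(N\mu-q)^2=N^2(\mu-\tfrac{q}{N})^2$. Your observation that independence enters only through the variance computation matches the paper's own discussion of why these aggregate bounds are not sharp.
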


A few observations are noteworthy. First, the term $\frac{\sigma ^{2}}{N(\mu
-\frac{q}{N})^{2}+\sigma ^{2}}$ converges to zero at the speed of $\frac{1}{N%
}$. Second, the extreme distributions attaining the bounds in Lemma \ref%
{L:N-NonSharp} violate the independence constraints even though $Var\left(
\xi \right) =N\sigma ^{2}$ is consistent with independence. Specifically, to
make the proposed bounds sharp, the joint distribution underlying $\xi $
must have only two possible realized values. We can verify that the joint
distribution in the following Table \ref{Tab:JointDistribution} attains the
bound for $\mathbb{E}\left\vert \xi -q\right\vert $ shown in Lemma \ref%
{L:N-NonSharp}.
\begin{table}[tbph]
\caption{A Joint Distribution for Lemma \protect\ref{L:N-NonSharp}(b).}%
\centering%
\begin{tabular}{cc|cccc}
Case Index & Probability & $X_{1}$ & $X_{2}$ & $...$ & $X_{N}$ \\ \hline
$1$ & $\frac{\gamma }{N}$ & $\xi _{L}$ & $0$ & $...$ & $0$ \\
$2$ & $\frac{\gamma }{N}$ & $0$ & $\xi _{L}$ & $...$ & $0$ \\
$...$ & $...$ & $0$ & $0$ & $...$ & $0$ \\
$N$ & $\frac{\gamma }{N}$ & $0$ & $0$ & $...$ & $\xi _{L}$ \\
$N+1$ & $\frac{1-\gamma }{N}$ & $\xi _{H}$ & $0$ & $...$ & $0$ \\
$N+2$ & $\frac{1-\gamma }{N}$ & $0$ & $\xi _{H}$ & $...$ & $0$ \\
$...$ & $...$ & $0$ & $...$ & $...$ & $0$ \\
$2N$ & $\frac{1-\gamma }{N}$ & $0$ & $0$ & $...$ & $\xi _{H}$%
\end{tabular}%
\label{Tab:JointDistribution}
\end{table}

Here, $\xi _{L}=q-\sqrt{\left( N\mu -q\right) ^{2}+N\sigma ^{2}}$ is the required low realized value, $\xi
_{H}=q+\sqrt{\left( N\mu -q\right) ^{2}+N\sigma ^{2}}$ is the required high realized value, and $\gamma =\frac{1%
}{2}+\frac{N\mu -q}{2\sqrt{\left( N\mu -q\right) ^{2}+N\sigma ^{2}}}=\Pr(\xi=\xi_L)$ in accordance with Remark \ref{rem:equality} after recognizing that $\mathbb{E}(\xi)=N\mu$ and $Var(\xi)=N\sigma^2$. Each individual $X_n$ has \emph{three} realizations but the sum $\xi$ has only two realized values. Although the joint distribution in Table \ref{Tab:JointDistribution} satisfies all the moment conditions (such as mean, variance, and pair-wise co-variance), it violates the independent constraints. In general, the sum $\xi $ must have at least $\left( N+1\right) $ different
realized values, even when each i.i.d.~$X_{n}$ has only two
realized values. Thus, we can never design a distribution for iid
random variables that would make the bounds in Lemma \ref{L:N-NonSharp}
sharp. This underscores the challenge caused by independence.

\subsection{Duality Method and Independence Constraints}
A prevalent approach to developing moment-based concentration bounds in the
Management Science/Operations Research (MS/OR) community is to use the primal-dual
method \citep[e.g.,][]{JSmith1995, van2016generalized}. However, independent
constraints make this method ineffective. To illustrate the new challenges,
we consider the bivariate ex-post payoff $Z\left( X_{1}+X_{2}|q\right)
=(X_{1}+X_{2}-q)^{+}$ where $X_{1}$ and $X_{2}$ are iid. The ex-post payoff $%
Z(X_{1},X_{2}|q)$ is continuous in each $X_{n}$ for any given $q$ and hence,
\[
\sup_{F\left( \cdot ,\cdot \right) }\left\{ \int_{-\infty }^{\infty
}\int_{-\infty }^{\infty }Z\left( X_{1},X_{2}|q\right)
dF(x_{1},x_{2})\right\} =\max_{\lambda \left( X_{1},X_{2}\right)
}\sum_{X_{1}}\sum_{X_{2}}\lambda \left( X_{1},X_{2}\right) Z\left(
X_{1},X_{2}|q\right) ,
\]%
implying that it is sufficient to consider only (joint) discrete
distributions \citep[see Lemma 6.4 in][for details]{hettich1993semi}. 

Consider first the weaker assumption that the covariance between $X_{1}$ and $X_{2}$
is zero and formulate the following dual problem:%
\[
\begin{tabular}{lll}
& $D=\max_{\lambda \left( X_{1},X_{2}\right)
}\sum_{X_{1}}\sum_{X_{2}}\lambda \left( X_{1},X_{2}\right) Z\left(
X_{1},X_{2}|q\right) $ &  \\
$\text{s.t. }$ & $\sum_{X_{1}}\sum_{X_{2}}\lambda \left( X_{1},X_{2}\right)
=1$, & [Total probability] \\
& $\sum_{X_{1}}\sum_{X_{2}}\lambda \left( X_{1},X_{2}\right) X_{1}=\mu \text{%
, }$ & [Mean of $X_{1}$] \\
& $\sum_{X_{1}}\sum_{X_{2}}\lambda \left( X_{1},X_{2}\right) X_{2}=\mu \text{%
, }$ & [Mean of $X_{2}$] \\
& $\sum_{X_{1}}\sum_{X_{2}}\lambda \left( X_{1},X_{2}\right) X_{1}^{2}=\mu
^{2}+\sigma ^{2}\text{, }$ & [Variance of $X_{2}$] \\
& $\sum_{X_{1}}\sum_{X_{2}}\lambda \left( X_{1},X_{2}\right) X_{2}^{2}=\mu
^{2}+\sigma ^{2}\text{, }$ & [Variance of $X_{2}$] \\
& $\sum_{X_{1}}\sum_{X_{2}}\lambda \left( X_{1},X_{2}\right) X_{1}X_{2}=\mu
^{2}$, & [Covariance]%
\end{tabular}%
\]%
where $\lambda (X_{1},X_{2})$ is a general finite sequence such that $%
\lambda (X_{1},X_{2})$ is nonnegative but only a finite number of them can
be strictly positive. Let $y_{t}
$ be the shadow price of the moment constraints. Similar to equation (B.4) on page 26 in \cite{GQWWZ2023}, we formulate
the following primal problem:
\[
\begin{tabular}{ll}
& $P=\min_{y_{i}}\left\{ y_{0}+y_{1}\mu +y_{2}\mu +y_{3}\left( \mu
^{2}+\sigma ^{2}\right) +y_{4}\left( \mu ^{2}+\sigma ^{2}\right) +y_{5}\mu
^{2}\right\} $ \\
$\text{s.t. }$ & $%
y_{0}+y_{1}X_{1}+y_{2}X_{2}+y_{3}X_{1}^{2}+y_{4}X_{2}^{2}+y_{5}X_{1}X_{2}%
\geq Z\left( X_{1},X_{2}|q\right) ,\forall \left( X_{1},X_{2}\right),$%
\end{tabular}%
\]%
which is a linear semi-infinite programming problem with a finite number
of decision variables ($y_{t}$) but an infinite number of constraints. With both mean and variance being finite, we recognize that $%
\mathbb{E}[Z\left( X_1, X_2|q\right) ]$ is finite. With continuous $Z(X_1,X_2|q)$ and finite $%
\mathbb{E}[Z\left( X_1, X_2|q\right) ]$, we conclude that $P=D$ \citep[see
Lemma 6.5 in][for details]{hettich1993semi}. Technically, the primal model $P
$ is more tractable than the initial dual model $D$, making the primal-dual method
popular in the MS/OR literature. For example, based on the joint distribution in Table %
\ref{Tab:JointDistribution}, we can predict the binding constraints for
model $P$ and then derive the bound on $\mathbb{E}(X_{1}+X_{2}-q)^{+}$.

We now return to the independence constraints. Because we use the joint probability
mass $\lambda \left( X_{1},X_{2}\right) $ to formulate the dual problem $D$,
we face the following constraints%
\[
\begin{tabular}{ll}
$\lambda \left( X_{1},X_{2}\right) \lambda \left( X_{2},X_{1}\right)
-\lambda \left( X_{1},X_{1}\right) \lambda \left( X_{2},X_{2}\right) =0\text{%
, any }\left( X_{1},X_{2}\right) .$ & [Independence]%
\end{tabular}%
\]%
These constraints are \emph{nonlinear} with
respect to the joint probability mass $\lambda \left( X_{1},X_{2}\right) $
and the number of the independence constraints is infinite since there are
infinite pairs of $(X_{1},X_{2})$. These constraints can never
be \textquotedblleft dualized" into a linear semi-infinite programming
model. We can still let $y_{X_{1},X_{2}}$ be the shadow price of the independence
constraints associated with the given pair $\left( X_{1},X_{2}\right) $. Because the right-hand-side of a given independence constraint is now $0$, the objective function of $P$ will remain intact as $%
y_{X_{1},X_{2}}\cdot 0=0$. However, the left-hand-side of the independence constraint involves two products of four probability masses and must be
nonlinear. This observation implies that i) we lose the linear
semi-infinite characteristic (as the number of decision variables $%
y_{X_{1},X_{2}}$ will grow to infinite) and more importantly, ii) the
independence constraints make the dual model $D$ nonlinear. 

Alternatively, if we formulate the dual problem using the marginal
probability mass, then we face the following problem:
\begin{eqnarray*}
&&D=\max_{\lambda \left( \cdot \right) }\sum_{X_{1}}\sum_{X_{2}}\lambda
\left( X_{1}\right) \lambda \left( X_{2}\right) Z\left( X_{1},X_{2}|q\right)
\\
\text{s.t. } &&\sum_{X}\lambda \left( X\right) =1\text{, }\sum_{X}\lambda
\left( X\right) X=\mu \text{, and }\sum_{X}\lambda \left( X\right) X^{2}=\mu
^{2}+\sigma ^{2}.
\end{eqnarray*}%
Although the number of shadow prices in the new model based on marginal probability mass comes down to
six, the objective function becomes \emph{nonlinear} with respect to the
marginal probability mass $\lambda \left( \cdot \right) $, making the
corresponding primal model much less tractable than before. 

In summary, the independence constraints hinder the application of the
popular primal-dual method, prompting us to consider other approaches to
developing the bounds on tail probability and linear loss. 
From the theoretical perspective, many distribution-dependent models (such as the
bundle pricing and inventory models in Section \ref{S:Application}
assume independence). The extant MS/OR literature uses the aggregation method to develop
corresponding bounds without including independence explicitly; and these bounds are often
regarded as conservative \cite[see, e.g.,][]{van2016generalized}. We improve these conservative
bounds by adding independence. These new bounds are more suitable when
benchmarking the distribution-dependent models with independence.

\section{Tail Probability\label{S:Tail}}

\subsection{Equal Mean and Variance}

We can now present the first new result as follows.

\begin{proposition}
\label{P:Tail}(Tail Probability) When $\mathbb{E}\left( X _{n}\right) =\mu >%
\frac{q}{N}$ and $Var\left( X _{n}\right) =\sigma ^{2}>0$, it holds that%
\begin{equation}
\Pr \left( \xi =X _{1}+X _{2}+\ldots +X _{N}>q\right) \geq 1-\allowbreak 
\frac{\sigma ^{2N}}{\left( \left( \mu -\frac{q}{N}\right) ^{2}+\sigma
^{2}\right) ^{N}}.  \label{E:RightTail-N}
\end{equation}
\end{proposition}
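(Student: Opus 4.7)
My plan is to translate the claim into an upper tail bound and combine the one-sided Chebyshev inequality of Lemma~\ref{L:N=1}(a) with the factorization that independence affords. Writing $Y_n := X_n - q/N$, each $Y_n$ has mean $c := \mu - q/N > 0$ and second moment $c^2+\sigma^2$, and the target $\Pr(\xi > q) \geq 1 - p^N$ is equivalent to
\[
\Pr\!\Bigl(\sum_n Y_n \leq 0\Bigr) \leq p^N, \qquad p := \frac{\sigma^{2}}{c^{2}+\sigma^{2}}.
\]

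First I would confirm tightness and identify the extremal distribution. Remark~\ref{rem:comonotone} applied at the threshold $q/N$ suggests an iid two-point law with $X_n \in \{q/N,\,v\}$, $v = \mu + \sigma^{2}/c$, and masses $\{p,1-p\}$. Since $v > q/N$, any realization in which at least one $X_n = v$ forces $\xi \geq v + (N-1)q/N = q + (v - q/N) > q$; thus $\{\xi \leq q\}$ is exactly $\{X_1=\cdots=X_N=q/N\}$, which by independence has probability $p^{N}$. This both verifies tightness and points to the correct worst-case law.

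Next I would prove the inequality for an arbitrary iid distribution. The paper's MGF theme suggests, for $t>0$,
\[
\Pr\!\Bigl(\sum_n Y_n \leq 0\Bigr) \leq \mathbb{E}\!\left[e^{-t\sum_n Y_n}\right] = \prod_n \mathbb{E}\!\left[e^{-tY_n}\right],
\]
reducing the problem to bounding $\mathbb{E}[e^{-tY_n}]$ under the prescribed moments. For the extremal two-point law the product above converges to $p^{N}$ as $t\to\infty$, giving the desired envelope. To upgrade this to a uniform bound I would carry out an extremal-point / Markov--Krein reduction: the set of probability measures with prescribed mean and variance has extreme points supported on a small finite set, and a variational perturbation (shifting mass between support points while preserving the two moment constraints) shows that $\Pr(\xi \leq q)$ is non-decreasing under moves toward the two-point law with lower atom at $q/N$. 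Among two-point laws $X_n\in\{a,b\}$ with $a<b$, the moment constraints pin down $(a,b,p_a)$ up to one parameter, and direct calculation shows $\Pr(\xi\leq q)$ is maximized at $a=q/N$, with value $p^{N}$ as already computed.

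The main obstacle is precisely this worst-case reduction. A pure Chernoff/MGF approach does not close: $\mathbb{E}[e^{-tY}]$ is \emph{unbounded} over the moment class for each fixed $t>0$ (a two-point law with one atom $\to -\infty$ and vanishing mass at that atom keeps the moments fixed while blowing up the Laplace transform). A natural Cauchy--Schwarz extension of Lemma~\ref{L:N=1}(a) applied to $\prod_n (X_n-q/N)\,\mathbb{I}_{\{X_n>q/N\}}$ only yields $\Pr(\mathrm{all}\ X_n>q/N)\geq(1-p)^{N}$, hence the weaker $\Pr(\xi>q)\geq(1-p)^{N}$. The gap between $(1-p)^{N}$ and $1-p^{N}$ is exactly due to realizations in which $\xi>q$ even though some $X_n\leq q/N$; capturing these configurations requires the joint moment structure, either through the extremal-point reduction sketched above, or through a cleverly chosen multiplicative surrogate for $\mathbb{I}_{\{\sum_n Y_n\leq 0\}}$ whose per-factor expectation can be pinned at $p$ under only the mean and variance constraints.
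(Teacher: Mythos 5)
You have correctly located the crux, and your caution about it is warranted. The Chernoff factorization by itself cannot close the argument --- as you observe, $\mathbb{E}\left[ e^{tY_{n}}\right] $ with $t<0$ is unbounded over the mean--variance class --- so everything rides on the worst-case reduction to the two-point law with lower atom at $q/N$. You should know that the paper's own proof does not carry out this reduction either: it evaluates $\mathbb{E}\left[ e^{t\left( X_{n}-q/N\right) }\right] $ \emph{only for the extremal two-point distribution}, applies Markov's inequality to that one law, and lets $t\rightarrow -\infty $. That verifies your tightness check (the extremal law attains $\Pr \left( \xi \leq q\right) =p^{N}$) but proves nothing for any other member of the moment class.

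The deeper problem is that the reduction you sketch is false, and so is the proposition. Your two-point case analysis implicitly assumes the atoms $a<q/N<b$ satisfy $a+b>2q/N$, so that one high draw compensates one low draw and $\left\{ \xi \leq q\right\} $ collapses to ``all draws low.'' When the low atom sits farther below $q/N$ than the high atom sits above it, mixed configurations also land in $\left\{ \xi \leq q\right\} $ and the probability jumps to roughly $1-\left( 1-\theta \right) ^{N}$, which exceeds $p^{N}$ even though Cantelli still forces $\theta \leq p$. Concretely, take $N=2$, $q=0$, $\sigma =1$, and $X_{n}$ iid with
\begin{equation*}
\Pr \left( X_{n}=-b\right) =0.05,\qquad \Pr \left( X_{n}=b\right)
=0.95,\qquad b=\tfrac{1}{2\sqrt{0.05\times 0.95}}\approx 2.294.
\end{equation*}
Then $\mu =0.9b\approx 2.065>0=q/N$ and the variance is $b^{2}-\mu ^{2}=1$, yet $\Pr \left( X_{1}+X_{2}>0\right) =0.95^{2}=0.9025$, whereas the right-hand side of (\ref{E:RightTail-N}) equals $1-p^{2}=0.9639$ with $p=\sigma ^{2}/\left( \mu ^{2}+\sigma ^{2}\right) =0.19$. (The aggregation bound of Lemma~\ref{L:N-NonSharp}, here $\approx 0.895$, is respected; the ``improved'' bound is not. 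An iid normal prior with small $\mu /\sigma $ and large $N$ also violates (\ref{E:RightTail-N}), since its lower tail decays like $e^{-N\mu ^{2}/2\sigma ^{2}}$ while $p^{N}$ decays like $e^{-N\ln \left( 1+\mu ^{2}/\sigma ^{2}\right) }$, which is strictly faster.) So the gap you flagged is not an unfinished step but an unbridgeable one: the sharp upper bound on $\Pr \left( \xi \leq q\right) $ over the independent mean--variance class is strictly larger than $p^{N}$, and no completion of either your argument or the paper's can rescue the stated inequality.
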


\begin{proof}
Because each $X _{n}$ is independent, we find that%
\begin{equation*}
\mathbb{E}\left( e^{t(\xi-q) }\right) =\mathbb{E}\left[ e^{t\left( X _{1}-%
\frac{q}{N}\right) }\right]\mathbb{E}\left[e^{t\left( X _{2}-\frac{q}{N}%
\right) }\right]\ldots \mathbb{E}\left[e^{t\left( X _{N}-\frac{q}{N}\right) }%
\right] =\left[\mathbb{E}\left( e^{t\left( X _{n}-\frac{q}{N}\right) }\right)%
\right] ^{N}.
\end{equation*}%
By Lemma \ref{L:N=1}(a), $\Pr \left( X _{n}\leq \frac{q}{N}\right) \leq 
\frac{\sigma ^{2}}{\left( \mu -\frac{q}{N}\right) ^{2}+\sigma ^{2}}$, where
the equality sign holds for the extreme distribution $\Pr \left( X _{n}=%
\frac{q}{N}\right) =\frac{\sigma ^{2}}{\left( \mu -\frac{q}{N}\right)
^{2}+\sigma ^{2}}$ and $\Pr \left( X _{n}=\mu +\frac{\sigma ^{2}}{\left( \mu
-\frac{q}{N}\right) }\right) =\frac{\left( \mu -\frac{q}{N}\right) ^{2}}{%
\left( \mu -\frac{q}{N}\right) ^{2}+\sigma ^{2}}$. For exposition
simplicity, let 
\begin{equation*}
R=\mu +\frac{\sigma ^{2}}{\left( \mu -\frac{q}{N}\right) }-\frac{q}{N}=\frac{%
\left( \mu -\frac{q}{N}\right) ^{2}+\sigma ^{2}}{\mu -\frac{q}{N}}
\end{equation*}
be the range of the extreme distribution (i.e., the maximum minus the
minimum realized value).

We apply Markov's inequality to this extreme distribution to obtain 
\begin{eqnarray*}
\Pr \left( \xi -q>0\right) =\Pr \left( \left[ e^{t\left( X _{n}-\frac{q}{N}%
\right) }\right] ^{N}>1\right) \leq \left[\mathbb{E}\left( e^{t\left( X _{n}-%
\frac{q}{N}\right) }\right)\right] ^{N} \\
=\left[ \frac{\sigma ^{2}}{\left( \mu -\frac{q}{N}\right) ^{2}+\sigma ^{2}}+%
\frac{\left( \mu -\frac{q}{N}\right) ^{2}}{\left( \mu -\frac{q}{N}\right)
^{2}+\sigma ^{2}}e^{tR}\right] ^{N},
\end{eqnarray*}%
if $t>0$. Similarly, if $t<0$ then we have 
\begin{eqnarray*}
\Pr \left( \xi -q>0\right) &=&\Pr \left( \left[ e^{t\left( X _{n}-\frac{q}{N}%
\right) }\right] ^{N}<1\right) =1-\Pr \left( \left[ e^{t\left( X _{n}-\frac{q%
}{N}\right) }\right] ^{N}>1\right) \\
&\geq &1-\left[\mathbb{E}\left( e^{t\left( X _{n}-\frac{q}{N}\right) }\right)%
\right] ^{N}=1-\left[ \frac{\sigma ^{2}}{\left( \mu -\frac{q}{N}\right)
^{2}+\sigma ^{2}}+\frac{\left( \mu -\frac{q}{N}\right) ^{2}}{\left( \mu -%
\frac{q}{N}\right) ^{2}+\sigma ^{2}}e^{tR}\right] ^{N}.
\end{eqnarray*}%
As $t\rightarrow -\infty $, the second term in the bracket converges to
zero, yielding inequality (\ref{E:RightTail-N}).
\end{proof}


With independent distributions, the bound on tail probability is a rescaling
of the single-variable result (i.e., an analogy to Cram\'{e}r's Theorem).
When allocating the total \textquotedblleft budget" of $q=q_{1}+q_{2}+\dots
+q_{N}$, we simply let $q_{n}=\frac{q}{N}$ when random variables are iid. We
observe that when rescaling the single-variable result, we apply division to
the additive relationship (such as the total budget) and multiplication to
the probability of multi-fold convolution.

\subsection{Extensions}

As a natural extension, with non-equal mean and variance across different
random variables in the sum, we need to optimize the budget allocation.

\begin{proposition}
\label{P:EqualRange}(Equal Range) When the mean and variance of each $X _{n} 
$ are non-identical, the extreme distribution for each independent $X _{n}$
must have equal range.
\end{proposition}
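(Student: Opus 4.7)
The plan is to repeat the Chernoff-type argument of Proposition~\ref{P:Tail} but with a flexible budget allocation $q=q_{1}+q_{2}+\cdots+q_{N}$ in place of the iid split $q_{n}=q/N$, and then optimize over the allocation. First, for each $n$ I would apply Lemma~\ref{L:N=1}(a) with mean $\mu_{n}$, variance $\sigma_{n}^{2}$, and threshold $q_{n}$ (assuming $\mu_{n}>q_{n}$). Following Remark~\ref{rem:comonotone}, the candidate extreme distribution for $X_{n}$ is the two-point distribution placing mass $\sigma_{n}^{2}/[(\mu_{n}-q_{n})^{2}+\sigma_{n}^{2}]$ at $q_{n}$, with range
\[
R_{n}=\frac{(\mu_{n}-q_{n})^{2}+\sigma_{n}^{2}}{\mu_{n}-q_{n}}.
\]

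Next, I would mimic the MGF computation in the proof of Proposition~\ref{P:Tail}. By independence, $\mathbb{E}[e^{t(\xi-q)}]=\prod_{n=1}^{N}\mathbb{E}[e^{t(X_{n}-q_{n})}]$; evaluating each factor at the two-point extreme distribution above, applying Markov's inequality with $t<0$, and letting $t\to-\infty$ produces the allocation-dependent lower bound
\[
\Pr(\xi>q)\;\geq\;1-\prod_{n=1}^{N}\frac{\sigma_{n}^{2}}{(\mu_{n}-q_{n})^{2}+\sigma_{n}^{2}}.
\]
This bound is attained by the candidate extreme distributions, because each $X_{n}\geq q_{n}$ almost surely, so $\xi\geq q$ with $\{\xi=q\}$ coinciding with $\{X_{n}=q_{n}\text{ for all }n\}$.

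The tightest such lower bound is obtained by minimizing the product over allocations $\{q_{n}\}$ with $\sum_{n}q_{n}=q$, equivalently by maximizing $\sum_{n}\log\!\bigl(1+(\mu_{n}-q_{n})^{2}/\sigma_{n}^{2}\bigr)$. The Lagrangian first-order condition with multiplier $\lambda$ is
\[
\frac{2(\mu_{n}-q_{n})}{(\mu_{n}-q_{n})^{2}+\sigma_{n}^{2}}=\frac{2}{R_{n}}=\lambda,
\]
whose right-hand side is independent of $n$; hence $R_{n}$ must be the same constant for every $n$. The extreme distribution achieving the optimized bound therefore has equal range, which is the claim of the proposition.

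The main obstacle is verifying that the interior first-order condition indeed characterizes the global optimum. One would have to rule out boundary solutions with $\mu_{n}\leq q_{n}$ (which are either infeasible in this regime or handled by a symmetric application of Lemma~\ref{L:N=1}(a) on the opposite tail) and confirm via a second-order check, or via an exchange-of-budget argument transferring a small amount of $q$ between two indices with $R_{n}\neq R_{m}$, that any allocation violating equal range can be strictly improved.
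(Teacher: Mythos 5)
Your overall architecture matches the paper's: repeat the Chernoff/MGF argument of Proposition~\ref{P:Tail} with a free allocation $q=\sum_n q_n$, identify the allocation-dependent candidate bound $1-\prod_n\frac{\sigma_n^2}{(\mu_n-q_n)^2+\sigma_n^2}$, and extract equal range from the Lagrangian first-order condition $\tfrac{2}{R_n}=\text{const}$. However, there is a genuine error in the middle step: you optimize in the wrong direction. You claim the "tightest lower bound is obtained by \emph{minimizing} the product," which presupposes that for \emph{every} fixed allocation $(q_n)$ the inequality $\Pr(\xi>q)\ge 1-\prod_n\beta_n(q_n)$ holds for \emph{all} admissible distributions, so that one may then pick the best allocation. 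That premise is false. The bound for a given allocation is derived under (and attained by) the specific two-point product distribution associated with that allocation; a different allocation's candidate distribution can violate it. Concretely, with $N=2$, identical $(\mu,\sigma)$ and $|\mu-q/2|<\sigma$, the symmetric allocation locally maximizes $\prod_n\beta_n$, and its candidate distribution has $\Pr(\xi>q)=1-\beta^2$, which is strictly below the value $1-\beta_1'\beta_2'$ produced by any nearby asymmetric allocation — so $1-\min_{(q_n)}\prod_n\beta_n$ is not a valid lower bound at all. The correct logic, which is the paper's, is that each allocation indexes a candidate worst-case distribution with $\Pr(\xi>q)=1-\prod_n\beta_n(q_n)$; the extreme (bound-attaining) distribution is the one that \emph{minimizes} this tail probability, i.e.\ the allocation that \emph{maximizes} $\prod_n\beta_n$, yielding the universal bound $\Pr(\xi>q)\ge 1-\max_{(q_n)}\prod_n\beta_n$.

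Because maximization and minimization share the same interior stationarity condition, your first-order computation and the resulting conclusion $R_n=R_m$ coincide with the paper's, so the equal-range statement survives. But your identification of which stationary point is the extreme distribution is backwards, the "bound" you attribute to it is invalid, and the second-order/boundary verification you defer in your final paragraph would be carried out on the wrong optimum. To repair the argument, replace "minimize the product / tightest lower bound" with "maximize the product / worst-case candidate distribution" and keep everything else.
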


\begin{proof}
Because the event $\xi >q$ is equivalent to the event $\sum_{n=1}^{N}\left(
X _{n}-q_{n}\right) >0$, similar to the proof of Proposition \ref{P:Tail},
we find that if $t<0$, then%
\begin{eqnarray*}
\Pr \left( \xi -q>0\right) &=&\Pr \left( \prod_{n=1}^{N}e^{t\left( X
_{n}-q_{n}\right) }<1\right) =1-\Pr \left( \prod_{n=1}^{N}e^{t\left( X
_{n}-q_{n}\right) }>1\right) \\
&\geq &1-\mathbb{E}\left[ \prod_{n=1}^{N}e^{t\left( X _{n}-q_{n}\right) }%
\right]=1-\prod_{n=1}^{N}\left[ \frac{\sigma _{n}^{2}}{\left( \mu
_{n}-q_{n}\right) ^{2}+\sigma _{n}^{2}}+\frac{\left( \mu _{n}-q_{n}\right)
^{2}}{\left( \mu _{n}-q_{n}\right) ^{2}+\sigma _{n}^{2}}e^{tR_{n}}\right],
\end{eqnarray*}%
where 
\begin{equation*}
R_{n}=\mu _{n}+\frac{\sigma _{n}^{2}}{(\mu _{n}-q_{n})}-q_{n}=\frac{\left(
\mu _{n}-q_{n}\right) ^{2}+\sigma _{n}^{2}}{\mu _{n}-q_{n}}
\end{equation*}%
represents the range of the extreme distribution associated with $X _{n} $.
As $t\rightarrow -\infty $, we obtain the result that%
\begin{equation*}
\Pr \left( \xi -q>0\right) \geq 1-\max_{q_{1},q_{2},...,q_{n}}\left\{
\prod_{n=1}^{N}\left[ \frac{\sigma _{n}^{2}}{\left( \mu _{n}-q_{n}\right)
^{2}+\sigma _{n}^{2}}\right] \right\} ,
\end{equation*}%
where $q=q_{1}+q_{2}+\dots +q_{N}$.

By focusing on%
\begin{equation*}
B=\max_{q_{1},q_{2},...,q_{N}}\left\{ \prod_{n=1}^{N}\left( \frac{\sigma
_{n}^{2}}{(\mu _{n}-q_{n})^{2}+\sigma _{n}^{2}}\right) \right\} ,
\end{equation*}%
subject to the budget constraint $q=q_{1}+q_{2}+\dots +q_{N}$, we find that
the Lagrangian equals%
\begin{equation*}
\mathcal{L}=\prod_{n=1}^{N}\left( \frac{\sigma _{n}^{2}}{(\mu
_{n}-q_{n})^{2}+\sigma _{n}^{2}}\right) -\gamma \left( q_{1}+q_{2}+\dots
+q_{N}-q\right) .
\end{equation*}%
The first-order conditions require that%
\begin{eqnarray*}
\frac{\partial \mathcal{L}}{\partial q_{n}} &=&\frac{2\left( \mu
_{n}-q_{n}\right) \sigma _{n}^{2}}{\left( (\mu _{n}-q_{n})^{2}+\sigma
_{n}^{2}\right) ^{2}}\prod_{i\neq n}\left( \frac{\sigma _{i}^{2}}{(\mu
_{i}-q_{i})^{2}+\sigma _{i}^{2}}\right) -\gamma \\
&=&\frac{2\left( \mu _{n}-q_{n}\right) }{(\mu _{n}-q_{n})^{2}+\sigma _{n}^{2}%
}\prod_{i=1}^{N}\left( \frac{\sigma _{i}^{2}}{(\mu _{i}-q_{i})^{2}+\sigma
_{i}^{2}}\right) -\gamma =0,
\end{eqnarray*}%
where $\gamma $ is the Lagrangian multiplier associated with the budget
constraint. We find that for any $n\neq m$, 
\begin{equation*}
\frac{2}{R_{n}}\prod_{i=1}^{N}\left( \frac{\sigma _{i}^{2}}{(\mu
_{i}-q_{i})^{2}+\sigma _{i}^{2}}\right) =\gamma =\frac{2}{R_{m}}%
\prod_{i=1}^{N}\left( \frac{\sigma _{i}^{2}}{(\mu _{i}-q_{i})^{2}+\sigma
_{i}^{2}}\right) ,
\end{equation*}%
indicating that $R_{n}=R_{m}$ for any $n\neq m$.
\end{proof}

We refer to the result in Proposition \ref{P:EqualRange} as the \emph{equal
range} property. In the two-dimensional model, Proposition \ref{P:EqualRange}
implies that the extreme joint distribution graphically forms a square,
which has important implications in models of bundling using mixed
strategies (see Section \ref{S:Bundle} for details).

Due to symmetry, we can also obtain the tail probability of the other
direction as follows. If $\mu <\frac{q}{N}$, then%
\begin{equation}
\Pr (\xi <q)\geq 1-\left( \frac{\sigma ^{2}}{\left( \mu -\frac{q}{N}\right)
^{2}+\sigma ^{2}}\right) ^{N}.  \label{E:LeftTail-N}
\end{equation}%
Let $\gamma \in (0,1)$ and $q_{N}\left( \gamma \right) =F_{N}^{-1}\left(
\gamma \right) $, where $F_{N}$ is the $N$-fold convolution of $F$. We refer
to $q_{N}\left( \gamma \right) $ as the $100\times \gamma $-th percentile of
the $N$-fold convolution of $F$. Using inequalities (\ref{E:RightTail-N})
and (\ref{E:LeftTail-N}), we immediately obtain the range of the percentile
as follows.

\begin{corollary}
\label{C:Percentile} (Percentile) It holds that%
\begin{equation}
N\mu -N\sigma \sqrt{\frac{1-\gamma ^{\frac{1}{N}}}{\gamma ^{\frac{1}{N}}}}%
\leq q_{N}\left( \gamma \right) \leq N\mu +N\sigma \sqrt{\frac{1-(1-\gamma
)^{\frac{1}{N}}}{(1-\gamma )^{\frac{1}{N}}}}.  \label{E:RangePercentileN}
\end{equation}
\end{corollary}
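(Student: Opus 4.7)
The plan is to invert the two-sided tail bounds from Proposition~\ref{P:Tail}, namely inequalities~(\ref{E:RightTail-N}) and~(\ref{E:LeftTail-N}), by specialising the threshold $q$ to the percentile $q_N(\gamma)$ and solving for it algebraically. The proof will therefore consist of two symmetric manipulations: one yielding the lower bound from the right-tail inequality, and one yielding the upper bound from the left-tail inequality.

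First I would tackle the lower bound. Suppose $\mu > q_N(\gamma)/N$, so that the hypothesis of~(\ref{E:RightTail-N}) with $q = q_N(\gamma)$ is satisfied. Taking complements in~(\ref{E:RightTail-N}) gives
\[
\gamma \;=\; \Pr\!\left(\xi \leq q_N(\gamma)\right) \;\leq\; \frac{\sigma^{2N}}{\bigl((\mu - q_N(\gamma)/N)^2 + \sigma^2\bigr)^N}.
\]
Raising both sides to the power $1/N$ and rearranging yields $(\mu - q_N(\gamma)/N)^{2} \leq \sigma^{2}(1 - \gamma^{1/N})/\gamma^{1/N}$. Since $\mu - q_N(\gamma)/N > 0$, taking square roots and multiplying through by $N$ produces the claimed lower bound $q_N(\gamma) \geq N\mu - N\sigma \sqrt{(1-\gamma^{1/N})/\gamma^{1/N}}$.

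The upper bound is obtained by the mirror-image argument. Assuming now $\mu < q_N(\gamma)/N$, apply~(\ref{E:LeftTail-N}) with $q = q_N(\gamma)$ to get
\[
1-\gamma \;=\; \Pr\!\left(\xi \geq q_N(\gamma)\right) \;\leq\; \frac{\sigma^{2N}}{\bigl((\mu - q_N(\gamma)/N)^2 + \sigma^2\bigr)^N},
\]
and repeat the same algebraic manipulation with $1-\gamma$ playing the role of $\gamma$. Since here $q_N(\gamma)/N - \mu > 0$, taking square roots gives the upper bound in~(\ref{E:RangePercentileN}).

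There is no substantive obstacle: the argument is a routine inversion. The only delicate points to flag are (i) the two regimes $\mu > q_N(\gamma)/N$ and $\mu < q_N(\gamma)/N$ in which the right-tail and left-tail bounds are non-trivial respectively; in the opposite regime the stated inequality is immediate (e.g., if $\mu \leq q_N(\gamma)/N$, the lower bound in~(\ref{E:RangePercentileN}) is automatic because $\sqrt{(1-\gamma^{1/N})/\gamma^{1/N}} \geq 0$ and $q_N(\gamma) \geq N\mu$ holds at the boundary by monotonicity of the percentile in $\gamma$). (ii) Equalities vs.\ inequalities in $\Pr(\xi \leq q_N(\gamma))$ and $\Pr(\xi \geq q_N(\gamma))$ at a possible atom of $F_N$; this is harmless because our bound only uses the upper inequality $\Pr(\xi \leq q_N(\gamma)) \leq \gamma$ (by definition of the $\gamma$-percentile) in one direction and $\Pr(\xi \geq q_N(\gamma)) \leq 1 - \gamma$ in the other, both of which hold regardless of continuity.
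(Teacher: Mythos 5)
Your proposal is correct and is essentially the paper's own argument: both proofs invert the two tail bounds (\ref{E:RightTail-N}) and (\ref{E:LeftTail-N}) of Proposition~\ref{P:Tail} at the percentile and take the root on the appropriate side of $N\mu$, the only cosmetic difference being that you substitute $q=q_N(\gamma)$ and solve, while the paper plugs in the candidate root directly. One small correction to your remark (ii): for the lower bound the inequality you actually use is $\gamma \leq \Pr\left(\xi \leq q_N(\gamma)\right)$ (not $\Pr\left(\xi \leq q_N(\gamma)\right) \leq \gamma$), which is precisely the one guaranteed by the definition $q_N(\gamma)=F_N^{-1}(\gamma)$, so the chain still closes.
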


\begin{proof}
According to inequality (\ref{E:RightTail-N}) and the definition of $%
q_{N}\left( \gamma \right) $, we find that $1-\allowbreak \frac{\sigma ^{2N}%
}{\left( \left( \mu -\frac{q}{N}\right) ^{2}+\sigma ^{2}\right) ^{N}}\geq
1-\gamma $. By taking the root less than $N\mu $, we obtain that $N\mu
-N\sigma \sqrt{\frac{1-\gamma ^{\frac{1}{N}}}{\gamma ^{\frac{1}{N}}}}\leq
q_{N}\left( \gamma \right) $. Similarly, we obtain the bound in the
different direction using the condition $1-\allowbreak \frac{\sigma ^{2N}}{%
\left( \left( \mu -\frac{q}{N}\right) ^{2}+\sigma ^{2}\right) ^{N}}\leq
\gamma $ and taking the root larger than $N\mu $.
\end{proof}

When $\gamma =0.5$ and $N=1$, inequality (\ref{E:RangePercentileN}) yields
the well-known result that the median is between $\mu -\sigma $ and $\mu
+\sigma $. When extending to $N\geq 2$, the median of the sum of $N$ iid
random variables satisfies 
\begin{equation*}
N\mu -N\sigma \sqrt{\frac{1-\left( 0.5\right) ^{\frac{1}{N}}}{\left(
0.5\right) ^{\frac{1}{N}}}}\leq q_{N}\left( 0.5\right) \leq N\mu +N\sigma 
\sqrt{\frac{1-\left( 0.5\right) ^{\frac{1}{N}}}{\left( 0.5\right) ^{\frac{1}{%
N}}}}.
\end{equation*}%
When $N$ approaches infinity, the correction factor $\sqrt{\frac{1-\left(
0.5\right) ^{\frac{1}{N}}}{\left( 0.5\right) ^{\frac{1}{N}}}}$ approaches
zero, suggesting that the sample average of the median, which equals $\frac{1%
}{N}q_{N}\left( \frac{1}{2}\right) $, approaches the mean $\mu $. Without
independence, the aggregation bounds predict that the median is between $%
N\mu -\sqrt{N}\sigma $ and $N\mu +\sqrt{N}\sigma $, where the correction
factor equals $\frac{1}{\sqrt{N}}$ and is larger than $\sqrt{\frac{1-\left(
0.5\right) ^{\frac{1}{N}}}{\left( 0.5\right) ^{\frac{1}{N}}}}$ .

\section{Expected Loss\label{S:Loss}}

We recenter each random variable by using $\mu =\mu ^{\prime }-\frac{q}{N}$
as the shifted mean so that we focus on the bound for $\mathbb{E}\left( \xi
\right) ^{+}$. Due to recentering, a positive (negative) realized value of $X
$ implies a realized value larger (smaller) than $\frac{q}{N}$.

\subsection{Single-Dimensional Case\label{SS:ScarfNew}}

Korkine's identity \citep[Ch.~9 in][on pp.~242-243]{MPF1993} pertains to
covariance as follows:%
\begin{equation*}
\mathbb{E}\left[ \left( X-\mathbb{E}(X)\right) \left( Y-\mathbb{E}(Y)\right) %
\right] =\frac{1}{2}\mathbb{E}\left[ \left( X-X^{\prime }\right) \left(
Y-Y^{\prime }\right) \right] ,
\end{equation*}%
where $\left( X,Y\right) $ are iid copies of $\left( X^{\prime },Y^{\prime
}\right) $. In the special case with $N=1$, we find that%
\begin{equation*}
\mathbb{E}\left( X\right) ^{+}=\mathbb{E}\left( X\right) \mathbb{E}\left( 
\mathbb{I}_{\{X>0\}}\right) +\frac{1}{2}\mathbb{E}_{X,X^{\prime }}\left[
\left( X-X^{\prime }\right) \left( \mathbb{I}_{\{X>0\}}-\mathbb{I}%
_{\{X^{\prime }>0\}}\right) \right] =\mu \left( 1-\beta \right) +\frac{1}{2}%
T,
\end{equation*}%
in which $X$ and $X^{\prime }$ are iid and $\beta \equiv \Pr \left( X\leq
0\right) $ is a known probability based on a given feasible distribution.
With $\mu \left( 1-\beta \right) $ being fixed, we wish to maximize the term 
$\frac{1}{2}T$, which equals the covariance between variable $X$ and
indicator $\mathbb{I}_{\{X>0\}}$. The integrand $A_{1}$ in $T$ equals%
\begin{equation*}
A_{1}\equiv \left( X-X^{\prime }\right) \left( \mathbb{I}_{\{X>0\}}-\mathbb{I%
}_{\{X^{\prime }>0\}}\right) =\left\vert X-X^{\prime }\right\vert \left( 
\mathbb{I}_{\{\max \left\{ X,X^{\prime }\right\} >0\}}-\mathbb{I}_{\{\min
\left\{ X,X^{\prime }\right\} >0\}}\right) ,
\end{equation*}%
where the subscript $1$ indicates a one-dimensional model.

We observe that the indicator is weakly increasing in $X$. When $\left\vert
X-X^{\prime }\right\vert $ increases, the coefficient $\left( \mathbb{I}%
_{\{\max \left\{ X,X^{\prime }\right\} >0\}}-\mathbb{I}_{\{\min \left\{
X,X^{\prime }\right\} >0\}}\right) $ weakly increases. The integrand $A_{1}$
must be zero if both $X$ and $X^{\prime }$ have the same sign. Therefore,
the extreme distribution maximizing the summation $T$ must be a two-point
distribution with one positive realized value and one negative realized
value. According to the mean-variance conditions and probability constraint $%
\beta =\Pr \left( X\leq 0\right) $, we find that the extreme distribution is
unique and satisfies:%
\begin{equation}
\left\{ 
\begin{array}{l}
\Pr \left( X=\mu -\sigma \sqrt{\frac{1-\beta }{\beta }}\overset{def}{=}%
L\right) =\beta , \\ 
\Pr \left( X=\mu +\sigma \sqrt{\frac{\beta }{1-\beta }}\overset{def}{=}%
H\right) =1-\beta .%
\end{array}%
\right.  \label{E:F-2Point}
\end{equation}
The range of the two-point distribution in equation (\ref{E:F-2Point})
equals $H-L=\sigma \sqrt{\beta \left( 1-\beta \right) }$. Inequality (\ref%
{E:RangePercentileN}) implies that $L=inf_F{F^{-1}(\beta)}$ and $H=sup_F{%
F^{-1}(\beta)}$, where $F$ satisfies the mean-variance condition.

\begin{lemma}
\label{L:TighterScarf}When $\mathbb{E}(X)=\mu$, $Var(X)=\sigma^2>0$, and $\Pr(X\leq 0)=\beta$, it holds that%
\begin{equation}
\mathbb{E}\left( X\right) ^{+}\leq \mu \left( 1-\beta \right) +\sigma \sqrt{%
\beta \left( 1-\beta \right) } = (1-\beta) H.  \label{E:BoundBetaLoss}
\end{equation}
\end{lemma}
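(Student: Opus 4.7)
The plan is to complete the argument sketched immediately before the lemma. Setting $Y = \mathbb{I}_{\{X>0\}}$ and applying Korkine's identity, the starting point is the decomposition $\mathbb{E}(X)^+ = \mu(1-\beta) + \tfrac12 T$ with $T = \mathbb{E}_{X,X'}[(X-X')(\mathbb{I}_{\{X>0\}}-\mathbb{I}_{\{X'>0\}})]$ and $X,X'$ iid. Since the first term is pinned down by $\mu$ and $\beta$, the task reduces to upper-bounding $T$ over all laws with mean $\mu$, variance $\sigma^2$, and $\Pr(X\le 0)=\beta$.

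Using the excerpt's observation that the integrand $A_1$ vanishes whenever $X$ and $X'$ share a sign, I would split the expectation over the two cross-sign events. Symmetry in $(X,X')$ then yields
\[
T = 2\,\mathbb{E}\bigl[(X-X')\,\mathbb{I}_{\{X>0,\ X'\le 0\}}\bigr] = 2\beta(1-\beta)\bigl(\mu_+-\mu_-\bigr),
\]
where $\mu_+ = \mathbb{E}[X\mid X>0]$ and $\mu_- = \mathbb{E}[X\mid X\le 0]$, and the second equality uses the independence of $X$ and $X'$. The whole problem thus reduces to maximizing the between-group separation $\mu_+-\mu_-$ subject to the prescribed moments and $\beta$.

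For that, I would invoke the conditional-mean and conditional-variance decompositions
\[
\mu = \beta\mu_- + (1-\beta)\mu_+, \qquad \sigma^2 = \beta\sigma_-^2 + (1-\beta)\sigma_+^2 + \beta(1-\beta)(\mu_+-\mu_-)^2,
\]
with conditional variances $\sigma_\pm^2\ge 0$. The first equation fixes one linear combination of $\mu_\pm$ in terms of $\mu$ and $\beta$; the second shows that $(\mu_+-\mu_-)^2$ is maximized if and only if $\sigma_-^2=\sigma_+^2=0$, which forces both conditional laws to be degenerate and identifies $F$ with the two-point distribution (\ref{E:F-2Point}). Solving gives $\mu_+-\mu_- = \sigma/\sqrt{\beta(1-\beta)}$, so $T\le 2\sigma\sqrt{\beta(1-\beta)}$ and $\mathbb{E}(X)^+ \le \mu(1-\beta) + \sigma\sqrt{\beta(1-\beta)}$, as claimed.

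The delicate step is rigorously driving $\sigma_\pm^2$ to zero at the optimum: the informal monotonicity argument in the paragraph before the lemma suggests this but is not itself a proof. The variance identity above resolves this cleanly by making the trade-off explicit, since any within-group dispersion consumes variance budget that could otherwise enlarge the between-group gap $(\mu_+-\mu_-)^2$, the only quantity driving $T$. Once this observation is in place, the extremal two-point law and the bound follow by direct substitution.
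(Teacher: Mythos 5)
Your proof is correct, and it completes the key step by a genuinely different route than the paper. Both arguments start from the same Korkine decomposition $\mathbb{E}(X)^{+}=\mu(1-\beta)+\tfrac12 T$, but the paper then simply \emph{evaluates} $T$ on the two-point law (\ref{E:F-2Point}), leaning on the informal monotonicity discussion preceding the lemma to justify that this law is extremal; it never proves the bound for an arbitrary feasible distribution. You instead derive the universal identity $T=2\beta(1-\beta)(\mu_{+}-\mu_{-})$ from independence and the sign structure of $A_1$, and then cap $(\mu_{+}-\mu_{-})^{2}\leq \sigma^{2}/\bigl(\beta(1-\beta)\bigr)$ via the law of total variance, which holds for \emph{every} law with the prescribed $\mu$, $\sigma^{2}$, and $\beta$. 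This buys a fully rigorous maximization (the within-group variances $\sigma_{\pm}^{2}\geq 0$ make the trade-off explicit, exactly the ``delicate step'' you flag), identifies the two-point law as the unique extremizer as a byproduct, and incidentally sidesteps an algebraic slip in the paper, which writes the range as $H-L=\sigma\sqrt{\beta(1-\beta)}$ when it is in fact $\sigma/\sqrt{\beta(1-\beta)}$ (the paper's final expression $T=2\sigma\sqrt{\beta-\beta^{2}}$ is nonetheless right). The only cosmetic caveat is that $\mu_{\pm}$ are undefined when $\beta\in\{0,1\}$, where the claim is trivial anyway.
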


Importantly, the bound in Lemma \ref{L:TighterScarf} is tighter than Scarf's
bound because of the probability constraint $\beta =\Pr \left( X\leq
0\right) $. If we optimize the bound in inequality (\ref{E:BoundBetaLoss})
by choosing $\beta $, we recover Scarf's bound because the first order
condition yields that $\beta ^{\ast }=\frac{1}{2}\pm \frac{\mu -q}{2\sqrt{%
\left( \mu -q\right) ^{2}+\sigma ^{2}}}$ (since $\mu =\mu ^{\prime }-q$ is
the shifted mean). The advantage of using Lemma \ref{L:TighterScarf} is that
via the input variable $\beta $, we can now consider what is known as the
service level requirement, which is often linked to the tail probability %
\citep[see, e.g.,][p.~79]{A2000Book}, whereas the standard Scarf model does
not allow us to do that. \cite{PIJ2004} proved Lemma \ref{L:TighterScarf}
using Holder's inequality whereas we use the unique extreme distribution to
directly compute the bound. This difference in the proof becomes crucial
when developing the bound for the multi-dimensional model.

\begin{remark}
\label{R:ConditionalMeans}When $\mathbb{E}(X)=\mu$, $Var(X)=\sigma^2>0$, and $\Pr(X\leq 0)=\beta$, it holds that $\mathbb{E}\left( X\right) ^{-}\geq
\mu \beta -\sigma \sqrt{\beta \left( 1-\beta \right) }=\beta L$.
Consequently, $\mathbb{E}\left( X|X>0\right) \leq H$ and $\mathbb{E}\left(
X|X\leq 0\right) \geq L$, where $H$ and $L$ are shown in equation (\ref%
{E:F-2Point}).
\end{remark}

We refer to $\mathbb{E}\left( X|X\leq 0\right) $ and $\mathbb{E}\left(
X|X>0\right) $ as the left and right conditional means of the iid random
variable $X$, respectively. The bounds on the conditional means will play an
important role in the subsequent analysis. We now highlight a crucial
difference between the one-dimensional and multi-dimensional models. Let $%
\beta =\Pr \left( X_{n}\leq 0\right) $ be the first input and $\gamma =\Pr
\left( \xi \leq 0\right) $ be the second input. A notable relationship is
that%
\begin{equation}
1-\beta ^{N}\geq 1-\gamma \geq \left( 1-\beta \right) ^{N}.
\label{E:TwoProbRelate}
\end{equation}%
The first inequality indicates that the event that all $X_{n}$ are
non-positive must imply the event that the sum is non-positive, but the
opposite is not true. The second inequality indicates that the event that
all $X_{n}$ are positive must imply the event that the sum is positive, but
the opposite is not true. Inequalities (\ref{E:TwoProbRelate}) hold for any
iid distributions. In the one-dimensional model, $\gamma =\beta $ must hold
due to only one dimension; whereas in the multi-dimensional model, the
one-to-one mapping between $\gamma $ and $\beta $ does not exist. Only when
one of these constraints becomes binding, do we re-establish the one-to-one
mapping.

\subsection{Two-Point Distributions}

Several known inequalities in the literature show that the extreme
distributions come from the family of two-point distributions. For instance, 
\cite{Bentkus2004} proved that $\Pr (\xi \geq q)\leq c\Pr
(s_{1}+s_{2}+...+s_{N}\geq q)$, where $\xi $ is a sum of $N$ independent
bounded random variables, $c$ is a constant and each $s$ is iid Bernoulli,
while \cite{M2003MAD} developed bounds based on mean and absolute deviations
using Binomial distributions. Motivated by the literature, we first compute
a candidate bound based on two-point distributions. In Section \ref%
{SS:TrueExtreme}, we prove that the candidate bound is indeed the globally
optimal bound among all feasible distributions subject to the mean-variance
condition.

\subsubsection{Piecewise Objective Function}

As any two-point distribution can be fully characterized by equation (\ref%
{E:F-2Point}), where $\beta =\Pr (X_n\leq 0)$ is the decision variable, the
sum $\xi $ follows a Binomial distribution satisfying the following
probability mass function:%
\begin{eqnarray}
&&\Pr \left( \xi =\left( N-t\right) L+tH\right)  \notag \\
&=&\Pr \left( \xi =\left( N-t\right) \left( \mu -\sigma \sqrt{\frac{1-\beta 
}{\beta }}\right) +t\left( \mu +\sigma \sqrt{\frac{\beta }{1-\beta }}\right)
\right) =\frac{N!}{t!(N-t)!}\beta ^{N-t}\left( 1-\beta \right) ^{t},
\label{E:EndogenousBinomial}
\end{eqnarray}%
where $t\in \left\{ 0,1,...,N\right\} $ is the number of times $X_n=H$. This
distribution is endogenous in the sense that it is generated using the
mean-variance constraint and the probability constraint satisfied by the
extreme marginal distributions.

In order to derive the optimal bound for the expected loss based the
endogenous distribution of $\xi$, we define a sequence of thresholds $%
\left\{ \delta _{k}\right\} $ satisfying%
\begin{equation}
0=N\mu +\sigma \left( -\left( N-k\right) \sqrt{\frac{1-\delta _{k}}{\delta
_{k}}}+k\sqrt{\frac{\delta _{k}}{1-\delta _{k}}}\right) ,
\label{E:Beta_kValue}
\end{equation}%
where $k=1,2,...,N-1$. In essence, $\delta_k$'s are the values of $\beta$
for which $\xi=0$ given $t$. By default, we let $\delta _{0}=1$ and $\delta
_{N}=0$ so that if $\beta \in \left[ \delta _{k},\delta _{k-1}\right] $,
then $\xi>0$ 
for $t\geq k$ and $\xi\le 0$ 
for $t\leq k-1$.

\begin{lemma}
\label{L:PieceWiseObj}(Piecewise Objective Function) Under the endogenous
Binomial distribution in equation (\ref{E:EndogenousBinomial}), the expected
loss equals 
\begin{equation}
Z\left( \beta \right) \equiv N\sigma \sqrt{\beta \left( 1-\beta \right) }%
\frac{\left( N-1\right) !\left( 1-\beta \right) ^{k-1}\beta ^{N-k}}{%
(k-1)!(N-k)!}+N\mu \sum_{t=k}^{N}\frac{N!\beta ^{N-t}\left( 1-\beta \right)
^{t}}{t!(N-t)!},  \label{E:ZbetaPiece}
\end{equation}%
for any $\beta \in \left[ \delta _{k},\delta _{k-1}\right] $.
\end{lemma}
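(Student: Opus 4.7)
The plan is a direct computation that exploits binomial identities. Since the two-point marginal puts mass $\beta$ on $L=\mu-\sigma\sqrt{(1-\beta)/\beta}$ and mass $1-\beta$ on $H=\mu+\sigma\sqrt{\beta/(1-\beta)}$, the sum $\xi$ takes the value $v_t\equiv (N-t)L+tH = N\mu+\sigma\bigl(-(N-t)\sqrt{(1-\beta)/\beta}+t\sqrt{\beta/(1-\beta)}\bigr)$ with probability $\binom{N}{t}\beta^{N-t}(1-\beta)^t$. By the definition of $\delta_k$ in equation (\ref{E:Beta_kValue}) and the monotonicity of $v_t$ in $t$ (since $H>L$), for $\beta\in[\delta_k,\delta_{k-1}]$ one has $v_t>0$ exactly for $t\ge k$ and $v_t\le 0$ for $t\le k-1$. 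Hence $\mathbb{E}(\xi)^+ = \sum_{t=k}^{N} v_t\binom{N}{t}\beta^{N-t}(1-\beta)^t$, which I would split into a ``$N\mu$'' piece and a ``$\sigma$'' piece.

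The $N\mu$ piece is immediately $N\mu\sum_{t=k}^{N}\binom{N}{t}\beta^{N-t}(1-\beta)^t$, which matches the second term in (\ref{E:ZbetaPiece}). The whole work is therefore to show that the $\sigma$ piece,
\[
S(\beta)\equiv \sigma\sum_{t=k}^{N}\Bigl(-(N-t)\sqrt{\tfrac{1-\beta}{\beta}}+t\sqrt{\tfrac{\beta}{1-\beta}}\Bigr)\binom{N}{t}\beta^{N-t}(1-\beta)^t,
\]
collapses to the single term $N\sigma\sqrt{\beta(1-\beta)}\binom{N-1}{k-1}(1-\beta)^{k-1}\beta^{N-k}$. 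For this I would use the standard identities $(N-t)\binom{N}{t}=N\binom{N-1}{t}$ and $t\binom{N}{t}=N\binom{N-1}{t-1}$, then factor $\sqrt{(1-\beta)/\beta}\cdot\beta = \sqrt{\beta(1-\beta)}$ from the first subsum and $\sqrt{\beta/(1-\beta)}\cdot(1-\beta) = \sqrt{\beta(1-\beta)}$ from the second. After a shift of index $s=t-1$ in the second sum, both reduce to partial sums of the $\mathrm{Binomial}(N-1,\,1-\beta)$ p.m.f., and the result equals
\[
N\sigma\sqrt{\beta(1-\beta)}\Bigl[\sum_{s=k-1}^{N-1}\binom{N-1}{s}\beta^{N-1-s}(1-\beta)^s-\sum_{t=k}^{N-1}\binom{N-1}{t}\beta^{N-1-t}(1-\beta)^t\Bigr],
\]
which telescopes to the single $s=k-1$ summand $\binom{N-1}{k-1}\beta^{N-k}(1-\beta)^{k-1}$, as required.

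The only step requiring some care is the alignment of the two subsums after the index shift: one needs to observe that the upper limits agree because $\binom{N-1}{N}=0$, so the first sum effectively runs from $t=k$ to $N-1$, matching the shifted second sum on $[k-1,N-1]$ except at the endpoint $s=k-1$. I do not anticipate any other obstacle---once the two sums are placed on the same index range, the cancellation is mechanical, and the combined pre-factor $\sqrt{\beta(1-\beta)}$ emerges from the algebraic identity $\sqrt{(1-\beta)/\beta}\cdot\beta = \sqrt{\beta/(1-\beta)}\cdot(1-\beta) = \sqrt{\beta(1-\beta)}$. Combining this with the $N\mu$ piece yields exactly (\ref{E:ZbetaPiece}).
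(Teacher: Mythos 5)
Your proposal is correct and follows essentially the same route as the paper's first proof: both reduce the $\sigma$-part of $\sum_{t\ge k} v_t\Pr(\xi=v_t)$ to a single surviving term via the identities $(N-t)\binom{N}{t}=N\binom{N-1}{t}$, $t\binom{N}{t}=N\binom{N-1}{t-1}$ and a telescoping cancellation — you organize it as a difference of two reindexed partial sums of the $\mathrm{Binomial}(N-1,1-\beta)$ p.m.f., while the paper cancels adjacent $y_t$ terms one by one, which is the same computation. (The paper also gives a second, independent proof via its multivariate Korkine identity, equation (\ref{E:BoundXi}), which your plan does not touch, but that is not needed for correctness.)
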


This lemma provides interesting insights into the shape of the expected
loss. The second-order conditions reveal that each piece of $Z(\beta )$ is
concave, resulting in multiple local optima with respect to $\beta$. 
We illustrate the piecewise objective function $Z\left( \beta \right) $ in
Figure \ref{F:Piecewise} for two constellations of $(\mu, \sigma)$ at $N=5$.
An upper bound on the expected loss can be obtained by optimizing this
piecewise objective function.

\begin{figure}[tbph]
\caption{Piecewise Objective Function}
\label{F:Piecewise}%
\subfigure[$Z(\beta)$ under $\mu=0, \sigma=1, N=5$
]{\includegraphics[width=3.1in]{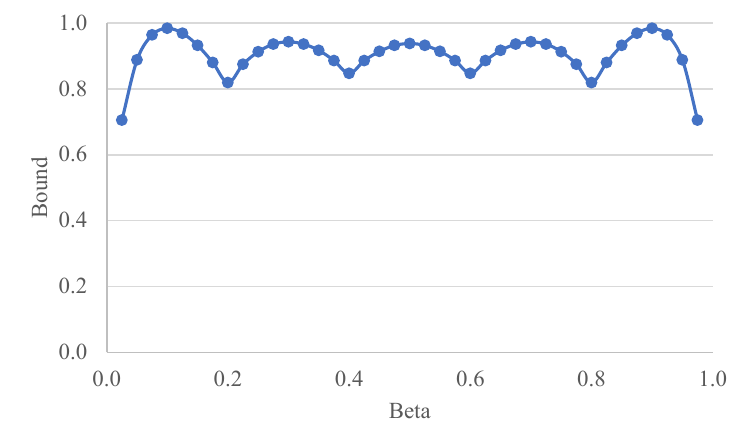}}\hspace{0.05in} 
\subfigure[$Z(\beta)$ under $\mu=-0.1, \sigma=1,
N=5$]{\includegraphics[width=3.1in]{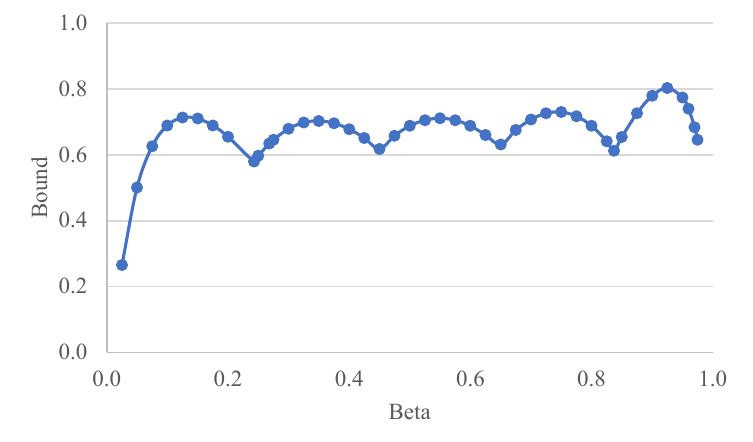}}
\end{figure}

\subsubsection{Zero Mean}

The case with zero mean (i.e., $\mu =\mu ^{\prime }-\frac{q}{N}=0$) provides
invaluable insights into the local optima. As a special case of equation (%
\ref{E:Beta_kValue}), we find that (i) the thresholds satisfy $\delta _{k}=%
\frac{N-k}{N}$, i.e., are decreasing in $k$, and (ii) the second term on the
right-hand side of equation (\ref{E:ZbetaPiece}) equals zero. Therefore, in
order to obtain the bound, we only need to maximize the first term on the
right-hand side of equation (\ref{E:ZbetaPiece}), which can be written as
follows: 
\begin{equation*}
T\left( \beta \right) \equiv T\left( k,\beta \right) \equiv N\sigma \sqrt{%
\beta \left( 1-\beta \right) }\frac{\left( N-1\right) !\left( 1-\beta
\right) ^{k-1}\beta ^{N-k}}{\left( k-1\right) !\left( N-k\right) !},
\end{equation*}%
where $\beta \in \left[ \frac{N-k}{N},\frac{N-k+1}{N}\right] $. Each piece $%
T\left( k,\beta \right) $ is continuous and strictly concave with respect to 
$\beta $. It is easy to see that the local optimal solution is $\beta
_{k}^{\ast }=\frac{2N-2k+1}{2N}$, which is the mid-point of the
corresponding interval $\left[ \frac{N-k}{N},\frac{N-k+1}{N}\right] $.
Substituting the local optimal solution into $T\left( k,\beta \right) $, we
obtain the local optimal objective values:%
\begin{equation}
T^{\ast }\left( k\right) \equiv T\left( k,\beta _{k}^{\ast }\right) =N\sigma%
\frac{\left( N-1\right) !\left( \frac{2k-1}{2N}\right) ^{k-1}\left( 1-\frac{%
2k-1}{2N}\right) ^{N-k}}{\left( k-1\right) !\left( N-k\right) !}\sqrt{\left( 
\frac{2k-1}{2N}\right) \left( \frac{2N-2k+1}{2N}\right) }.
\label{E:LocalOptimalValue}
\end{equation}

\begin{lemma}
\label{L:LocalOptimum}(Local Optima) If $\mu =0$ then the local optimal
objective values display the following properties: (i) Symmetric property
that $T^{\ast }\left( k\right) =T^{\ast }\left( N-k\right) $ holds for any $%
k=1,2,...,N$; (ii) Log-convexity with respect to $k$ such that $%
\max_{k}\left\{ T^{\ast }\left( k\right) \right\} =T^{\ast }\left( 1\right)
=T^{\ast }\left( N\right) $, where%
\begin{equation}
T^{\ast }\left( 1\right) =T^{\ast }\left( N\right) =N\sigma\left( 1-\frac{1}{%
2N}\right) ^{N-1}\sqrt{\frac{1}{2N}\left( 1-\frac{1}{2N}\right) }.
\label{E:TmaximumZeroM}
\end{equation}
\end{lemma}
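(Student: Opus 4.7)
The plan is to handle the two statements separately: part (i) reduces to an algebraic identity, while part (ii) amounts to a log-convexity argument for a discrete sequence, after which the closed-form for $T^{\ast}(1)$ in \eqref{E:TmaximumZeroM} follows by direct substitution.

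For part (i), I would first rewrite the local optimum in the compact form
\[
T^{\ast}(k) \;=\; N\sigma \binom{N-1}{k-1}\, p_k^{\,k-\tfrac12}(1-p_k)^{\,N-k+\tfrac12}, \qquad p_k \,:=\, \tfrac{2k-1}{2N},
\]
by collecting the $\sqrt{p_k(1-p_k)}$ factor into the power terms. The claimed symmetry then follows from three observations that I would verify explicitly: $\binom{N-1}{k-1} = \binom{N-1}{N-k}$; $p_{N+1-k} = 1 - p_k$; and under the substitution $k \mapsto N+1-k$ the half-integer exponents $k-\tfrac12$ and $N-k+\tfrac12$ interchange. A single substitution then gives $T^{\ast}(k) = T^{\ast}(N+1-k)$, which is the version of the symmetry consistent with the endpoint equality $T^{\ast}(1)=T^{\ast}(N)$ claimed in the lemma.

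For part (ii), I would establish log-convexity of $k \mapsto T^{\ast}(k)$, i.e.\ $T^{\ast}(k-1)\,T^{\ast}(k+1) \ge T^{\ast}(k)^2$ for every $2 \le k \le N-1$. Writing the ratio
\[
\frac{T^{\ast}(k+1)}{T^{\ast}(k)} \;=\; \frac{N-k}{k} \cdot \frac{p_{k+1}^{\,k+\tfrac12}(1-p_{k+1})^{\,N-k-\tfrac12}}{p_k^{\,k-\tfrac12}(1-p_k)^{\,N-k+\tfrac12}},
\]
I would show this ratio is monotone non-decreasing in $k$. Substituting $p_k = (2k-1)/(2N)$ and setting $m := k-\tfrac12$, $n := N-k+\tfrac12$, the $p$-dependent part rearranges into expressions of the form $\bigl(1-1/m^2\bigr)^{m-1}\bigl(1+1/m\bigr)^{2}$ and its analogue in $n$, whose monotonicity in $k$ one can read off directly. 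Combined with the symmetry established in part (i), log-convexity forces the maximum of $T^{\ast}$ on $\{1,\ldots,N\}$ to be attained at the two endpoints $k=1$ and $k=N$. Substituting $k=1$ gives $p_1=1/(2N)$ and $\binom{N-1}{0}=1$, so $T^{\ast}(1) = N\sigma(1-1/(2N))^{N-1}\sqrt{(1/(2N))(1-1/(2N))}$, matching \eqref{E:TmaximumZeroM}.

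The main obstacle is the algebraic verification of the monotonicity of the ratio $T^{\ast}(k+1)/T^{\ast}(k)$: the half-integer exponents mixed with the binomial prefactor $(N-k)/k$ prevent any direct appeal to the standard log-concavity of binomial probabilities. The cleanest route I anticipate is to extend $T^{\ast}$ to a continuous function of $k\in[1,N]$ via $\log\Gamma$ and show $\tfrac{d^2}{dk^2}\log T^{\ast}(k) \ge 0$, reducing log-convexity to a digamma/trigamma inequality; the alternative elementary route requires bookkeeping of the $(m\pm 1)/m$ and $(n\pm 1)/n$ factors sketched above to push through a direct computation.
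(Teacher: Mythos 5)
Your part (i) is correct and in fact repairs a small indexing slip in the paper: the paper's own proof writes the symmetry as $\beta _{k}^{\ast }=1-\beta _{N-k}^{\ast }$, whereas the correct relation is $\beta _{k}^{\ast }=1-\beta _{N+1-k}^{\ast }$, giving $T^{\ast }(k)=T^{\ast }(N+1-k)$ exactly as you state; this is the version consistent with $T^{\ast }(1)=T^{\ast }(N)$. Your part (ii) correctly reduces the claim to showing that $k\mapsto T^{\ast }(k+1)/T^{\ast }(k)$ is non-decreasing, and the passage from discrete log-convexity plus symmetry to the endpoint maximum is sound. The genuine gap is at the one step that carries all the weight: you assert that the monotonicity of this ratio ``can be read off directly'' from expressions of the form $(1-1/m^{2})^{m-1}(1+1/m)^{2}$. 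It cannot. That expression is only the $p_{k}$-contribution to the second-difference ratio $T^{\ast }(k-1)T^{\ast }(k+1)/T^{\ast }(k)^{2}$; it must still be multiplied by the analogous factor in $n=N-k+\tfrac{1}{2}$ and by the binomial prefactor $\binom{N-1}{k-2}\binom{N-1}{k}/\binom{N-1}{k-1}^{2}$, which is at most $1$ by log-concavity of binomial coefficients and therefore works \emph{against} you. Proving that the full product is at least $1$ is precisely the content of the lemma and is not immediate from the displayed form; you concede as much by deferring to the $\log \Gamma $ route as the ``cleanest'' alternative. That alternative is, in fact, the paper's actual proof: it writes $\ln T^{\ast }(k)$ with Gamma functions, reduces log-convexity to $\psi ^{(1)}(k)+\psi ^{(1)}(N-k+1)<\frac{1}{k-1/2}+\frac{1}{N-k+1/2}$, and verifies this by bounding the trigamma function above by $\frac{1}{k}+\frac{1}{2k^{2}}+\frac{1}{6k^{3}}$ (citing Gordon) and checking $\frac{1}{k}+\frac{1}{2k^{2}}+\frac{1}{6k^{3}}-\frac{1}{k-1/2}=-\frac{k+1}{6k^{3}(2k-1)}<0$. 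So as written, your proposal leaves the decisive inequality unproven in the primary route and merely points at the paper's route as a fallback.

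For what it is worth, your discrete route can be completed, and more elementarily than the paper's. With $m=k-\tfrac{1}{2}$ and $n=N-k+\tfrac{1}{2}$ one has $p_{k}=m/N$, $1-p_{k}=n/N$, so $T^{\ast }(k)=\frac{N\sigma }{N^{N}}\binom{N-1}{k-1}m^{m}n^{n}$ and
\begin{equation*}
\frac{T^{\ast }(k+1)}{T^{\ast }(k)}=\frac{n-\tfrac{1}{2}}{m+\tfrac{1}{2}}\cdot \frac{(m+1)^{m+1}(n-1)^{n-1}}{m^{m}\,n^{n}}.
\end{equation*}
Taking logarithms and differentiating in $k$ (so $dm/dk=1$, $dn/dk=-1$) yields
\begin{equation*}
\frac{d}{dk}\ln \frac{T^{\ast }(k+1)}{T^{\ast }(k)}=\left( \ln \frac{m+1}{m}-\frac{1}{m+\tfrac{1}{2}}\right) +\left( \ln \frac{n}{n-1}-\frac{1}{n-\tfrac{1}{2}}\right) ,
\end{equation*}
and each bracket is non-negative by the midpoint-rule inequality $\int_{x}^{x+1}t^{-1}dt\geq \frac{1}{x+1/2}$ for the convex function $1/t$. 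This closes the gap with no special-function estimates and is arguably cleaner than the paper's trigamma argument; but it is an argument you still need to supply.
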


As a direct consequence of Lemma \ref{L:LocalOptimum}, we find that with
zero mean, it holds that $T(\beta )\leq N\sigma\left( 1-\frac{1}{2N}\right)
^{N-1}\sqrt{\frac{1}{2N}\left( 1-\frac{1}{2N}\right) }$, and there exist two
extreme distributions attaining this bound, one with $\beta =\frac{1}{2N}$
and the other with $\beta =\frac{2N-1}{2N}$ . We illustrate the sequence of $%
T^{\ast }(k)$ in Figure \ref{F:LogConvex} using $N=5$ and $\sigma =1$. The
solid curve depicts the piecewise objective function while the dashed curve
connects all the local peaks as a log-convex curve.

\begin{figure}[tbph]
\caption{Log-Convexity}
\label{F:LogConvex}\centering \includegraphics[width=0.6%
\linewidth]{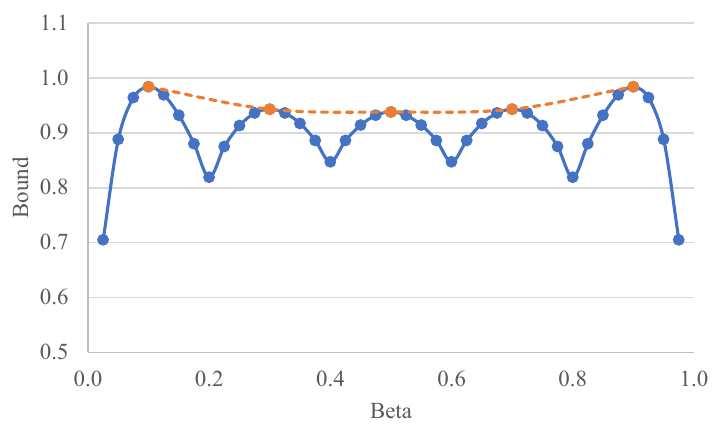}
\end{figure}

\subsubsection{Optimal Bound\label{SS:UltimateBounds}}

Equation (\ref{E:ZbetaPiece}) yields two noteworthy cases: (i) when $\beta
\in \left[ \delta _{1},1\right] $, we find that%
\begin{eqnarray*}
Z\left( \beta \right) &=&Z_{1}\left( \beta \right) \equiv N\mu \left(
\sum_{t=1}^{N}\left( 
\begin{array}{c}
N \\ 
t%
\end{array}%
\right) \beta ^{N-t}\left( 1-\beta \right) ^{t}\right) +T\left( 1,\beta
\right) \\
&=&N\mu \left( 1-\beta ^{N}\right) +N\sigma \sqrt{\beta \left( 1-\beta
\right) }\beta ^{N-1}.
\end{eqnarray*}%
since $k=1$, and (ii) when $\beta \in \left[ 0,\delta _{N-1}\right] $, we
find that%
\begin{eqnarray*}
Z\left( \beta \right) &=&Z_{N}\left( \beta \right) \equiv N\mu \left(
\sum_{t=N}^{N}\left( 
\begin{array}{c}
N \\ 
t%
\end{array}%
\right) \beta ^{N-t}\left( 1-\beta \right) ^{t}\right) +T\left( N,\beta
\right) \\
&=&N\mu \left( 1-\beta \right) ^{N}+N\sigma \sqrt{\beta \left( 1-\beta
\right) }\left( 1-\beta \right) ^{N-1},
\end{eqnarray*}%
since $k=N$. When optimizing the piecewise objective function $Z\left(
\beta \right) $, the optimal solution either falls in the rightmost interval 
$\left[ \delta _{1},1\right] $ or the leftmost interval $\left[ 0,\delta
_{N-1}\right] $ but never falls in between the two intervals. Thus, we
either optimize $Z_{1}(\beta )$ or $Z_{N}(\beta )$ to determine the optimal
bound. We summarize the results as follows.

\begin{proposition}
\label{P:ExpectedLoss}(Expected Loss) If $\mu <0$ then it holds that 
\begin{equation*}
Z^{\ast }\left( \beta \right) =N\mu +N\hat{\beta}_{1}^{N}\left( -\mu +\sigma 
\sqrt{\frac{1-\hat{\beta}_{1}}{\hat{\beta}_{1}}}\right) ,
\end{equation*}%
where%
\begin{equation}
\hat{\beta}_{1}=\frac{(2N-1)\sigma ^{2}+N\mu ^{2}-\mu \sqrt{\left(
2N-1\right) \sigma ^{2}+N^{2}\mu ^{2}}}{2N\left( \sigma ^{2}+\mu ^{2}\right) 
};  \label{E:BetaN1}
\end{equation}%
and if $\mu >0$ then it holds that%
\begin{equation*}
Z^{\ast }\left( \beta \right) =N\left( 1-\hat{\beta}_{2}\right) ^{N}\left(
\mu +\sigma \sqrt{\frac{\hat{\beta}_{2}}{1-\hat{\beta}_{2}}}\right) ,
\end{equation*}%
where%
\begin{equation}
\hat{\beta}_{2}=\frac{\sigma ^{2}+N\mu ^{2}-\mu \sqrt{\left( 2N-1\right)
\sigma ^{2}+N^{2}\mu ^{2}}}{2N\left( \sigma ^{2}+\mu ^{2}\right) }.
\label{E:BetaN2}
\end{equation}
\end{proposition}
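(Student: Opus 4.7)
The strategy rests on Lemma \ref{L:PieceWiseObj} together with the log-convex pattern of local maxima in Lemma \ref{L:LocalOptimum}: jointly they reduce the optimisation of the piecewise function $Z(\beta)$ on $[0,1]$ to maximising just two smooth pieces, namely $Z_1(\beta)=N\mu(1-\beta^N)+N\sigma\beta^{N-1}\sqrt{\beta(1-\beta)}$ on the rightmost interval $[\delta_1,1]$ and $Z_N(\beta)=N\mu(1-\beta)^N+N\sigma(1-\beta)^{N-1}\sqrt{\beta(1-\beta)}$ on the leftmost interval $[0,\delta_{N-1}]$. Accordingly, my plan is a three-step program: (i) compute the stationary point of $Z_1$ and identify it with $\hat{\beta}_1$ in (\ref{E:BetaN1}); (ii) compute the stationary point of $Z_N$ and identify it with $\hat{\beta}_2$ in (\ref{E:BetaN2}); (iii) verify that the sign of $\mu$ selects which piece contains the global maximum.

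For step (i), I would differentiate $Z_1$, multiply through by $\beta^{3/2-N}\sqrt{1-\beta}$ to clear fractional powers, and reduce the first-order condition to $N\mu\sqrt{\beta(1-\beta)}=\sigma\bigl[(N-\tfrac{1}{2})-N\beta\bigr]$. Squaring both sides yields the quadratic
\[
N^2(\sigma^2+\mu^2)\beta^2 \;-\; N\bigl[(2N-1)\sigma^2+N\mu^2\bigr]\beta \;+\; \tfrac{(2N-1)^2}{4}\sigma^2\;=\;0,
\]
whose discriminant simplifies, after algebra, to $N^2\mu^2\bigl[(2N-1)\sigma^2+N^2\mu^2\bigr]$, and the root falling in $[\delta_1,1]$ recovers (\ref{E:BetaN1}). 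An analogous computation for $Z_N$ produces the stationarity condition $N\mu\sqrt{\beta(1-\beta)}=\sigma\bigl[\tfrac{1}{2}-N\beta\bigr]$ together with the quadratic $4N^2(\sigma^2+\mu^2)\beta^2-4N(\sigma^2+N\mu^2)\beta+\sigma^2=0$, whose appropriate root is $\hat{\beta}_2$ in (\ref{E:BetaN2}). Substituting each $\hat{\beta}_i$ back into $Z_1$ or $Z_N$, and using the first-order condition itself to eliminate the square-root term, collapses the expressions into the closed forms $Z^*(\beta)$ claimed in the proposition.

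For step (iii), a sign-and-feasibility argument settles the case split. When $\mu<0$, probability mass must concentrate near the small negative outcome $L$, forcing $\beta$ close to one; a direct inspection of (\ref{E:BetaN1}) confirms $\hat{\beta}_1\in[\delta_1,1]$, while the candidate stationary point of $Z_N$ either falls outside $[0,\delta_{N-1}]$ or yields a strictly smaller objective. The case $\mu>0$ follows by the symmetry $X\to -X$, which sends $\beta\to 1-\beta$ and $\mu\to -\mu$, thereby swapping the roles of $Z_1$ and $Z_N$; a direct algebraic check indeed gives $\hat{\beta}_1(-\mu)=1-\hat{\beta}_2(\mu)$. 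A useful cross-check is the zero-mean limit: as $\mu\to 0$, (\ref{E:BetaN1}) collapses to $\hat{\beta}_1=(2N-1)/(2N)$ and (\ref{E:BetaN2}) to $\hat{\beta}_2=1/(2N)$, the two symmetric extrema identified in Lemma \ref{L:LocalOptimum}, both returning the value (\ref{E:TmaximumZeroM}).

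The hardest part will be step (iii). The squaring performed in step (i) introduces a spurious root that must be excluded using the sign of the linear expressions $(N-\tfrac{1}{2})-N\beta$ and $\tfrac{1}{2}-N\beta$; moreover, I must show that the chosen root of each quadratic lies in the correct sub-interval and that the competing local optimum on the other sub-interval is dominated. Once these boundary and sign-tracking checks are handled carefully, the two closed-form bounds follow by direct substitution.
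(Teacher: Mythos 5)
Your steps (i) and (ii) reproduce the paper's calculation exactly: the first-order conditions $N\mu\sqrt{\beta(1-\beta)}=\sigma[(N-\tfrac12)-N\beta]$ for $Z_1$ and $N\mu\sqrt{\beta(1-\beta)}=\sigma[\tfrac12-N\beta]$ for $Z_N$, the resulting quadratics, and the discriminant $N^2\mu^2[(2N-1)\sigma^2+N^2\mu^2]$ all check out and match (\ref{E:BetaN1})--(\ref{E:BetaN2}). The symmetry $\hat{\beta}_1(-\mu)=1-\hat{\beta}_2(\mu)$ is also correct and is a clean way to transfer the $\mu<0$ case to $\mu>0$, since $Z_1(\beta;\mu)=N\mu+Z_N(1-\beta;-\mu)$ and the additive constant does not affect which piece dominates.

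The genuine gap is in your opening reduction, which step (iii) never repairs. You assert that Lemma \ref{L:PieceWiseObj} together with the log-convexity in Lemma \ref{L:LocalOptimum} reduces the problem to the two end pieces $Z_1$ and $Z_N$, and then in step (iii) you only compare those two pieces against each other. But Lemma \ref{L:LocalOptimum} is proved only for $\mu=0$, where the term $N\mu\sum_{t\ge k}\binom{N}{t}\beta^{N-t}(1-\beta)^t$ vanishes and $Z$ reduces to $T(k,\beta)$. For $\mu\neq 0$ nothing you have cited rules out an interior piece $Z_k$, $k\in\{2,\dots,N-1\}$, attaining the global maximum, and this is precisely where the paper does real work: it computes the interior stationary points $\hat{\beta}_k=\frac{\sigma^2(2N-2k+1)+N\mu^2-\mu\sqrt{N^2\mu^2+\sigma^2(2k-1)(2N-2k+1)}}{2N(\sigma^2+\mu^2)}$ for every $k$, notes that at $\mu=0$ these collapse to the Lemma \ref{L:LocalOptimum} configuration where $Z_1(\hat\beta_1)=Z_N(\hat\beta_N)>Z_k(\hat\beta_k)$, and then applies the envelope theorem to show $\frac{1}{N}\frac{\partial}{\partial\mu}Z_k(\hat{\beta}_k)\big|_{\mu=0}>\frac{1}{N}\frac{\partial}{\partial\mu}Z_1(\hat{\beta}_1)\big|_{\mu=0}=1-\left(\tfrac{2N-1}{2N}\right)^N$, so that moving $\mu$ below zero preserves the dominance of $Z_1$ over every $Z_k$ (and symmetrically for $Z_N$ when $\mu>0$). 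Without some version of this comparison across all $N$ pieces, your argument establishes only that $\hat\beta_1$ and $\hat\beta_2$ are the local maximizers of the two end pieces and which of those two is larger, not that either is the global maximum of $Z(\beta)$. Your spurious-root and interval-membership concerns are real but secondary; the missing ingredient is the elimination of the interior local optima for $\mu\neq 0$.
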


Proposition \ref{P:ExpectedLoss} is valid due to the log-convexity of $%
T^{\ast }\left( k\right) $: when maximizing a log-convex objective function,
the optimal solution must be an extreme point. Interestingly, inequalities (%
\ref{E:TwoProbRelate}) imply two extreme points, corresponding to the
intervals $\left[ \delta _{1},1\right] $ and $\left[ 0,\delta _{N-1}\right] $%
. Thus, based on the sign of $\mu $ (i.e., whether $N\mu ^{\prime }>q$ or $%
N\mu ^{\prime }<q$), we choose either (\ref{E:BetaN1}) or (\ref{E:BetaN2})
to determine the candidate bound on the expected loss.

It is easy to extend this result to non-identical means and variances across 
$n$. Under the two-point distributions, 
we solve 
\begin{equation}
Z=\max_{\beta _{n}}\left\{ \sum_{n=1}^{N}\mu _{n}+\left( \Pi _{n=1}^{N}\beta
_{n}\right) \cdot \sum_{n=1}^{N}\left( -\mu _{n}+\sigma _{n}\sqrt{\frac{%
1-\beta _{n}}{\beta _{n}}}\right) \right\}  \label{E:NonEqualLoss}
\end{equation}%
to obtain the desired bound. As a result, we find that the extreme
distributions continue to display the \emph{equal range} property as $\sigma
_{n}\left( \sqrt{\frac{\beta _{n}^{\ast }}{1-\beta _{n}^{\ast }}}+\sqrt{%
\frac{1-\beta _{n}^{\ast }}{\beta _{n}^{\ast }}}\right) =R^{\ast }$ holds
for the optimal sequence $\beta _{n}^{\ast }$.

\subsection{Extreme Distribution\label{SS:TrueExtreme}}

A distinguishing feature of our derivation is that both the tail indicator
function $\mathbb{I}_{\{X>0\}}$ and linear loss $\max (0,\xi )$ have two
linear pieces, making two-point distributions the extreme distributions. We
can extend Korkine's identity to a multi-dimensional environment. Let $\xi
_{(i)}=X_{1}+...+X_{N}-X_{i}=\xi -X_{i}$ be the sum excluding the $i$-th
random variable and let $X_{i}^{\prime }$ be an independent copy of $X_{i}$,
both satisfying the mean-variance conditions. Also denote $\xi ^{\prime
}=\xi _{(i)}+X_{i}^{\prime }$, meaning that we keep the other $\left(
N-1\right) $ random variables intact but randomize the $i$-th random
variable one at a time. We observe that%
\begin{equation*}
\left( \xi -\xi ^{\prime }\right) \left( \mathbb{I}_{\{\xi >0\}}-\mathbb{I}%
_{\{\xi ^{\prime }>0\}}\right) =\left( X_{i}-X_{i}^{\prime }\right) \left( 
\mathbb{I}_{\{X_{i}+\xi _{(i)}>0\}}-\mathbb{I}_{\{X_{i}^{\prime }+\xi
_{(i)}>0\}}\right) .
\end{equation*}%
We define the component summation $T_{i}$ as follows:%
\begin{equation}
T_{i}=\mathbb{E}_{\left( X_{i},X_{i}^{\prime },\xi _{(i)}\right) }\left[
\left( X_{i}-X_{i}^{\prime }\right) \left( \mathbb{I}_{\{X_{i}+\xi
_{(i)}>0\}}-\mathbb{I}_{\{X_{i}^{\prime }+\xi _{(i)}>0\}}\right) \right] .
\label{E:ComponentSummation}
\end{equation}%
Due to symmetry caused by equal mean and variance, we obtain an intuitive
and important relationship as follows.

\begin{lemma}
\label{L:TotalComponent} (Total and Component Summations) The total
summation $T$ contains $N$ identical component summations, i.e., $T=NT_i$.
\end{lemma}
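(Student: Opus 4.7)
The plan is to interpret $T$ as the natural multivariate analog of the one-dimensional Korkine integrand used in Section~\ref{SS:ScarfNew}, and then to decompose it additively. Let $\xi^{\ast} = \sum_{j=1}^{N} X_{j}'$ denote a \emph{full} iid copy of $\xi$, built from mutually independent copies $X_{j}'$ of every $X_{j}$, and define
\[ T = \mathbb{E}\bigl[ (\xi - \xi^{\ast})\bigl( \mathbb{I}_{\{\xi > 0\}} - \mathbb{I}_{\{\xi^{\ast} > 0\}} \bigr) \bigr], \]
so that $\tfrac{1}{2} T = \operatorname{Cov}(\xi, \mathbb{I}_{\{\xi > 0\}})$, paralleling the single-variable calculation. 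Since $\xi - \xi^{\ast} = \sum_{i=1}^{N} (X_{i} - X_{i}')$, linearity of expectation gives $T = \sum_{i=1}^{N} \tilde{T}_{i}$ with $\tilde{T}_{i} = \mathbb{E}[ (X_{i} - X_{i}')( \mathbb{I}_{\{\xi > 0\}} - \mathbb{I}_{\{\xi^{\ast} > 0\}} ) ]$. The task reduces to proving $\tilde{T}_{i} = T_{i}$ for each $i$, where $T_{i}$ is the single-swap object in \eqref{E:ComponentSummation}; the iid assumption then equates all $T_{i}$'s and yields $T = NT_{i}$.

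To prove $\tilde{T}_{i} = T_{i}$, I would expand both expectations into their four cross terms and match them pairwise. Writing $\xi = X_{i} + \xi_{(i)}$ and $\xi^{\ast} = X_{i}' + \xi^{\ast}_{(i)}$ with $\xi^{\ast}_{(i)} = \sum_{j \neq i} X_{j}'$, the terms $\mathbb{E}[X_{i} \mathbb{I}_{\{X_{i} + \xi_{(i)} > 0\}}]$ and $\mathbb{E}[X_{i}' \mathbb{I}_{\{X_{i} + \xi_{(i)} > 0\}}]$ are visibly common to $\tilde{T}_{i}$ and $T_{i}$. The mixed term $\mathbb{E}[X_{i} \mathbb{I}_{\{\xi^{\ast} > 0\}}]$ factorizes to $\mu \Pr(\xi^{\ast} > 0) = \mu(1 - \gamma)$ because $X_{i}$ is independent of the entire primed collection, and its $T_{i}$-analog $\mathbb{E}[X_{i} \mathbb{I}_{\{X_{i}' + \xi_{(i)} > 0\}}]$ equals $\mu \Pr(X_{i}' + \xi_{(i)} > 0) = \mu(1 - \gamma)$ by the same independence argument together with $X_{i}' + \xi_{(i)} \stackrel{d}{=} \xi$. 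Finally, $\mathbb{E}[X_{i}' \mathbb{I}_{\{X_{i}' + \xi^{\ast}_{(i)} > 0\}}]$, after conditioning on $X_{i}'$ (which is independent of $\xi^{\ast}_{(i)}$), equals $\mathbb{E}_{X_{i}'}\bigl[ X_{i}' \Pr(\xi^{\ast}_{(i)} > -X_{i}') \bigr]$, which matches its $T_{i}$-analog $\mathbb{E}_{X_{i}'}\bigl[ X_{i}' \Pr(\xi_{(i)} > -X_{i}') \bigr]$ because $\xi^{\ast}_{(i)}$ and $\xi_{(i)}$ share the same marginal law.

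The main obstacle is precisely this last matching. One cannot appeal directly to ``$\xi^{\ast}_{(i)} \stackrel{d}{=} \xi_{(i)}$'' at the joint level, since both primed and unprimed sums appear inside the same expectation defining $T$ and must be kept distinct. The workaround is to first expand into individual cross terms, each of which touches at most one of the two sums, and only then to exchange sums via equality of marginals. Once $\tilde{T}_{i} = T_{i}$ is established, the iid symmetry across $i$ closes the argument: $T = \sum_{i=1}^{N} T_{i} = NT_{i}$.
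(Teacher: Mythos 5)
Your proof is correct. The paper itself gives no formal proof of this lemma --- it displays the pointwise single-swap identity $(\xi-\xi')(\mathbb{I}_{\{\xi>0\}}-\mathbb{I}_{\{\xi'>0\}})=(X_i-X_i')(\mathbb{I}_{\{X_i+\xi_{(i)}>0\}}-\mathbb{I}_{\{X_i'+\xi_{(i)}>0\}})$ with $\xi'=\xi_{(i)}+X_i'$ and then asserts $T=NT_i$ ``due to symmetry.'' Your argument is the rigorous version of that assertion, and it correctly isolates the one genuinely delicate point: the single-swap copy $\xi'$ is \emph{not} independent of $\xi$ (they share $\xi_{(i)}$), so one cannot simply take expectations of the displayed identity and identify the left side with $T$, which by Korkine's identity requires a fully independent copy $\xi^{\ast}$. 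Your fix --- decompose $\xi-\xi^{\ast}=\sum_i(X_i-X_i')$ by linearity, expand each $\tilde{T}_i$ and $T_i$ into four cross terms, and match them using independence plus equality of marginals ($X_i'+\xi_{(i)}\overset{d}{=}\xi$ and $\xi^{\ast}_{(i)}\overset{d}{=}\xi_{(i)}$) --- is exactly what is needed, and all four pairings check out (the two mixed terms each reduce to $\mu(1-\gamma)$, and the last pair agree after conditioning on $X_i'$). A marginally shorter route to the same end: by the symmetry of $X_i$ and $X_i'$, $T_i=2\operatorname{Cov}(X_i,\mathbb{I}_{\{\xi>0\}})$ directly, while $\tfrac{1}{2}T=\operatorname{Cov}(\xi,\mathbb{I}_{\{\xi>0\}})=\sum_i\operatorname{Cov}(X_i,\mathbb{I}_{\{\xi>0\}})$ by bilinearity, which gives $T=\sum_i T_i=NT_i$ without handling the fourth cross term separately. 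Either way, the lemma is established.
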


Lemma \ref{L:TotalComponent} implies that 
\begin{equation*}
\mathbb{E}\left( \xi \right) ^{+}=\mathbb{E}\left( \xi \right) \mathbb{E}%
\left( \mathbb{I}_{\{\xi >0\}}\right) +\frac{1}{2}T=N\mu \Pr \left( \xi
>0\right) +\frac{1}{2}NT_{i}.
\end{equation*}%
We define the integrand of the component summation as%
\begin{equation*}
A_{N}=\left\vert X_{i}-X_{i}^{\prime }\right\vert \left( \mathbb{I}_{\{\max
\left( X_{i},X_{i}^{\prime }\right) +\xi _{(i)}>0\}}-\mathbb{I}_{\{\min
\left( X_{i},X_{i}^{\prime }\right) +\xi _{(i)}>0\}}\right) ,
\end{equation*}%
where the subscript $N$ indicates the $N$-dimensional model. We now find that%
\begin{equation}
\mathbb{E}\left( \xi \right) ^{+}=N\mu \Pr \left( \xi >0\right) +\frac{N}{2}%
\sum_{X_{i}}\sum_{X_{i}^{\prime }}\sum_{\xi _{(i)}}A_{N}\Pr \left(
X_{i}\right) \Pr \left( X_{i}^{\prime }\right) \Pr \left( \xi _{(i)}\right) .
\label{E:BoundXi}
\end{equation}%
An important advantage of equation (\ref{E:BoundXi}) is that we can derive
the $Z(\beta )$ function with fewer steps (see the second proof of Lemma \ref%
{L:PieceWiseObj} in the Supplement). Equation (\ref{E:BoundXi}) also has other
future applications as it isolates the impact of each individual random
variable and bridges between the sum and variance (or absolute deviation) of
the random variables.

\begin{theorem}
\label{T:ExtremeDistributionT}(Extreme Distribution) When determining the
upper bound on $\mathbb{E}\left( \xi \right) ^{+}$, it suffices to consider
only the two-point distributions satisfying equation (\ref{E:F-2Point}).
\end{theorem}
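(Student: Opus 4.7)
The plan is to show that for any feasible distribution $F$ of $X$ with mean $\mu$ and variance $\sigma^{2}$, the two-point distribution $F^{*}$ of form~(\ref{E:F-2Point}) with the same parameter $\beta\equiv\Pr_{F}(X\le 0)$ weakly dominates $F$, in the sense that $\mathbb{E}(\xi^{F^{*}})^{+}\ge \mathbb{E}(\xi^{F})^{+}$. The main engine is the component-summation representation in equation~(\ref{E:BoundXi}), which splits the objective into the two terms $N\mu\Pr(\xi>0)$ and $\tfrac{N}{2}\mathbb{E}[A_{N}]$, so it suffices to control each of them in terms of $\beta$.

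For the Korkine term $\mathbb{E}[A_{N}]$, I would condition on $\xi_{(i)}=s$. Then $A_{N}\mid s$ is precisely the single-variable Korkine integrand at the shifted threshold $-s$, which vanishes unless $X_{i}$ and its independent copy $X_{i}^{\prime}$ lie on opposite sides of $-s$. The one-dimensional derivation leading to Lemma~\ref{L:TighterScarf} shows that its conditional expectation equals twice the Korkine covariance at threshold $-s$; invoking the conditional-mean bounds in Remark~\ref{R:ConditionalMeans} (after recentering so that $-s$ becomes the threshold of interest) gives a pointwise-in-$s$ upper bound on this conditional expectation that is attained exactly by the two-point form~(\ref{E:F-2Point}). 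Combining with the total-equals-$N$-times-component identity of Lemma~\ref{L:TotalComponent} then converts this conditional dominance into the inequality $\mathbb{E}[A_{N}^{F}]\le \mathbb{E}[A_{N}^{F^{*}}]$.

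For the tail term $N\mu\Pr(\xi>0)$, I would invoke Proposition~\ref{P:Tail} (or Proposition~\ref{P:EqualRange} in the non-identical extension), whose extremizers are exactly two-point distributions of form~(\ref{E:F-2Point}), as already identified in Remark~\ref{rem:comonotone}. Depending on the sign of $\mu$, the replacement of $F$ by $F^{*}$ moves $\Pr(\xi>0)$ in the favorable direction, so the contribution of this term to $\mathbb{E}(\xi)^{+}$ also weakly increases.

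The main obstacle is a subtle circularity: replacing $F$ by $F^{*}$ simultaneously changes the distribution of $\xi_{(i)}$ that plays the role of the conditioning variable in the Korkine identity, so the pointwise-in-$s$ comparison is not automatically global. I would resolve this by a sequential substitution (coupling) argument that replaces one coordinate at a time with its $F^{*}$-version and exploits the pointwise monotonicity established above at each substitution; after $N$ such substitutions, the inequality transfers from $F$ to $F^{*}$. This reduces the genuinely multivariate comparison to $N$ applications of an essentially one-dimensional argument, dovetailing with the component-wise decomposition of Lemma~\ref{L:TotalComponent} and completing the proof.
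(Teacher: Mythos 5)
Your overall strategy---decompose $\mathbb{E}(\xi)^{+}$ via the multivariate Korkine identity (\ref{E:BoundXi}) and argue component-wise that two-point marginals are extremal---is the same one the paper uses; the paper's justification of Theorem \ref{T:ExtremeDistributionT} is essentially the informal monotonicity argument on the integrand $A_{N}$ given in the text right after the theorem. However, the specific way you try to make this rigorous has a genuine gap. Your sequential-substitution (swapping) step requires that, for every fixed value $s$ of $\xi_{(i)}$, the two-point law $F^{*}$ of form (\ref{E:F-2Point}) with the \emph{fixed} parameter $\beta=\Pr_{F}(X\le 0)$ satisfies $\mathbb{E}_{F^{*}}(X+s)^{+}\ge\mathbb{E}_{F}(X+s)^{+}$. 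That pointwise-in-$s$ domination is false: the one-dimensional extremizer at threshold $-s$ is the two-point law built from $\beta_{s}=\Pr_{F}(X\le -s)$, which varies with $s$, and no single member of the family (\ref{E:F-2Point}) dominates $F$ at all thresholds simultaneously. Concretely, take $F$ standard normal ($\mu=0$, $\sigma=1$, $\beta=\tfrac12$), so $F^{*}$ puts mass $\tfrac12$ at $\pm 1$; then $\mathbb{E}_{F^{*}}(X+2)^{+}=2$ while $\mathbb{E}_{F}(X+2)^{+}=2\Phi(2)+\phi(2)\approx 2.01>2$. So the very first swap in your coupling argument can strictly decrease the objective, and the argument does not go through as stated.

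A second, smaller problem is the tail term $N\mu\Pr(\xi>0)$. You cannot optimize it separately from the Korkine term, since both are functionals of the same distribution, and when $\mu>0$ you need the two-point law to \emph{maximize} $\Pr(\xi>0)$, whereas Proposition \ref{P:Tail} only identifies the distribution attaining the \emph{lower} bound; nothing in the paper characterizes the maximizer. The paper sidesteps both issues by not fixing $\beta$ in advance: it argues (informally) from the structure of $A_{N}$ that spreading mass outward to two atoms weakly increases the whole expression (\ref{E:BoundXi}) jointly, and only afterwards optimizes over the one-parameter family $Z(\beta)$ in Lemma \ref{L:PieceWiseObj} and Proposition \ref{P:ExpectedLoss}. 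To salvage a swapping proof you would need to show that \emph{some} $\beta$ (not necessarily $\Pr_{F}(X\le 0)$) yields a dominating two-point law, which requires a genuinely joint argument over the two terms rather than a pointwise conditional one.
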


To understand the intuition of Theorem \ref{T:ExtremeDistributionT}, we can
assume $X_{i}>0>X_{i}^{\prime }$ without loss of generality such that the
integrand $A_{N}$ increases along with the absolute value $\left\vert
X_{i}-X_{i}^{\prime }\right\vert $. The indicator function $\mathbb{I}%
_{\{X_{i}+\xi _{(i)}>0\}}$ weakly increases with respect to $X_{i}$ and $\xi
_{(i)}$. We find that $A_{N}$ must be increasing when $X_{i}^{\prime }$
decreases. Specifically, we find that (i) when $\xi _{(i)}\leq -X_{i}$, $%
A_{N}=0$ as both indicators are zero; (ii) when $-X_{i}<\xi _{(i)}\leq
-X_{i}^{\prime }$, $A=\left( X_{i}-X_{i}^{\prime }\right) $ as the first
indicator equals one but the second indicator equals zero; and (iii) when $%
-X_{i}^{\prime }<\xi _{(i)}$, $A=0$ as both indicators are one. Thus, to
increase the integrand, we increase $X_{i}$ but decrease $X_{i}^{\prime }$
as much as possible. By doing so, we also widen the interval $%
(-X_{i},-X_{i}^{\prime }]$, over which $A_{N}$ is strictly positive, making
the expected value $\mathbb{E}\left( A_{N}\right) $ even larger. Therefore,
the extreme distributions maximizing $\mathbb{E}\left( \xi \right) ^{+}$
must come from the family of two-point distributions. The candidate
solutions in Proposition \ref{P:ExpectedLoss} are indeed globally optimal. 

In Appendix, we provide a second proof without using Korkine's identity. The
second proof of Theorem \ref{T:ExtremeDistributionT} applies the bound for the quantile of the sum (i.e., Corollary \ref%
{C:Percentile}). If a candidate joint distribution for the sum yields a
quantile exceeding the range that inequality (\ref{E:RangePercentileN})
specifies, we can assert that this candidate joint distribution is
infeasible to the independent constraints. Thus, the bound on the quantile
of the sum can now act as a tractable replacement for the nonlinear
independent constraints. Subsequently, we obtain a relaxed objective value,
which notably can be attained by an independent sum of $N$ iid random variables that follow two-point distributions.

\subsection{Contrasting with Aggregate Bounds}

It is useful to contrast the bounds in Lemma \ref{L:N-NonSharp} with those
in Propositions\ \ref{P:Tail} and \ref{P:ExpectedLoss}. Figure \ref%
{F:TailPro}(a) evaluates the bounds $\frac{\sigma ^{2}}{N\left( \mu -\frac{q%
}{N}\right) ^{2}+\sigma ^{2}}\allowbreak $ and $\frac{\sigma ^{2N}}{\left(
\left( \mu -\frac{q}{N}\right) ^{2}+\sigma ^{2}\right) ^{N}}$ using the
parameters: $\mu =\sigma =1$ and $q=0.9$. Graphically, the former is higher
than the latter; and the gap between them can be visibly wide. Conceptually,
the former relaxes the independent constraints, providing an overestimate
(underestimate) for the left (right) tail. To highlight the speed of
convergence, we also depict the curves (in light grey) based on the normal
distribution. Figure \ref{F:TailPro}(a) confirms that relative to normal
prior, the bound produced by Proposition \ref{P:Tail} on the tail
probability is fairly accurate when $N$ increases (while that produced by
Lemma \ref{L:N-NonSharp} is much less accurate). With moderate $N$ ($2\leq
N\leq 7$), the gap between the new bound and normal prior remains
substantial, underscoring the most suitable parameter space for applying the
new bound on tail probability.

Using the parameters $\mu ^{\prime }=0=\mu -\frac{q}{N}$ and $\sigma =1$, we
find that $\mathbb{E}|\xi |=2\mathbb{E}(\xi )^{+}$. Figure \ref{F:TailPro}%
(b) contains plots of the aggregation bound $\mathbb{E}|\xi |=\sqrt{%
N^{2}\left( \mu -\frac{q}{N}\right) ^{2}+N\sigma ^{2}}$ and $2Z_{N}^{\ast }$%
. Graphically, the former bound is higher than the latter as the former
relaxes the independence constraints. In addition to visualization, we also
obtain several notable converging results. When each $X_{n}$ follows iid
standard normal distributions, we obtain that%
\begin{equation*}
\mathbb{E}(\xi )^{+}=\frac{1}{\sqrt{2\pi }}\sqrt{N}\approx 0.399\sqrt{N}=0.5%
\mathbb{E}|\xi |,
\end{equation*}%
where $\frac{1}{\sqrt{2\pi }}$ is the standard normal density evaluated at
point $x=0$. In contrast, equation (\ref{E:TmaximumZeroM}) yields that 
\begin{equation*}
\lim_{N\rightarrow \infty }\sqrt{N}\left( 1-\frac{1}{2N}\right) ^{N-1}\sqrt{%
\frac{1}{2N}\left( 1-\frac{1}{2N}\right) }=\frac{1}{\sqrt{2e}}\approx 0.429,
\end{equation*}%
indicating that the improved upper bound on expected loss equals $Z^{\ast
}=0.429\sqrt{N}$. Despite that the sum asymptotically converges to normal
distributions, the improved bound remains $7.5\%$ higher than the exact
value under standard normal prior. In contrast, the aggregation bound on
expected loss yields $\bar{Z}=0.5\sqrt{N}$, which is significantly higher
than $0.429\sqrt{N}$.

\begin{figure}[tbph]
\caption{Contrasting with Benchmark}
\label{F:TailPro}\center%
\subfigure[Tail Probability
]{\includegraphics[width=3.1in]{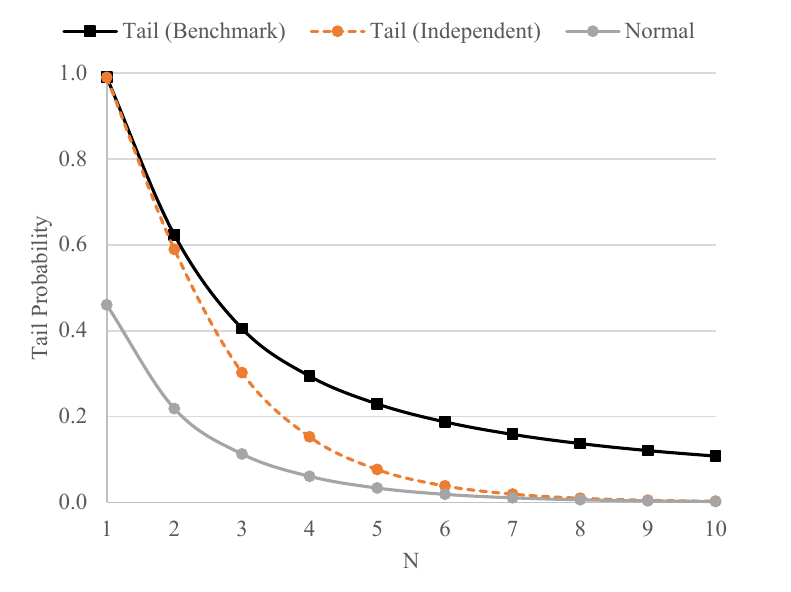}}\hspace{0.05in} 
\subfigure[Expected Linear
Loss]{\includegraphics[width=3.1in]{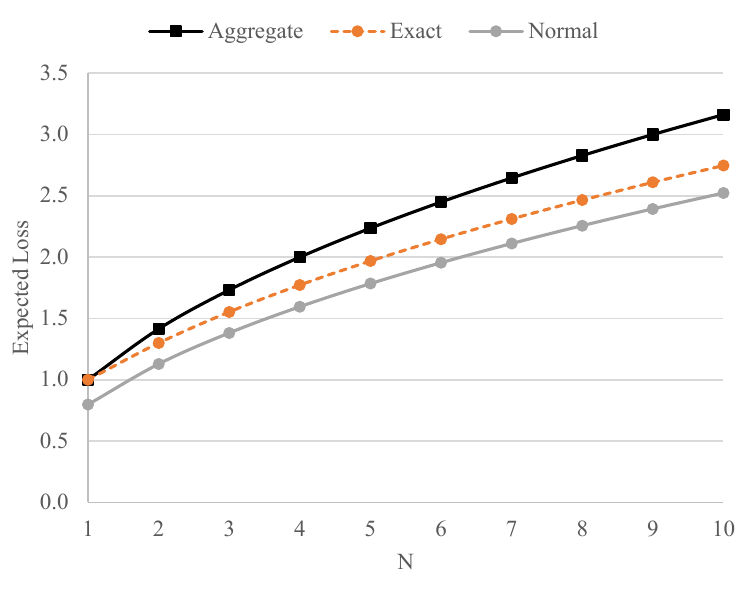}}
\end{figure}

\section{Applications\label{S:Application}}

\subsection{Bundle Pricing\label{S:Bundle}}

In the bundle pricing problem that \cite{CHP2022} studied, the firm selling $%
N$ goods in a bundle chooses a posted price $q$ and the customer's valuation
for good $n$ offered in the bundle is $X_{n}$.

\subsubsection{Equal Mean and Variance}

We first assume that $X_n$ has the same mean $\mu$ and same variance $%
\sigma^2$ for any $n$. To ensure that the firm's ex-post payoff is lower
semicontinuous, we assume that only when $X>q$, the customer buys the good;
otherwise, the customer walks away. With lower semicontinuity, the bound in
Proposition \ref{P:Tail} is attained rather than approached. With identical
mean and variance, the firm solves the following model:%
\begin{equation*}
Z=\max_{q}\left\{ q-\frac{q\sigma ^{2N}}{\left( \left( \mu -\frac{q}{N}%
\right) ^{2}+\sigma ^{2}\right) ^{N}}\right\} .
\end{equation*}

\begin{corollary}
\label{C:Bundle}(Pure Bundle Price) Let $t_{N}^{\ast }$ be the root of the
polynomial equation: 
\begin{equation}
1-2Nt^{2}-\left( t^{2}+1\right) ^{N}+t^{2}+2Nt\frac{\mu }{\sigma }%
-\allowbreak t^{2}\left( t^{2}+1\right) ^{N}=0.  \label{E:BundleN}
\end{equation}%
Then, the firm's optimal bundle price is $q_{N}^{\ast }=N\left( \mu
-t_{N}^{\ast }\sigma \right) $.
\end{corollary}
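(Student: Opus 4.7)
The plan is to solve the firm's maximization directly via a first-order condition after an enabling change of variables. Because the objective involves $(\mu - q/N)^2 + \sigma^2$ inside an $N$-th power, it is natural to substitute $t = (\mu - q/N)/\sigma$, equivalently $q = N(\mu - t\sigma)$, so that $\sigma^{2N}/\bigl((\mu - q/N)^2 + \sigma^2\bigr)^N$ collapses to $(t^2+1)^{-N}$. The objective becomes
\begin{equation*}
Z(t) = N(\mu - t\sigma)\left[1 - (t^2+1)^{-N}\right].
\end{equation*}
The relevant feasible region is $t \in (0,\mu/\sigma)$: if $t \le 0$ then $q \ge N\mu$ and the lower bound in Proposition \ref{P:Tail} is vacuous, so the worst-case expected revenue is driven to zero; if $t \ge \mu/\sigma$ then $q \le 0$. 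On the open interval, $Z$ is strictly positive and vanishes at both endpoints, so a maximizer exists in the interior.

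Next, applying the product rule yields
\begin{equation*}
\frac{dZ}{dt} = -N\sigma\left[1 - (t^2+1)^{-N}\right] + N(\mu - t\sigma) \cdot \frac{2Nt}{(t^2+1)^{N+1}}.
\end{equation*}
Setting this to zero, dividing by $N$, and multiplying through by $(t^2+1)^{N+1}/\sigma$ produces
\begin{equation*}
-(t^2+1)^{N+1} + (t^2+1) + 2Nt\,\frac{\mu}{\sigma} - 2Nt^2 = 0.
\end{equation*}
Finally, expanding $(t^2+1)^{N+1} = (t^2+1)^N + t^2(t^2+1)^N$ and rearranging reproduces equation (\ref{E:BundleN}) exactly. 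Letting $t_N^*$ denote the solution in $(0,\mu/\sigma)$, the optimal bundle price is then $q_N^* = N(\mu - t_N^*\sigma)$ by the definition of $t$.

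The main obstacle is confirming that the root $t_N^* \in (0,\mu/\sigma)$ picked out by the first-order condition is the global maximizer, and that it is unique on the feasible interval. Existence of an interior critical point is immediate from the boundary-vanishing argument above, and one can classify it as a local maximum via the sign of the second derivative. For uniqueness, I would rewrite the first-order condition in the form $\sigma\bigl[(t^2+1)^{N+1} - (t^2+1)\bigr] = 2Nt(\mu - t\sigma)$ and argue that the left-hand side is strictly increasing and convex in $t$ while the right-hand side is a concave downward parabola vanishing at $t=0$ and $t=\mu/\sigma$; this ensures the two curves intersect at a single point in $(0,\mu/\sigma)$. Since the polynomial (\ref{E:BundleN}) has degree $2N+2$ and admits no closed form for general $N$, this verification requires some care but no new ideas beyond standard single-variable analysis.
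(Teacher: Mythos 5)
Your proposal is correct and follows essentially the same route as the paper: the same substitution $q=N(\mu-t\sigma)$ reducing the objective to $N(\mu-t\sigma)\left[1-(t^{2}+1)^{-N}\right]$, the same first-order condition, and the same algebraic rearrangement into equation (\ref{E:BundleN}). Your added boundary-vanishing existence argument and the convex-versus-concave uniqueness sketch for the root in $(0,\mu/\sigma)$ are sound and in fact supply details the paper's proof leaves implicit.
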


\begin{proof}
In order to define the extreme distribution, we introduce the safety factor $%
t$ as follows:%
\begin{equation*}
\left\{ 
\begin{array}{l}
\Pr \left( \tilde{X}_{n}=\mu -t\sigma \right) =\frac{1}{1+t^{2}}\equiv \beta
, \\ 
\Pr \left( \tilde{X}_{n}=\mu +\frac{1}{t}\sigma \right) =\frac{t^{2}}{1+t^{2}%
}=1-\beta ,%
\end{array}%
\right.
\end{equation*}%
for each good. Then, for $N\geq 1$, the bundle price is $q=N\left( \mu
-t\sigma \right) $, yielding the expected profit for the bundle:%
\begin{equation}
Z_{B}\equiv N(\mu -\sigma t)\left( 1-\frac{1}{\left( t^{2}+1\right) ^{N}}%
\right) .  \label{E:ZnGood}
\end{equation}%
When using a component pricing strategy, each product yields the expected
profit $Z_{n}=(\mu -\sigma t)\left( 1-\frac{1}{t^{2}+1}\right) $. Now, it
must hold that%
\begin{equation*}
Z_{B}=N(\mu -\sigma t)\left( 1-\frac{1}{\left( t^{2}+1\right) ^{N}}\right)
\geq N(\mu -\sigma t)\left( 1-\frac{1}{t^{2}+1}\right) =NZ_{n},
\end{equation*}%
implying that pure bundling is always better than component pricing. To
determine the optimal bundle price under independence, we take the first
derivative of equation (\ref{E:ZnGood}) with respect to $t$ as follows:%
\begin{equation*}
\frac{\partial Z_B}{\partial t} =\frac{N\sigma }{\left( t^{2}+1\right) ^{n}}%
-N\sigma +2N^{2}t\frac{\mu }{\left( t^{2}+1\right) ^{N+1}}-2N^{2}t^{2}\frac{%
\sigma }{\left( t^{2}+1\right) ^{N+1}}=0,
\end{equation*}%
which is equivalent to%
\begin{equation*}
N\sigma \left( 1-2Nt^{2}-\left( t^{2}+1\right) ^{N}+t^{2}+2Nt\frac{\mu }{%
\sigma }-\allowbreak t^{2}\left( t^{2}+1\right) ^{N}\right) =0.
\end{equation*}%
We thus confirm (\ref{E:BundleN}).
\end{proof}

In contrast, if we apply the bound based on aggregation from Lemma \ref%
{L:N-NonSharp}, we solve%
\begin{equation*}
\tilde{Z}=\max_{q}\left\{ q-\frac{q\sigma ^{2}}{N\left( \mu -\frac{q}{N}%
\right) ^{2}+\sigma ^{2}}\right\} .
\end{equation*}%
This yields the bundle price $\tilde{q}_{N}=N\mu -t^{\#}\sqrt{N}\sigma $,
where $t_{N}^{\#}$ solves a cubic equation $t^{3}+3t=\frac{2N\mu }{\sqrt{N}%
\sigma }=\frac{2\sqrt{N}\mu }{\sigma }$.

\begin{table}[tbph]
\caption{Bundle Pricing Solutions}
\label{Tab:Bundle}\centering%
\begin{tabular}{c|cccccc}
$N$ & $\tilde{q}_{N}$ & $\tilde{Z}_{B}$ & $q_{N}^{\ast }$ & $Z_{B}$ & $\frac{%
q_{N}^{\ast }-\tilde{q}_{N}}{q_{N}^{\ast }}$ & $\frac{Z_{B}-\tilde{Z}_{B}}{%
Z_{B}}$ \\ \hline
$1$ & \multicolumn{1}{r}{$1.346$} & \multicolumn{1}{r}{$0.769$} & 
\multicolumn{1}{r}{$1.346$} & \multicolumn{1}{r}{$0.769$} & 
\multicolumn{1}{r}{$0.0\%$} & \multicolumn{1}{r}{$0.0\%$} \\ 
$2$ & \multicolumn{1}{r}{$3.000$} & \multicolumn{1}{r}{$2.000$} & 
\multicolumn{1}{r}{$3.000$} & \multicolumn{1}{r}{$2.250$} & 
\multicolumn{1}{r}{$0.0\%$} & \multicolumn{1}{r}{$11.1\%$} \\ 
$3$ & \multicolumn{1}{r}{$4.767$} & \multicolumn{1}{r}{$3.401$} & 
\multicolumn{1}{r}{$4.824$} & \multicolumn{1}{r}{$3.991$} & 
\multicolumn{1}{r}{$1.2\%$} & \multicolumn{1}{r}{$14.8\%$} \\ 
$4$ & \multicolumn{1}{r}{$6.602$} & \multicolumn{1}{r}{$4.903$} & 
\multicolumn{1}{r}{$6.749$} & \multicolumn{1}{r}{$5.861$} & 
\multicolumn{1}{r}{$2.2\%$} & \multicolumn{1}{r}{$16.3\%$} \\ 
$5$ & \multicolumn{1}{r}{$8.484$} & \multicolumn{1}{r}{$6.476$} & 
\multicolumn{1}{r}{$8.741$} & \multicolumn{1}{r}{$7.811$} & 
\multicolumn{1}{r}{$2.9\%$} & \multicolumn{1}{r}{$17.1\%$} \\ 
$10$ & \multicolumn{1}{r}{$18.311$} & \multicolumn{1}{r}{$14.966$} & 
\multicolumn{1}{r}{$19.234$} & \multicolumn{1}{r}{$18.144$} & 
\multicolumn{1}{r}{$4.8\%$} & \multicolumn{1}{r}{$17.5\%$} \\ 
$20$ & \multicolumn{1}{r}{$38.979$} & \multicolumn{1}{r}{$33.468$} & 
\multicolumn{1}{r}{$41.337$} & \multicolumn{1}{r}{$40.010$} & 
\multicolumn{1}{r}{$5.7\%$} & \multicolumn{1}{r}{$16.4\%$} \\ 
&  &  &  &  &  & 
\end{tabular}%
\end{table}

Let $\tilde{Z}_{B}$ denote the profit bound when bundle pricing is done
using aggregation. We contrast the two solutions for the bundle price $%
\left( \tilde{q}_{N},q_{N}^{\ast }\right) $ and the two respective optimal
objective values $\left( \tilde{Z}_{B}, Z_{B}\right) $ in Table \ref%
{Tab:Bundle}. We use the parameter values $\mu =2.5$ and $\sigma =1$. As $N$
increases, the gap between $\tilde{q}_{N}$ and $q_{N}^{\ast }$ widens and $%
\tilde{q}_{N}$ is consistently lower than $q_{N}^{\ast }$. The gap in
profits is also remarkable, and reaches 17.5\% of optimal profit level $Z_B$%
, or, equivalently, 21.2\% of the profit level obtained using the
aggregation solution. This underscores the fact that a full account of
independence has a nontrivial impact on the quality of the bundle pricing
solution.

\subsubsection{Unequal Means and Variances}

When mean and variance of customer valuations are non-identical across
different products, we apply Proposition \ref{P:EqualRange} to define%
\begin{equation*}
R=\frac{\left( \mu _{n}-q_{n}\right) ^{2}+\sigma _{n}^{2}}{\mu _{n}-q_{n}},%
\text{ for all }n\text{,}
\end{equation*}%
as the universal range of all the valuations. Using the same notation as
above, we can write the firm's objective function as follows%
\begin{equation}
Z=\max_{R,q_{n}}\left\{ \left( \sum_{n=1}^{N}q_{n}\right) \left[
1-\prod_{n=1}^{N}\left( \frac{\sigma _{n}^{2}}{\left( \mu _{n}-q_{n}\right)
^{2}+\sigma _{n}^{2}}\right) \right] \right\} .  \label{E:T-Unequal}
\end{equation}%
The constraint on the range $R$ ensures that the internal budget allocation
of $q_{n}$ maximizes the product of $\prod_{n=1}^{N}\left( \frac{\sigma
_{n}^{2}}{\left( \mu _{n}-q_{n}\right) ^{2}+\sigma _{n}^{2}}\right) $.
Additionally, the definition of $R$ implies that $q_{n}=-\frac{1}{2}R+\mu
_{n}\pm \frac{1}{2}\sqrt{R^{2}-4\sigma _{n}^{2}}$ such that $R\geq 2\sigma
_{n}$ must hold. If both roots of $q_{n}$ are positive, we take the smaller
root; if there is a negative root, we take the positive root. Due to this
inconvenience of having two possible roots, it is not recommendable to
replace all $q_{n}$ with one variable $R$ when solving (\ref{E:T-Unequal}).
Instead, it is advisable to choose $R$ endogenously by requiring that the range of each random variable equals $R$.

\subsubsection{Mixed Bundle Pricing}

In situations where the underlying distribution is known, mixed bundling
strategies will (weakly) outperform pure bundling strategies. Mixed bundling
occurs when product $n$ is offered at the same time both separately at price 
$q_{n}$ and in a complete bundle at bundle price $q_{b}$, which may or may
not be equal to the bundle price $q_{B}$ under pure bundling. It usually
holds that $q_{b}\leq \sum_{n=1}^{N}q_{n}$ in a mixed bundling strategy.
However, it turns out that in terms of the worst distribution, mixed
bundling strategy is as effective as pure bundling strategy.

\begin{corollary}
\label{C:BundleMixed}(Mixed Bundle) When the firm's objective is to maximize
the worst-case expected profit, the mixed bundling strategy is as effective
as the pure bundling strategy.
\end{corollary}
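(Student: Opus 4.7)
The direction $W_{\text{mixed}} \geq W_{\text{pure}}$ (where $W_{\text{mixed}}$ and $W_{\text{pure}}$ denote the worst-case expected profits under the two strategies) is immediate since pure bundling is the degenerate mixed strategy obtained by letting every individual price tend to $+\infty$, so the substantive content is $W_{\text{mixed}} \leq W_{\text{pure}}$. The plan is to show that for each mixed strategy $M = (q_1, \ldots, q_N, q_b)$ one can exhibit a feasible product distribution $F$---with marginals satisfying the prescribed mean and variance---under which the expected mixed-bundling revenue does not exceed $W_{\text{pure}}$. The argument splits by the sign of the bundle discount $\sum_n q_n - q_b$.

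If $q_b \geq \sum_n q_n$, the aggregate component surplus $\sum_n (X_n - q_n)^+ \geq \xi - \sum_n q_n \geq \xi - q_b$ always dominates the bundle surplus, so the customer never takes the bundle. The expected mixed revenue therefore equals $\sum_n q_n \Pr(X_n > q_n)$, and by independence nature can minimize each marginal separately via the single-variable Cantelli extreme, giving $\sum_n q_n (1 - \beta_n)$ with $\beta_n = \sigma_n^2/((\mu_n - q_n)^2 + \sigma_n^2)$. This is the component-pricing worst case, which is dominated by the pure-bundling worst case via the inequality $Z_B \geq N Z_n$ established in the proof of Corollary \ref{C:Bundle}. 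If $q_b < \sum_n q_n$, I evaluate the mixed revenue at the two-point equal-range product distribution $F^{\star}$ of Proposition \ref{P:EqualRange} that attains the pure-bundling worst case at price $q_b$, with low values $L_n^{\star}$ summing to $q_b$. A direct computation shows that the excess of the component surplus over the bundle surplus at any realization equals $\sum_{n:\,X_n = L_n^{\star}}(q_n - L_n^{\star}) - \bigl(\sum_n q_n - q_b\bigr)$, which is non-positive whenever $L_n^{\star} \leq q_n$ for every $n$ and vanishes only at the all-low realization; the customer therefore takes the bundle whenever its surplus is positive and walks away at $\xi = q_b$, yielding the mixed expected profit $q_b \bigl(1 - \prod_n \beta_n^{\star}\bigr)$, which coincides with the pure-bundling worst case at $q_b$ and is therefore bounded by $W_{\text{pure}}$.

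The main obstacle is ensuring that the Proposition \ref{P:EqualRange} equal-range allocation can be taken with $L_n^{\star} \leq q_n$ for every $n$ while remaining adversarially optimal. This holds automatically in the iid, symmetric-prices case (since $L_n^{\star} = q_b/N \leq q_n$ follows from $\sum_n q_n \geq q_b$ together with $q_n = q$ for all $n$), but in the non-identical case the constraint may bind when the firm sets some $q_n$ below the unconstrained pure-optimal low value $L_n^{\star}$. In that sub-case, I would replace $F^{\star}$ by the alternative two-point product distribution with $L_n = q_n \cdot q_b/\sum_j q_j$---which trivially satisfies $L_n \leq q_n$ and $\sum_n L_n = q_b$---and then invoke the Lagrangian characterization developed in the proof of Proposition \ref{P:EqualRange}, together with the monotonicity of $\beta_n$ in $L_n$, to show that the resulting bound $q_b\bigl(1 - \prod_n \beta_n\bigr)$ still does not exceed $W_{\text{pure}}$.
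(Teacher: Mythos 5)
Your core argument is the same as the paper's: evaluate the mixed-bundling revenue at the equal-range two-point product distribution of Proposition \ref{P:EqualRange} whose low values sum to $q_b$, observe that the component surplus never strictly exceeds the bundle surplus except on the all-low realization (where both vanish and the customer walks away), and conclude that the mixed revenue under that distribution collapses to the pure-bundling revenue at price $q_b$. Your version is in fact more explicit than the paper's one-line proof: you verify the surplus comparison by direct computation, you supply the easy direction $W_{\text{mixed}}\geq W_{\text{pure}}$, and you separate out the case $q_b\geq\sum_n q_n$ (though note that the domination $Z_B\geq NZ_n$ you invoke there is established in the paper only for identical means, variances and component prices, so that step needs its own argument in the heterogeneous case).

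However, the patch you propose for the sub-case in which the equal-range allocation violates $L_n^{\star}\leq q_n$ does not work, and the failure is directional. If you replace the adversarially optimal allocation by the feasible but suboptimal allocation $L_n=q_n q_b/\sum_j q_j$, then $\prod_n\beta_n(L_n)\leq B(q_b)\equiv\max_{\sum_n L_n=q_b}\prod_n\beta_n(L_n)$, so the revenue you exhibit, $q_b\bigl(1-\prod_n\beta_n(L_n)\bigr)$, is \emph{larger} than the pure-bundling worst case $q_b\bigl(1-B(q_b)\bigr)$ at price $q_b$; there is then no reason it should lie below $W_{\text{pure}}=\max_{q_B}q_B\bigl(1-B(q_B)\bigr)$. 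Monotonicity of $\beta_n$ in $L_n$ and the Lagrangian first-order conditions cannot rescue this, because the whole difficulty is precisely that the constrained allocation attains a strictly smaller product of the $\beta_n$ and hence a strictly larger exhibited revenue. To close this case you would need a genuinely different adversarial response (for instance, setting $L_n=q_n$ for the offending components so that they are individually unattractive under the lower-semicontinuity convention, and re-optimizing the remaining budget), or a separate argument that the firm never gains by posting some $q_n$ below $L_n^{\star}$. The paper's own proof silently assumes this sub-case away, so you have correctly identified a real subtlety --- but your proposed repair does not fill the gap.
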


\begin{proof}
As the extreme distribution displays the equal range property, we can easily
verify that the event $X_{n}>q_{n}$ but $X_{n}+\xi _{(n)}=\xi \leq q_{b}$
occurs with zero probability (i.e., the customer finds that buying only
product $n$ gives her a higher utility than buying the bundle). Therefore,
the firm's worst-case expected profit under mixed bundling with bundle price 
$q_{b} $ is identical to that under pure bundling with bundle price $q_{B}$.
We conclude that mixed bundling strategies do not improve the firm's
worst-case expected profit when valuations for each good are independent.
\end{proof}

Due to Corollary \ref{C:BundleMixed}, we either apply Proposition \ref%
{C:Bundle} or Equation (\ref{E:T-Unequal}) to determine the pure bundling
price, depending on whether or not the mean and variance are identical
across $n$. These results stand in contrast to those obtained by \cite%
{E2010JEMS} and \cite{B2013MSBundle} who advocate mixed bundling under
uniform distributions. Under the uniform distribution and zero production
costs with two products ($N=2$), both \citeauthor{E2010JEMS} and %
\citeauthor{B2013MSBundle} show that (i) the optimal mixed bundling is to
charge $q_{n}=\frac{2}{3}$ for product $n$ and $q_{b}=\frac{4-\sqrt{2}}{3}$
for both products; and (ii) the pure bundling strategy is to charge $q_{B}=%
\sqrt{\frac{2}{3}}$ for the bundle (and the price for individual product is
set at $q_{n}=1$ so that no customer buys only product $n$).

These distribution-specific strategies break down under the corresponding
extreme distributions with the same mean and variance. For example, if we
use $\mu =0.5 $ and $\sigma =\sqrt{\frac{1}{12}}$ as inputs in our
semiparametric analysis, we find that under either mixed or pure bundling
strategy where the bundle price is the same, i.e., $q_{B}=q_{b}$, the firm's
most unfavorable distribution remains the same, making the mixed and pure
bundling strategies equally profitable. The intuition is that the joint
distribution forms a square, as suggested by Proposition \ref{P:EqualRange},
so that the event that a customer buys only product $n$ does not occur. 
The optimal pricing strategies of \cite{E2010JEMS} and \cite{B2013MSBundle}
can be shown to be suboptimal under extreme distributions. Specifically,
when $q_{b}=\frac{4-\sqrt{2}}{3}$, the firm's expected profit under the
extreme distribution equals $0.086$; and when $q_{B}=\sqrt{\frac{2}{3}}$,
the firm's expected profit under the extreme distribution equals $0.137$. In
contrast, when using the robust bundling strategy, the bundle price is $%
q^{\ast }=0.527$ and the firm's expected profit is $0.338$. Additionally,
when $q^{\ast }=0.527$ is used under the uniform distribution of valuations,
the firm's expected profit equals $0.454$. We can conclude that the
distribution-specific prices are too high while the robust bundle price
provides a much better guarantee. 
As an additional benefit, our method can easily scale to an arbitrary number
of products.

\subsection{Inventory Management}

Suppose that a firm that owns a central warehouse chooses an inventory level 
$q$ prior to receiving the realized demand $X_{n}$ from retailer $n$. Each
retailer is treated equally with the same understocking and overstocking
costs $b$ and $h$, respectively. Thus, the choice of $q$ is equivalent to
choosing a forecast for $\xi $ subject to a generalized linear scoring rule.
The ex-post loss function can be written as follows:%
\begin{equation*}
Z\left( q,\xi \right) =b\left( \xi -q\right) ^{+}+h\left( q-\xi \right) ^{+}=%
\frac{h+b}{2}|\xi -q|+\frac{b-h}{2}\left( \xi -q\right) .
\end{equation*}%
According to Proposition \ref{P:ExpectedLoss}, the firm solves the following
problem:%
\begin{equation*}
Z=\min_{q}\left\{ \frac{\left( b+h\right) }{2}\max_{\beta }\left[ N\mu
-q+2\beta ^{N}\left( q-N\mu \right) +2N\beta ^{N}\sigma \sqrt{\frac{1-\beta 
}{\beta }}\right] +\frac{\left( b-h\right) }{2}\left( N\mu -q\right)
\right\} ,
\end{equation*}%
which represents a zero-sum game between the firm and adverse nature. The
firm chooses $q$ to minimize the cost $T(\beta ,q)$ but adverse nature
chooses $\beta $ to maximize the cost.

\begin{corollary}
\label{P:Backorder}(Inventory Risk-Pooling) Let $\beta ^{\ast }=\left( \frac{%
b}{b+h}\right) ^{\frac{1}{N}}$. 
If $\frac{b}{b+h}\geq \frac{1}{2}$, then the firm's most unfavorable
distribution is the following two-point distribution:%
\begin{equation}
\left\{ 
\begin{array}{l}
\Pr \left( \tilde{X}=\mu -\sigma \sqrt{\frac{1-\beta ^{\ast }}{\beta ^{\ast }%
}}\right) =\beta ^{\ast }, \\ 
\Pr \left( \tilde{X}=\mu +\sigma \sqrt{\frac{\beta ^{\ast }}{1-\beta ^{\ast }%
}}\right) =1-\beta ^{\ast },%
\end{array}%
\right.  \label{E:F-Backorder}
\end{equation}%
and the firm's optimal inventory level equals:%
\begin{equation}
q^{\ast }=N\mu +\sigma \left( \frac{2\beta ^{\ast }-1}{2\sqrt{(1-\beta
^{\ast })\beta ^{\ast }}}-\left( N-1\right) \sqrt{\frac{1-\beta ^{\ast }}{%
\beta ^{\ast }}}\right),  \label{E:BaseStock}
\end{equation}%
and the firm's optimal objective value equals:%
\begin{equation}
Z^{\ast }=b\sigma N\sqrt{\frac{1-\beta ^{\ast }}{\beta ^{\ast }}}.
\label{E:VarlueOfGame}
\end{equation}
\end{corollary}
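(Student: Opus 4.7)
The plan is to show that the stated distribution and inventory level form a saddle point of the explicit minimax problem set up in the corollary. First, I would use the elementary identity $b(\xi-q)^{+}+h(q-\xi)^{+}=\tfrac{b+h}{2}|\xi-q|+\tfrac{b-h}{2}(\xi-q)$ to rewrite the expected cost as $\tfrac{b+h}{2}\mathbb{E}|\xi-q|+\tfrac{b-h}{2}(N\mu-q)$. Applying Proposition \ref{P:ExpectedLoss} to the recentered variables $X_{n}-q/N$, together with the identity $\mathbb{E}|\xi-q|=2\mathbb{E}(\xi-q)^{+}-(N\mu-q)$, converts the $Z_{1}$-piece formula (the one appropriate when the firm overstocks, i.e., $q>N\mu$) into precisely the bracketed expression already written in the corollary's reformulation. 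Thus the firm's problem reduces to the announced zero-sum game over $(q,\beta)$, and the extremal family over which nature maximises is the endogenous two-point distribution (\ref{E:F-2Point}) parametrised by $\beta$.

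Next, I would identify the saddle point by first-order conditions. With the slack $y=q-N\mu$, the inner expression simplifies to $F(y,\beta)=-by+(b+h)\beta^{N}y+(b+h)N\sigma\beta^{N-1/2}(1-\beta)^{1/2}$, which is affine in $y$. The condition $\partial F/\partial y=0$ forces $(b+h)\beta^{N}=b$, yielding $\beta^{*}=(b/(b+h))^{1/N}$; the hypothesis $b/(b+h)\ge 1/2$ then gives $\beta^{*}\ge 1/2$, which is sufficient to keep $\beta^{*}$ inside the rightmost interval $[\delta_{1},1]$ where the $Z_{1}$-piece of Lemma \ref{L:PieceWiseObj} is the active form. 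Differentiating the kernel $\beta^{N-1/2}(1-\beta)^{1/2}$ and clearing denominators, $\partial F/\partial\beta=0$ yields $y^{*}=\sigma\,(2N\beta^{*}-(2N-1))/(2\sqrt{\beta^{*}(1-\beta^{*})})$; splitting the numerator as $(2\beta^{*}-1)-2(N-1)(1-\beta^{*})$ rearranges this to exactly (\ref{E:BaseStock}). Substituting $(q^{*},\beta^{*})$ back into $F$, the two $y$-terms cancel because $(b+h)\beta^{*N}=b$, leaving $F^{*}=bN\sigma\sqrt{(1-\beta^{*})/\beta^{*}}$, which is (\ref{E:VarlueOfGame}).

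The main obstacle will be justifying that the stationary pair is genuinely a saddle, so that $\min_{q}\max_{\beta}F=\max_{\beta}\min_{q}F=F(q^{*},\beta^{*})$. Since $F$ is affine in $y$ with slope $-b+(b+h)\beta^{N}$, the inner minimisation $\min_{y}F$ is finite only when this slope vanishes; this uniquely selects $\beta^{*}$ and gives the value $bN\sigma\sqrt{(1-\beta^{*})/\beta^{*}}$ independently of $y$, handling the max-min ordering. For the reverse ordering, I would verify by a direct second-derivative computation that $\beta\mapsto F(y,\beta)$ is strictly concave on $[\delta_{1},1]$, a property inherited from the Bernstein-type kernel $\beta^{N-1/2}(1-\beta)^{1/2}$, so the interior critical $\beta$ from the first-order condition is the unique inner maximiser. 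Both orderings then deliver $F(q^{*},\beta^{*})$, so the pair is a saddle, and the extremal distribution realising nature's worst case is the endogenous two-point distribution (\ref{E:F-2Point}) evaluated at $\beta=\beta^{*}$, which is exactly (\ref{E:F-Backorder}).
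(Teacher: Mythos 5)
Your proposal is correct and follows essentially the same route as the paper: the same identity $b(\xi-q)^{+}+h(q-\xi)^{+}=\tfrac{b+h}{2}|\xi-q|+\tfrac{b-h}{2}(\xi-q)$, the same reduction to the zero-sum game via Proposition \ref{P:ExpectedLoss}, the same first-order conditions $(b+h)\beta^{N}=b$ and the $\beta$-stationarity giving (\ref{E:BaseStock}), and the same cancellation yielding (\ref{E:VarlueOfGame}). Your only additions --- the substitution $y=q-N\mu$ and the explicit saddle-point argument exploiting affineness in $y$ and concavity of the Bernstein kernel in $\beta$ --- are a welcome tightening of the paper's brief remark that ``$\partial^{2}Z/\partial q^{2}=0$'' makes the second-order verification easy, but they do not change the substance of the argument.
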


\begin{proof}
As $\frac{b}{b+h}\geq 0.5$, the firm orders more inventory than the
aggregate mean, giving rise to the case of $N\mu <q$. Using the identity
that $\left( \xi -q\right) ^{+}$ $=\frac{\xi -q}{2}+\frac{1}{2}\left\vert
\xi -q\right\vert $ and Proposition \ref{P:ExpectedLoss}, we find that the
payoff function equals%
\begin{equation*}
Z\left( \beta ,q\right) =\frac{\left( b+h\right) }{2}\left[ N\mu -q+2\beta
^{N}\left( q-N\mu \right) +2N\beta ^{N}\sigma \sqrt{\frac{1-\beta }{\beta }}%
\right] +\frac{\left( b-h\right) }{2}\left( N\mu -q\right) .
\end{equation*}%
We solve the first-order conditions to determine the saddle point:%
\begin{eqnarray*}
\frac{\partial Z}{\partial q} &=&\left( h+b\right) \beta ^{N}-b=0 \\
\frac{\partial Z}{\partial \beta } &=&\left( h+b\right) \left[ N\beta
^{N-1}\left( q-N\mu \right) +\frac{N\sigma \beta ^{N-1}\left( 1-2\beta
\right) }{2\sqrt{\left( 1-\beta \right) \beta }}+(N-1)N\beta ^{N-2}\sigma 
\sqrt{\left( 1-\beta \right) \beta }\right] =0
\end{eqnarray*}%
The first condition $\frac{\partial Z}{\partial q}=0$ immediately yields
Equation (\ref{E:F-Backorder}). With some algebra, we find that the second
condition $\frac{\partial Z}{\partial \beta }=0$, along with Equation (\ref%
{E:F-Backorder}), yields that%
\begin{equation*}
\left( q-N\mu \right) 2\sqrt{\left( 1-\beta \right) \beta }+\sigma \left(
1-2\beta \right) +2(N-1)\sigma \left( 1-\beta \right) =0.
\end{equation*}%
After rearranging the terms, we confirm Equation (\ref{E:BaseStock}).
Because $\frac{\partial ^{2}Z}{\partial q^{2}}=0$, it is easy to verify the
second-order conditions to confirm that the pair $(\beta ^{\ast },q^{\ast })$
constitutes a saddle point. Finally, substituting $\beta ^{\ast }$ and $%
q^{\ast }$ in $Z(\beta ,q)$, we find that the value of the zero-sum game
equals $Z^{\ast }=b\sigma N\sqrt{\frac{1-\beta ^{\ast }}{\beta ^{\ast }}}$.
\end{proof}

To connect this result to the forecasting literature in econometrics, we
note that the linear loss constitutes a strictly proper scoring rule for
forecasting quantiles \citep[see, e.g.,][Theorem 6]{GR2007JaSa}. By
maximizing the expected score, a forecaster makes an honest forecast, which
is why proper scoring rules are widely used for measuring out-of-sample
forecast performance in many applications, e.g., the check loss function in
financial econometrics. Similarly, the optimal inventory level $q^{\ast}$
shown in Equation (\ref{E:BaseStock}) can be viewed as the robust optimal
forecast using an asymmetric piecewise linear scoring rule.

If we apply the non-sharp bound from Lemma \ref{L:N-NonSharp}, then we solve 
\begin{equation*}
\tilde{Z}=\min_{q}\left\{ \frac{\left( b+h\right) }{2}\sqrt{N^{2}\left( \mu -%
\frac{q}{N}\right) ^{2}+N\sigma ^{2}}+\frac{\left( b-h\right) }{2}\left(
N\mu -q\right) \right\} ,
\end{equation*}%
and find $\tilde{q}=N\mu +\frac{\sqrt{N}\sigma }{2}\left( \sqrt{\frac{b}{h}}-%
\sqrt{\frac{h}{b}}\right) $ and $\tilde{Z}=\sigma \sqrt{Nbh}$. In Table \ref%
{Tab:Inventory}, we contrast the two solutions $\left( \tilde{q},q^{\ast
}\right) $ and the optimal costs $\left( \tilde{Z},Z^{\ast }\right) $ using
the parameter values $\mu =2.5$, $\sigma =1$, $b=4$, and $h=1$. It can be
seen from the table that as $N$ increases, the gap between $\tilde{q}$ and $%
q^{\ast }$ remains moderate, with $\tilde{q}$ consistently exceeding $%
q^{\ast }$. When compared in terms of expected costs, the gap is more
visible and reaches $5.6\%$ of the optimal cost level. This underscores the
non-trivial effect that the independence constraint has on the quality of
solutions.

\begin{table}[tbph]
\caption{Inventory Solutions}
\label{Tab:Inventory}\centering%
\begin{tabular}{c|cccccc}
$N$ & $\tilde{q}$ & $\tilde{Z}$ & $q^{\ast }$ & $Z^{\ast }$ & $\frac{\tilde{q%
}-q^{\ast }}{q^{\ast }}$ & $\frac{\tilde{Z}-Z^{\ast }}{Z^{\ast }}$ \\ \hline
$1$ & \multicolumn{1}{r}{$3.250$} & \multicolumn{1}{r}{$2.000$} & 
\multicolumn{1}{r}{$3.250$} & \multicolumn{1}{r}{$2.000$} & 
\multicolumn{1}{r}{$0.0\%$} & \multicolumn{1}{r}{$0.0\%$} \\ 
$2$ & \multicolumn{1}{r}{$6.061$} & \multicolumn{1}{r}{$2.828$} & 
\multicolumn{1}{r}{$5.940$} & \multicolumn{1}{r}{$2.748$} & 
\multicolumn{1}{r}{$2.0\%$} & \multicolumn{1}{r}{$2.9\%$} \\ 
$3$ & \multicolumn{1}{r}{$8.799$} & \multicolumn{1}{r}{$3.464$} & 
\multicolumn{1}{r}{$8.605$} & \multicolumn{1}{r}{$3.335$} & 
\multicolumn{1}{r}{$2.3\%$} & \multicolumn{1}{r}{$3.9\%$} \\ 
$4$ & \multicolumn{1}{r}{$11.500$} & \multicolumn{1}{r}{$4.000$} & 
\multicolumn{1}{r}{$11.249$} & \multicolumn{1}{r}{$3.832$} & 
\multicolumn{1}{r}{$2.2\%$} & \multicolumn{1}{r}{$4.4\%$} \\ 
$5$ & \multicolumn{1}{r}{$14.177$} & \multicolumn{1}{r}{$4.472$} & 
\multicolumn{1}{r}{$13.879$} & \multicolumn{1}{r}{$4.273$} & 
\multicolumn{1}{r}{$2.1\%$} & \multicolumn{1}{r}{$4.7\%$} \\ 
$10$ & \multicolumn{1}{r}{$27.372$} & \multicolumn{1}{r}{$6.325$} & 
\multicolumn{1}{r}{$26.901$} & \multicolumn{1}{r}{$6.009$} & 
\multicolumn{1}{r}{$1.7\%$} & \multicolumn{1}{r}{$5.3\%$} \\ 
$20$ & \multicolumn{1}{r}{$53.354$} & \multicolumn{1}{r}{$8.944$} & 
\multicolumn{1}{r}{$52.655$} & \multicolumn{1}{r}{$8.474$} & 
\multicolumn{1}{r}{$1.3\%$} & \multicolumn{1}{r}{$5.6\%$} \\ 
&  &  &  &  &  & 
\end{tabular}%
\end{table}

\subsection{Option Pricing}

Suppose that the price change of a trading asset in day $n$ from the
starting value of zero is $X_{n}$. Thus, after $N$ trading days, the price
of the asset becomes $\xi =X_{1}+X_{2}+\ldots +X_{N}$. For simplicity, we
assume that the asset brings no dividends and consider an European call
option on this asset with strike price $q$ and maturity in $N$ days. The
expected pay-off of the option is $\mathbb{E}\left( \xi -q\right) ^{+}$ and
risk-neutral price is $e^{-rN}\mathbb{E}\left( \xi -q\right) ^{+}$, where $r$
is the risk-free rate. For $q>\frac{\mu ^{2}+\sigma ^{2}}{2\mu }$, \cite%
{LO1987} applies Scarf's inequality to show that
\begin{equation*}
\mathbb{E}\left( \xi -q\right) ^{+}\leq \frac{1}{2}(\mu -q+\sqrt{(q-\mu
)^{2}+\sigma ^{2}}),
\end{equation*}%
which coincides with the aggregation benchmark established in Lemma \ref%
{L:N-NonSharp} of Section \ref{SS:aggregate}.

In practice of options pricing, it is common to use simulations from a
specified parametric distribution. However, this can be very restrictive
becuase an incorrect prior distribution can cause significant losses, as
illustrated by the bundle pricing example. The improved bounds in
Propositions \ref{P:Tail} and \ref{P:ExpectedLoss} are essential in this
setting since traders need a robust estimate of the probability of reaching
the strike price $\xi >q$ 
and a robust estimate of $\mathbb{E}\left\vert \xi -q\right\vert $. 
While \cite{COX1979} introduced the binomial option pricing model,
Proposition \ref{P:ExpectedLoss} complements the seminal work of \citeauthor{COX1979} by deriving 
an upper bound on $\mathbb{E(}\xi -q)^{+}$. Moreover, the introduction of
unequal means and variances, i.e., the scenario with price shocks and price
dynamics, can be achieved by using Equation (\ref{E:NonEqualLoss}) to choose
the parameters for the the binomial option pricing model of \cite{COX1979}.
The equal range property ensures that the lattices are squares with the same
size despite unequal mean and variance. 

As an empirical example, we consider the share price of National Australia
Bank Ltd. (NAB.ASX), which is one of the four largest banks in the country.
We use the data from 10 May to 17 November 2023 as the training data to
compute the mean and standard deviation. This training period happens to
exclude any dividend payments. The average price change on each trading day
is AUD $0.0194$ and standard deviation is AUD $0.2752$. As of 10 May, the
closing price was AUD $26.26$. We assume the strike price is at $q=28.8$ and
we set the discount rate at $r=0$ for convenience. Table \ref%
{Tab:OptionPrices} reports the prices of an European call option computed
for different expiries using three methods, namely, (i) Aggregation, using
the benchmark from Lemma \ref{L:N-NonSharp}, (ii) Improved, using the bound
from Proposition \ref{P:ExpectedLoss}, (iii) Normal Prior, using the
normal distribution with mean $0.0194$ and standard deviation $0.2752$, and (iv) the Black-Scholes solution as presented by \citet[][Tables 1a-1c]{LO1987}.

\begin{table}[tbph]
\caption{Option Prices of NAB Ltd.}
\label{Tab:OptionPrices}\centering
\begin{tabular}{c|ccccc}
Strike Price $q=28.8$ & $N=10$ & $N=30$ & $N=60$ & $N=100$ & $N=200$ \\ 
\hline
Aggregation & $0.078$ & $0.256$ & $0.580$ & $1.108$ & $2.727$ \\ 
Improved & $0.077$ & $0.245$ & $0.532$ & $0.971$ & $2.572$ \\ 
Normal Prior & $0.001$ & $0.069$ & $0.333$ & $0.823$ & $2.313$ \\ 
Black-Scholes & 0.011 & 0.143  & 0.398 & 0.723 & 1.430 
\end{tabular}%
\end{table}

A quick observation reveals that the aggregation-based bound proposed by 
\cite{LO1987} tends to overprice the European call option, while the normal
assumption and Black-Scholes solution result in significant underpricing. 
The improved pricing remains below the aggregation benchmark and above the other two benchmarks for all $N$. From
the computational perspective, the closed form expression and high accuracy
make Proposition \ref{P:ExpectedLoss} an attractive alternative to many
convex algorithms \citep[see][for
updated literature in this area]{JoC2023}. In general, Equation (\ref%
{E:BetaN1}) of Proposition \ref{P:ExpectedLoss} appears more suitable for
European call options as their strike price is often higher than the mean
price while Equation (\ref{E:BetaN2}) of Proposition \ref{P:ExpectedLoss} is
more suitable for European put options as their strike price is often lower
than the mean price.

\section{Conclusion\label{S:Conclusion}}

We develop two sets of results associated with the sum of independent random
variables using only the mean and variance. The results complement earlier
Chebyshev-type results such as \cite{Bentkus2004} and \cite{PIJ2004} 
and provide important new insights, proof strategies and tighter bounds than
those obtained by aggregation. We show significant improvements arising from
using the new bounds in such popular managerial applications as bundle pricing,
inventory management and option pricing.

\ACKNOWLEDGMENT{Please address all correspondence to Artem Prokhorov. Helpful comments from
Rustam Ibragimov and Chung Piaw Teo are gratefully acknowledged.}






\renewcommand{\theequation}{A-\arabic{equation}} 
\setcounter{equation}{0} 


\bibliographystyle{apalike}
\bibliography{BoundIndependentVariable}

\begin{thebibliography}{}

\bibitem[Axaster, 2000]{A2000Book}
Axaster, S. (2000).
\newblock {\em Inventory Control}.
\newblock Springer, London, United Kingdom.

\bibitem[Azuma, 1967]{Azuma1967}
Azuma, K. (1967).
\newblock Weighted sums of certain dependent random variables.
\newblock {\em Tohoku Mathematical Journal}, 19(3):357--367.

\bibitem[Bentkus, 2004]{Bentkus2004}
Bentkus, V. (2004).
\newblock {On Hoeffding’s inequalities}.
\newblock {\em The Annals of Probability}, 32(2):1650 -- 1673.

\bibitem[Bertsimas and Popescu, 2002]{BP2002option}
Bertsimas, D. and Popescu, I. (2002).
\newblock On the relation between option and stock prices: A convex
  optimization approach.
\newblock {\em Operations Research}, 50(2):358--374.

\bibitem[Bhargava, 2013]{B2013MSBundle}
Bhargava, H.~K. (2013).
\newblock Mixed bundling of two independently valued goods.
\newblock {\em Management Science}, 59(9):2170--2185.

\bibitem[Billingsley, 1995]{billingsley:95}
Billingsley, P. (1995).
\newblock {\em Probability and Measure}.
\newblock Wiley Series in Probability and Statistics. Wiley.

\bibitem[Broadie and Detemple, 2004]{broadie/detemple:04}
Broadie, M. and Detemple, J.~B. (2004).
\newblock Anniversary article: Option pricing: Valuation models and
  applications.
\newblock {\em Management Science}, 50(9):1145--1177.

\bibitem[Carrasco et~al., 2018]{2018selltomoment}
Carrasco, V., Luz, V.~F., Kos, N., Messner, M., Monteiro, P., and Moreira, H.
  (2018).
\newblock Optimal selling mechanisms under moment conditions.
\newblock {\em Journal of Economic Theory}, 177(9):245--279.

\bibitem[Chen et~al., 2022]{CHP2022}
Chen, H., Hu, M., and Perakis, G. (2022).
\newblock Distribution-free pricing.
\newblock {\em Manufacturing \& Service Operations Management}, 24:1939--1958.

\bibitem[Chernoff, 1952]{chernoff52}
Chernoff, H. (1952).
\newblock A measure of asymptotic efficiency for tests of a hypothesis based on
  the sum of observations.
\newblock {\em The Annals of Mathematical Statistics}, 23(4):493 -- 507.

\bibitem[Cox et~al., 1979]{COX1979}
Cox, J.~C., Ross, S.~A., and Rubinstein, M. (1979).
\newblock Option pricing: a simplified approach.
\newblock {\em Journal of Financial Economics}, 7(3):229--263.

\bibitem[de~la Peña, 1999]{delapena:99}
de~la Peña, V.~H. (1999).
\newblock A general class of exponential inequalities for martingales and
  ratios.
\newblock {\em The Annals of Probability}, 27(1):537--564.

\bibitem[de~la Peña et~al., 2004]{PIJ2004}
de~la Peña, V.~H., Ibragimov, R., and Jordan, S. (2004).
\newblock Option bounds.
\newblock {\em Journal of Applied Probability}, 41(A):145–156.

\bibitem[de~la Peña et~al., 2009]{delaPena/etal:09}
de~la Peña, V.~H., Lai, T.~L., and Shao, Q.-M. (2009).
\newblock {\em Self-Normalized Processes: Limit Theory and Statistical
  Applications}.
\newblock Springer Berlin Heidelberg, Berlin, Heidelberg.

\bibitem[Eckalbar, 2010]{E2010JEMS}
Eckalbar, J.~C. (2010).
\newblock Closed-form solutions to bundling problems.
\newblock {\em Journal of Economics \& Management Strategy}, 19(2):513--544.

\bibitem[Freedman, 1975]{freedman:75}
Freedman, D.~A. (1975).
\newblock {On Tail Probabilities for Martingales}.
\newblock {\em The Annals of Probability}, 3(1):100 -- 118.

\bibitem[Garrido et~al., 2016]{GARRIDO/etal:16}
Garrido, J., Genest, C., and Schulz, J. (2016).
\newblock Generalized linear models for dependent frequency and severity of
  insurance claims.
\newblock {\em Insurance: Mathematics and Economics}, 70:205--215.

\bibitem[Gneiting and Raftery, 2007]{GR2007JaSa}
Gneiting, T. and Raftery, A.~E. (2007).
\newblock Strictly proper scoring rules, prediction, and estimation.
\newblock {\em Journal of the American Statistical Association},
  102(477):359--378.

\bibitem[Guo et~al., 2023]{GQWWZ2023}
Guo, J., Qiu, H., Wang, Z., Wang, Z., and Zhang, X. (2023).
\newblock Distributionally robust optimization under mean-covariance ambiguity
  set and half-space support for bivariate problems.

\bibitem[Henrion et~al., 2023]{JoC2023}
Henrion, D., Kirschner, F., De~Klerk, E., Korda, M., Lasserre, J.-B., and
  Magron, V. (2023).
\newblock Revisiting semidefinite programming approaches to options pricing:
  Complexity and computational perspectives.
\newblock {\em INFORMS Journal on Computing}, 35(2):335--349.

\bibitem[Hettich and Kortanek, 1993]{hettich1993semi}
Hettich, R. and Kortanek, K.~O. (1993).
\newblock Semi-infinite programming: Theory, methods, and applications.
\newblock {\em SIAM Review}, 35(3):380--429.

\bibitem[Hoeffding, 1963]{Hoeffding1963}
Hoeffding, W. (1963).
\newblock Probability inequalities for sums of bounded random variables.
\newblock {\em Journal of the American Statistical Association},
  58(301):13--30.

\bibitem[Ibragimov and Walden, 2010]{ibragimov/walden:10}
Ibragimov, R. and Walden, J. (2010).
\newblock Optimal bundling strategies under heavy-tailed valuations.
\newblock {\em Management Science}, 56(11):1963--1976.

\bibitem[Jansen et~al., 1986]{JANSEN1986}
Jansen, K., Haezendonck, J., and Goovaerts, M. (1986).
\newblock Upper bounds on stop-loss premiums in case of known moments up to the
  fourth order.
\newblock {\em Insurance: Mathematics and Economics}, 5(4):315--334.

\bibitem[Li and Kirshner, 2021]{LK2021}
Li, Z. and Kirshner, S.~N. (2021).
\newblock Salesforce compensation and two-sided ambiguity: Robust moral hazard
  with moment information.
\newblock {\em Production and Operations Management}, 30(9):2944--2961.

\bibitem[Li and Liang, 2025]{LL2024}
Li, Z. and Liang, G. (2025).
\newblock Robustly optimal contracts for agricultural supply chains.
\newblock {\em Management Science}, Accepted.

\bibitem[Lo, 1987]{LO1987}
Lo, A.~W. (1987).
\newblock Semi-parametric upper bounds for option prices and expected payoffs.
\newblock {\em Journal of Financial Economics}, 19(2):373--387.

\bibitem[Mallows, 1956]{M1956}
Mallows, C.~L. (1956).
\newblock Generalizations of {T}chebycheff's inequalities.
\newblock {\em Journal of the Royal Statistical Society: Series B
  (Methodological)}, 18(2):139--168.

\bibitem[Marinelli, 2024]{marinelli:24}
Marinelli, C. (2024).
\newblock On some semi-parametric estimates for european option prices.
\newblock {\em Journal of Applied Probability}, page 1–11.

\bibitem[Mattner, 2003]{M2003MAD}
Mattner, L. (2003).
\newblock {Mean absolute deviations of sample means and minimally concentrated
  binomials}.
\newblock {\em The Annals of Probability}, 31(2):914 -- 925.

\bibitem[McDiarmid, 1989]{McDiarmid1989}
McDiarmid, C. (1989).
\newblock On the method of bounded differences.
\newblock In {\em Surveys in Combinatorics}, London Mathematical Society
  Lectures Notes 141, pages 148--188. Cambridge University Press, Cambridge,
  UK.

\bibitem[Mitrinović et~al., 1993]{MPF1993}
Mitrinović, D., Pečarić, J.~E., and Fink, A.~M. (1993).
\newblock {\em Classical and New Inequalities in Analysis}.
\newblock Kluwer Academic Publishers, Springer Dordrecht, Amsterdam.

\bibitem[Owhadi et~al., 2013]{OSSMO2013Siam}
Owhadi, H., Scovel, C., Sullivan, T.~J., McKerns, M., and Ortiz, M. (2013).
\newblock Optimal uncertainty quantification.
\newblock {\em SIAM Review}, 55(2):271--345.

\bibitem[Paris, 2005]{PARIS:05}
Paris, F.~M. (2005).
\newblock Selecting an optimal portfolio of consumer loans by applying the
  state preference approach.
\newblock {\em European Journal of Operational Research}, 163(1):230--241.

\bibitem[Pinelis, 1994]{pinelis:94}
Pinelis, I. (1994).
\newblock Optimum bounds for the distributions of martingales in banach spaces.
\newblock {\em The Annals of Probability}, 22(4):1679 -- 1706.

\bibitem[Scarf, 2002]{Scarf2002OR}
Scarf, H.~E. (2002).
\newblock Inventory theory.
\newblock {\em Operations Research}, 50(1):186--191.

\bibitem[Smith, 1995]{JSmith1995}
Smith, J.~E. (1995).
\newblock Generalized chebychev inequalities: Theory and applications in
  decision analysis.
\newblock {\em Operations Research}, 43(5):807--825.

\bibitem[Van~Parys et~al., 2016]{van2016generalized}
Van~Parys, B.~P., Goulart, P.~J., and Kuhn, D. (2016).
\newblock Generalized gauss inequalities via semidefinite programming.
\newblock {\em Mathematical Programming}, 156:271--302.

\bibitem[Wang and Cai, 2024]{wang/cai:23}
Wang, W. and Cai, G.~G. (2024).
\newblock Curtailing bank loan and loan insurance under risk regulations in
  supply chain finance.
\newblock {\em Management Science}, 70(4):2682--2698.

\end{thebibliography}



\end{document}



\RUNAUTHOR{Li and Prokhorov}

\RUNTITLE{Supplement}

\ARTICLEAUTHORS{\AUTHOR{Zhaolin Li}
\AFF{The University of Sydney Business School, Sydney, NSW2006, Australia, \EMAIL{erick.li@sydney.edu.au}}
\AUTHOR{Artem Prokhorov}
\AFF{The University of Sydney Business School \& CEBA \& CIREQ, Sydney, NSW2006, Australia, \EMAIL{artem.prokhorov@sydney.edu.au}}
} 

\TITLE{Supplement to \\
``Tail Probability and Expected Loss Revisited: \\ Theory and Applications of Semiparametric Bounds''}

\ABSTRACT{This supplement contains technical proofs that help obtain the results in Section 4 of the paper.}



\title{}
\author{}
\maketitle

\bigskip

\renewcommand{\theequation}{A-\arabic{equation}} 
\setcounter{equation}{0} 


\subsection*{Proof of Lemma 3
:}

As a direct result of Corollary 1
, it must hold that $\mu
-\sigma \sqrt{\frac{1-\beta }{\beta }}\leq 0\leq \mu +\sigma \sqrt{\frac{%
\beta }{1-\beta }}$ (otherwise $\beta =\Pr \left( X\leq 0\right) $ cannot be
feasible). Under the extreme distribution in Equation (4.1)
,
the integrand $A_{1}$ equals zero if $X=X^{\prime }$; otherwise, $A_{1}$
equals $H-L=\sigma \sqrt{\frac{\beta }{1-\beta }}+\sigma \sqrt{\frac{1-\beta 
}{\beta }}=\sigma \sqrt{\beta (1-\beta )}$, which is the range of the
two-point distribution. We confirm that the summation equals%
\begin{eqnarray*}
T &=&\sum_{X}\sum_{X^{\prime }}\left[ \left( X-X^{\prime }\right) \left( 
\mathbb{I}_{\{X>0\}}-\mathbb{I}_{\{X^{\prime }>0\}}\right) \right] \Pr
\left( X\right) \Pr \left( X^{\prime }\right) \\
&=&2\left( \sigma \sqrt{\frac{\beta }{1-\beta }}+\sigma \sqrt{\frac{1-\beta 
}{\beta }}\right) \beta \left( 1-\beta \right) =2\sigma \sqrt{\beta -\beta
^{2}},
\end{eqnarray*}%
yielding inequality (4.2).
$\blacksquare $

\subsection*{Proof of Lemma 4
:}

We provide two different proofs of Lemma 4
. The first
proof directly uses the probability mass function of the endogenous Binomial
distribution while the second proof uses Korkine's identity.

\begin{proof}
Using the definition of $\delta _{k}$ and the probability mass function in (4.4)
, we find that%
\begin{equation*}
\mathbb{E}(\xi )^{+}\equiv Z\left( \beta \right) =\sum_{t=k}^{N}\left[ N\mu
+\sigma \left( -\left( N-t\right) \sqrt{\frac{1-\beta }{\beta }}+t\sqrt{%
\frac{\beta }{1-\beta }}\right) \right] \frac{N!\left( 1-\beta \right)
^{t}\beta ^{N-t}}{t!(N-t)!}.
\end{equation*}%
We define a sequence of $\left\{ y_{t}\right\} $ as follows. For $t=N$, it
holds that%
\begin{equation*}
y_{N}=N\sqrt{\frac{\beta }{1-\beta }}\left( 1-\beta \right) ^{N}=N\left(
1-\beta \right) ^{N-1}\sqrt{\beta \left( 1-\beta \right) }.
\end{equation*}%
For $t\leq N-1$, it holds that%
\begin{eqnarray*}
y_{t} &=&\left( -\left( N-t\right) \sqrt{\frac{1-\beta }{\beta }}+t\sqrt{%
\frac{\beta }{1-\beta }}\right) \frac{N!\left( 1-\beta \right) ^{t}\beta
^{N-t}}{t!(N-t)!} \\
&=&-\left( N-t\right) \sqrt{\frac{1-\beta }{\beta }}\frac{N!\left( 1-\beta
\right) ^{t}\beta ^{N-t}}{t!(N-t)!}+t\sqrt{\frac{\beta }{1-\beta }}\frac{%
N!\left( 1-\beta \right) ^{t}\beta ^{N-t}}{t!(N-t)!} \\
&=&-N\sqrt{\beta \left( 1-\beta \right) }\frac{\left( N-1\right) !\left(
1-\beta \right) ^{t}\beta ^{N-t-1}}{t!(N-t-1)!}+N\sqrt{\beta \left( 1-\beta
\right) }\frac{\left( N-1\right) !\left( 1-\beta \right) ^{t-1}\beta ^{N-t}}{%
\left( t-1\right) !(N-t)!}.
\end{eqnarray*}%
Contrasting $y_{t}$ and $y_{t+1}$ (for $t\leq N-1$), we find that the
negative term of $y_{t}$ equals the positive term of $y_{t+1}$. Hence, the
summation equals%
\begin{eqnarray*}
&&y_{k}+y_{k+1}+...+y_{N} \\
&=&-N\sqrt{\beta \left( 1-\beta \right) }\frac{\left( N-1\right) !\left(
1-\beta \right) ^{k}\beta ^{N-k-1}}{k!(N-k-1)!}+N\sqrt{\beta \left( 1-\beta
\right) }\frac{\left( N-1\right) !\left( 1-\beta \right) ^{k-1}\beta ^{N-k}}{%
\left( k-1\right) !(N-k)!} \\
&&-N\sqrt{\beta \left( 1-\beta \right) }\frac{\left( N-1\right) !\left(
1-\beta \right) ^{k+1}\beta ^{N-k-2}}{(k+1)!(N-k-2)!}+N\sqrt{\beta \left(
1-\beta \right) }\frac{\left( N-1\right) !\left( 1-\beta \right) ^{k}\beta
^{N-k-1}}{k!(N-k-1)!} \\
&&...-N\sqrt{\beta \left( 1-\beta \right) }\frac{\left( N-1\right) !\left(
1-\beta \right) ^{N-1}}{(N-1)!}+N\sqrt{\beta \left( 1-\beta \right) }\frac{%
\left( N-1\right) !\left( 1-\beta \right) ^{N-2}\beta }{(N-2)!1!} \\
&&+N\left( 1-\beta \right) ^{N-1}\sqrt{\beta \left( 1-\beta \right) } \\
&=&N\sqrt{\beta \left( 1-\beta \right) }\frac{\left( N-1\right) !\left(
1-\beta \right) ^{k-1}\beta ^{N-k}}{\left( k-1\right) !(N-k)!},
\end{eqnarray*}%
whereby only the positive term of $y_{t}$ is not cancelled out. Thus, the
expected loss equals%
\begin{eqnarray*}
Z\left( \beta \right) &=&\sigma \left( y_{k}+y_{k+1}+...+y_{N}\right) +N\mu
\sum_{t=k}^{N}\frac{N!\left( 1-\beta \right) ^{t}\beta ^{N-t}}{t!(N-t)!} \\
&=&N\sigma \sqrt{\beta \left( 1-\beta \right) }\frac{\left( N-1\right)
!\left( 1-\beta \right) ^{k-1}\beta ^{N-k}}{\left( k-1\right) !(N-k)!}+N\mu
\sum_{t=k}^{N}\frac{N!\left( 1-\beta \right) ^{t}\beta ^{N-t}}{t!(N-t)!},
\end{eqnarray*}%
which proves Equation (4.6)
.
\end{proof}

\begin{proof}
We apply Equation (4.13) 
to compute the expected loss under the
endogenous Binomial distribution. We observe that if $\beta \in \left[
\delta _{k},\delta _{k-1}\right] $, $\left( N-t\right) L+tH$ is positive for 
$t\geq k$ but is negative for $t\leq k-1$. Thus, when $\xi _{(i)}=\left(
N-k\right) L+\left( k-1\right) H$, it holds that $H+\xi _{(i)}=\left(
N-k\right) L+kH>0$ and $L+\xi _{(i)}=\left( N-k+1\right) L+\left( k-1\right)
H\leq 0$. The coefficient equals $\left( \mathbb{I}_{\{H+\xi _{(i)}>0\}}-%
\mathbb{I}_{\{L+\xi _{(i)}>0\}}\right) =1$. This event occurs with
probability $\frac{(N-1)!\left( 1-\beta \right) ^{k-1}\beta ^{N-k}}{%
(k-1)!(N-k)!}$. For all the other $\xi _{(i)}\neq \left( N-k\right) L+\left(
k-1\right) H$, the two indicators have the same value, making the
coefficient $\left( \mathbb{I}_{\{H+\xi _{(i)}>0\}}-\mathbb{I}_{\{L+\xi
_{(i)}>0\}}\right) =0$. Hence, we find that when $\beta \in \left[ \delta
_{k},\delta _{k-1}\right] $, Equation (4.13) can be written as follows:
\begin{eqnarray*}
Z\left( \beta \right) &=&N\mu \sum_{t=k}^{N}\frac{N!\beta ^{N-t}\left(
1-\beta \right) ^{t}}{t!\left( N-t\right) !} \\
&&+\frac{N}{2}2\sigma \left( \sqrt{\frac{\beta }{1-\beta }}+\sqrt{\frac{%
1-\beta }{\beta }}\right) \beta \left( 1-\beta \right) \frac{\left(
N-1\right) !\left( 1-\beta \right) ^{k-1}\beta ^{N-k}}{\left( k-1\right)
!\left( N-k\right) !},
\end{eqnarray*}%
which is identical to Equation (4.6)
.
\end{proof}

\subsection*{Proof of Lemma 5
:}

We recall that over the interval $\left[ \delta _{k},\delta _{k-1}\right] $,
the piecewise objective function equals 
\begin{equation*}
T\left( k,\beta \right) =N\sigma \sqrt{\beta \left( 1-\beta \right) }\frac{%
\left( N-1\right) !\left( 1-\beta \right) ^{k-1}\beta ^{N-k}}{\left(
k-1\right) !\left( N-k\right) !}.
\end{equation*}%
It is more convenient to take logarithm and consider $\ln (T(k,\beta ))=\ln
(N\sigma )+\left( k-\frac{1}{2}\right) \ln \left( 1-\beta \right) +\left(
N-k+\frac{1}{2}\right) \ln \beta $. The first order condition yields that 
\begin{equation*}
\frac{\partial \ln T\left( k,\beta \right) }{\partial \beta }=-\frac{\left(
2N-2k-2N\beta +1\right) }{2\beta \left( \beta -1\right) }=0,
\end{equation*}%
indicating that $\beta _{k}^{\ast }=\frac{1}{2N}\left( 2N-2k+1\right) $.
Substituting $\beta _{k}^{\ast }$ into $T\left( k,\beta \right) $, we obtain
the local optimal objective value $T^{\ast }\left( k\right) $ shown in
Equation (4.7)
. The second order condition yields that 
\begin{equation*}
\frac{\partial ^{2}\ln T\left( k,\beta \right) }{\partial \beta ^{2}}=-\frac{%
1}{2\beta ^{2}\left( \beta -1\right) ^{2}}\left( 2N-2k-2\beta -4N\beta
+4k\beta +2N\beta ^{2}+1\right) .
\end{equation*}%
The numerator is a convex and quadratic function with respect to $\beta $.
The determinant of this quadratic equation equals 
\begin{equation*}
\Delta =\left( -2-4N+4k\right) ^{2}-4\cdot 2N\cdot \left( 2N-2k+1\right)
=-4\left( 2k-1\right) \left( 2N-2k+1\right) \allowbreak <0
\end{equation*}%
due to $1\leq k\leq N$. Thus, the numerator is always positive, meaning that 
$\frac{\partial ^{2}\ln T\left( k,\beta \right) }{\partial \beta ^{2}}<0$
and making $\ln T\left( k,\beta \right) $ a log-concave function with
respect to $\beta $. We conclude that $\beta _{k}^{\ast }=\frac{1}{2N}\left(
2N-2k+1\right) $ is a local optimal solution over the interval $\left[
\delta _{k},\delta _{k-1}\right] $.

(i) The symmetry property as illustrated in Figure 1(a) of the paper 
 is
trivial due to the relationship 
\begin{equation*}
\beta _{k}^{\ast }=\frac{1}{2N}\left( 2N-2k+1\right) =1-\frac{1}{2N}\left(
2N-2\left( N-k\right) +1\right) =\allowbreak 1-\beta _{N-k}^{\ast }
\end{equation*}%
and the symmetry of $T\left( k,\beta \right) =T\left( N-k,1-\beta \right) $.

(ii) To prove the log-convex property, we take logarithm such that 
\begin{eqnarray*}
\ln T_{k} &=&\ln \Gamma \left( N\right) -\ln \Gamma \left( k\right) -\ln
\Gamma \left( N-k+1\right) \\
&&+\left( k-\frac{1}{2}\right) \ln \left( \frac{2k-1}{2N}\right) +\left( N-k+%
\frac{1}{2}\right) \ln \left( 1-\frac{2k-1}{2N}\right) ,
\end{eqnarray*}%
where $\Gamma \left( k\right) =\left( k-1\right) !$ is the Gamma function.
Since 
\begin{equation*}
\frac{\partial ^{2}}{\partial k^{2}}\left( \left( k-\frac{1}{2}\right) \ln
\left( \frac{2k-1}{2N}\right) +\left( N-k+\frac{1}{2}\right) \ln \left( 1-%
\frac{2k-1}{2N}\right) \right) =\frac{1}{k-\frac{1}{2}}+\frac{1}{N-k+\frac{1%
}{2}},
\end{equation*}%
and $\ln \Gamma \left( N\right) $ is a constant, our task is to show that 
\begin{equation*}
G=\frac{\partial ^{2}\ln \Gamma \left( k\right) }{\partial k^{2}}+\frac{%
\partial ^{2}\ln \Gamma \left( N-k+1\right) }{\partial k^{2}}<\frac{1}{k-%
\frac{1}{2}}+\frac{1}{N-k+\frac{1}{2}}.
\end{equation*}

It is known that $\frac{\partial ^{2}\ln \Gamma \left( k\right) }{\partial
k^{2}}=\psi ^{(1)}\left( k\right) $ is the Trigamma function satisfying $%
\psi ^{(1)}\left( k\right) \approx \frac{1}{k}+\frac{1}{2k^{2}}+\frac{1}{%
6k^{3}}-\frac{1}{30x^{5}}+...$. As the fourth term has a negative
coefficient, we use the first three terms of the expansion to construct an
upper bound on $G$ as follows: 
\begin{equation*}
G\leq \frac{1}{k}+\frac{1}{2k^{2}}+\frac{1}{6k^{3}}+\frac{1}{N-k+1}+\frac{1}{%
\left( N-k+1\right) ^{2}}+\frac{1}{6\left( N-k+1\right) ^{3}},
\end{equation*}%
which is consistent with Theorem 4 in \cite{Gordon1994}. Due to symmetry, it
suffices to show that%
\begin{equation*}
\frac{1}{k}+\frac{1}{2k^{2}}+\frac{1}{6k^{3}}-\frac{1}{k-\frac{1}{2}}=-\frac{%
k+1}{6k^{3}\left( 2k-1\right) }<0,
\end{equation*}%
for any $1\leq k\leq N$. We conclude that $G<\frac{1}{k-\frac{1}{2}}+\frac{1%
}{N-k+\frac{1}{2}}$, which ensures that $\frac{\partial ^{2}}{\partial k^{2}}%
\left( \ln (T_{k})\right) >0$. $\blacksquare $

\subsection*{Proof of Proposition 3
:}

When optimizing $Z_{1}(\beta )$, we solve the following first order
condition: 
\begin{eqnarray*}
\frac{\partial Z_{1}\left( \beta \right) }{\partial \beta } &=&\frac{%
\partial }{\partial \beta }\left( N\mu \left( 1-\beta ^{N}\right) +N\sigma 
\sqrt{\beta \left( 1-\beta \right) }\beta ^{N-1}\right) \\
&=&-\frac{N\beta ^{N-1}}{2\sqrt{\beta \left( 1-\beta \right) }}\left( \sigma
-2N\sigma +2N\sigma \beta +2N\mu \sqrt{\beta \left( 1-\beta \right) }\right)
=0,
\end{eqnarray*}%
which results in Equation (4.9)
. The candidate solution $\hat{%
\beta}_{1}$ is in the interior of $\left[ \delta _{1},1\right] $, meaning
that 
\begin{equation*}
-\left( N-1\right) \sigma \sqrt{\frac{1-\hat{\beta}_{1}}{\hat{\beta}_{1}}}%
+\sigma \sqrt{\frac{\hat{\beta}_{1}}{1-\hat{\beta}_{1}}}>0.
\end{equation*}%
Thus, under $\hat{\beta}_{1}$, only when all $X_{n}=\mu -\sigma \sqrt{\frac{%
1-\hat{\beta}_{1}}{\hat{\beta}_{1}}}$, can the sum $\xi $ be negative. The
expected loss indeed equals $Z^{\ast }=N\mu -N\hat{\beta}_{1}^{N}\left( \mu
-\sigma \sqrt{\frac{1-\hat{\beta}_{1}}{\hat{\beta}_{1}}}\right) $ under this
extreme two-point distribution. Similarly, by optimizing $Z_{N}\left( \beta
\right) $ we solve the following first order condition:%
\begin{eqnarray*}
\frac{\partial Z_{N}\left( \beta \right) }{\partial \beta } &=&\frac{%
\partial }{\partial \beta }\left( N\mu \left( 1-\beta \right) ^{N}+N\sigma 
\sqrt{\beta \left( 1-\beta \right) }\left( 1-\beta \right) ^{N-1}\right) \\
&=&-\frac{N\left( 1-\beta \right) ^{N}\sqrt{\beta \left( 1-\beta \right) }}{%
2\beta \left( 1-\beta \right) ^{2}}\left( 2N\sigma \beta -\sigma +2N\mu 
\sqrt{\beta \left( 1-\beta \right) }\right) =0,
\end{eqnarray*}%
which results in Equation (4.10).

Next, we investigate the sign of $g\left( \mu \right) =\frac{1}{N}%
\left(Z_{1}(\hat{\beta}_{1})-Z_{N}(\hat{\beta}_{N})\right)$. When $\mu =0$,
Lemma 5 
has shown that $g\left( 0\right) =0$. Applying
the envelope theorem, we find that%
\begin{equation*}
g^{\prime }\left( \mu \right) =\frac{1}{N}\frac{\partial Z_{1}(\beta )}{%
\partial \mu }|_{\beta =\hat{\beta}_{1}}-\frac{1}{N}\frac{\partial
Z_{N}(\beta )}{\partial \mu }|_{\beta =\hat{\beta}_{N}}=1-\hat{\beta}%
_{1}^{N}-\left( 1-\hat{\beta}_{N}\right) ^{N}.
\end{equation*}%
At the point $\mu =0$, it holds that%
\begin{equation*}
g^{\prime }\left( \mu \right) |_{\mu =0}=1-\left( \frac{2N-1}{2N}\right)
^{N}-\left( 1-\frac{1}{2N}\right) ^{N}<0.
\end{equation*}%
Thus, when $\Delta \mu >0$ (which occurs when $\mu $ increases from zero to
a positive number), $g\left( \mu \right) \simeq \left[ g^{\prime }\left( \mu
\right) |_{\mu =0}\right] \Delta \mu <0$, suggesting that $Z_{1}(\hat{\beta}%
_{1})-Z_{N}(\hat{\beta}_{N})<0$, making $\hat{\beta}_{N}$ a better solution
than $\hat{\beta}_{1} $. Likewise, when $\Delta \mu <0$ (which occurs when $%
\mu $ decreases from zero to a negative number), $g\left( \mu \right) \simeq %
\left[ g^{\prime }\left( \mu \right) |_{\mu =0}\right] \Delta \mu >0$,
suggesting that $Z_{1}(\hat{\beta}_{1})-Z_{N}(\hat{\beta}_{N})>0$, making $%
\hat{\beta}_{1}$ a better solution than $\hat{\beta}_{N}$.

Using the same method, we can contrast $\max \left\{ Z_{1}(\hat{\beta}%
_{1}),Z_{N}(\hat{\beta}_{N})\right\} $ with any other local optimum
objective value $Z_{k}(\hat{\beta}_{k})$, where $k\in \left\{
2,3,...,N-1\right\} $. We notice that%
\begin{eqnarray*}
Z_{k}\left( \beta \right) &=&\frac{N!}{(N-k)!(k-1)!}\sigma \sqrt{\beta
(1-\beta )}(1-\beta )^{k-1}\beta ^{N-k} \\
&&+N\mu \left( 1-\beta ^{N}-N\beta ^{N-1}(1-\beta )-...-\frac{N!}{%
(N-k+1)!(k-1)!}\beta ^{N-k+1}(1-\beta )^{k-1}\right) .
\end{eqnarray*}%
The first order condition yields that%
\begin{equation*}
\frac{\partial Z_{k}\left( \beta \right) }{\partial \beta }=-\frac{\sigma }{2%
}\frac{N!\beta ^{N-k+1}\left( 1-\beta \right) ^{k-1}}{(N-k)!(k-1)!\sqrt{%
\beta \left( 1-\beta \right) }}\left[ \left( 2k-1\right) \sigma -2N\sigma
+2N\sigma \beta +2N\mu \sqrt{\beta \left( 1-\beta \right) }\right] =0.
\end{equation*}%
Thus, we find that%
\begin{equation*}
\hat{\beta}_{k}=\frac{\sigma ^{2}\left( 2N-2k+1\right) +N\mu ^{2}-\mu \sqrt{%
N^{2}\mu ^{2}+\sigma ^{2}\left( 2k-1\right) \left( 2N-2k+1\right) }}{%
2N\left( \sigma ^{2}+\mu ^{2}\right) }.
\end{equation*}

In a special case with $\mu =0$, we recover the same result in the proof of
Lemma 5 
that $\hat{\beta}_{k}=\frac{2N-2k+1}{2N}$. When $%
\mu <0$ (which implies $\Delta \mu <0$), we find that%
\begin{eqnarray*}
\frac{1}{N}\frac{\partial }{\partial \mu }Z_{k}(\hat{\beta}_{k})|_{\mu =0}
&=&1-\hat{\beta}_{k}^{N}-N\hat{\beta}_{k}^{N-1}(1-\hat{\beta}_{k})-...-\frac{%
N!}{(N-k+1)!(k-1)!}\hat{\beta}_{k}^{N-k+1}(1-\hat{\beta}_{k})^{k-1} \\
&=&1-\left( \frac{2N-2k+1}{2N}\right) ^{N}-...-\frac{N!}{(N-k+1)!(k-1)!}%
\left( \frac{2N-2k+1}{2N}\right) ^{N-k+1}\left( \frac{2k-1}{2N}\right) ^{k-1}
\\
&>&1-\left( \frac{2N-1}{2N}\right) ^{N}=\frac{\partial }{\partial \mu }Z_{1}(%
\hat{\beta}_{1})|_{\mu =0}.
\end{eqnarray*}%
At the point $\mu =0$, Lemma 5 
already shows that $Z_{1}(%
\hat{\beta}_{1})>Z_{k}(\hat{\beta}_{k})$. We find that $Z_{1}(\hat{\beta}%
_{1})-Z_{k}(\hat{\beta}_{k})\simeq \frac{\partial }{\partial \mu }\left(
Z_{1}(\hat{\beta}_{1})-Z_{k}(\hat{\beta}_{k})\right) |_{\mu =0}\Delta \mu >0$%
, making $\hat{\beta}_{1}$ the global optimal solution for $Z\left( \beta
\right) $ when $\mu <0$. Similarly, we find that when $\mu >0$, $Z_{N}(\hat{%
\beta}_{N})>Z_{k}(\hat{\beta}_{k})$.

In summary, the global optimal solution is 1) $\hat{\beta}_{1}$ when $\mu <0$
or 2) $\hat{\beta}_{N}$ when $\mu >0$. Certainly, Lemma 5 
already shows that when $\mu =0$, there exist two global optimal solutions $%
\hat{\beta}_{1}=\frac{2N-1}{2N}$ and $\hat{\beta}_{N}=\frac{1}{2N}$. $%
\blacksquare $

\section*{Proof of Theorem 1
}

As the sum $\xi $ satisfies $\mathbb{E}\left( \xi \right) =N\mu $ and $%
Var(\xi )=N^{2}\mu ^{2}+N\sigma ^{2}$, we denote the cumulative distribution
function of $\xi $ by $G\left( \cdot \right) $ and define the following
ambiguity set:%
\[
\Omega _{0}=\left\{ G|\int_{-\infty }^{\infty }\xi ^{t}dG\left( \xi \right)
=m_{t}^{\prime},t=0,1,2\right\} ,
\]%
where $m_{0}^{\prime}=1$, $m_{1}^{\prime}=N\mu $, and $m_{2}^{\prime}=N^{2}\mu ^{2}+N\sigma ^{2}$. With
independence, the ambiguity set is the following:%
\[
\Omega _{1}=\left\{ G|G\left( \cdot \right) =F_{N}\left( \cdot \right)
,\int_{-\infty }^{\infty }X^{t}dF\left( \xi \right) =m_{t},t=0,1,2\right\} ,
\]%
where $m_{0}=1$, $m_{1}=\mu $, $m_{2}=\mu
^{2}+\sigma ^{2}$, and $F_{N}$ is the $N$-fold convolution of $F$. The
similarity between sets $\Omega _{0}$ and $\Omega _{1}$ is that they include
distribution functions $G\left( \cdot \right) $ that satisfy the
mean-variance conditions on the sum $\xi $. The difference is that any
distribution from set $\Omega _{1}$ can be written as the $N$-fold
convolution of some iid distribution $F$; while some distributions from set $%
\Omega _{0}$ cannot. For
instance, if $\xi $ follows the following two-point distribution:%
\begin{equation}
\left\{
\begin{array}{l}
\Pr \left( \tilde{\xi}=N\mu -\sqrt{N}\sigma \sqrt{\frac{1-\gamma }{\gamma }}%
\right) =\gamma , \\
\Pr \left( \tilde{\xi}=N\mu +\sqrt{N}\sigma \sqrt{\frac{\gamma }{1-\gamma }}%
\right) =1-\gamma ,%
\end{array}%
\right.   \label{E:2PointXi}
\end{equation}%
we cannot find any iid distribution $F$ such that $G\left( \xi \right)
=F_{N}\left( \xi \right) $. The reason is simple and intuitive. Suppose that
each $\theta _{n}$ has at least two realized values. The \textquotedblleft
presumed" sum $\xi =\theta _{1}+...+\theta _{N}$ must have at least $N+1\geq
3$ realized values for any $N\geq 2$. In contrast, the two-point
distribution in equation (\ref{E:2PointXi}) has only two realized values. We
conclude that $\Omega _{0}\supset \Omega _{1}$.

We seek to choose a distribution from the smaller set $\Omega _{1}$ to
maximize the expected linear loss: $Z_{I}=\sup_{G\in \Omega _{1}}\mathbb{E}%
\left( \xi -q\right) ^{+}$, where the subscript $I$ indicates independence.
We now modify the larger set $\Omega _{0}$. Let $G$ be a candidate distribution from set $\Omega _{0}$ and $%
G^{-1}\left( \gamma \right) $ be the inverse. We require that%
\begin{equation}
L_{N}\equiv N\mu -N\sigma \sqrt{\frac{1-\gamma ^{\frac{1}{N}}}{\gamma ^{%
\frac{1}{N}}}}\leq G^{-1}\left( \gamma \right) \leq N\mu +N\sigma \sqrt{%
\frac{1-(1-\gamma )^{\frac{1}{N}}}{(1-\gamma )^{\frac{1}{N}}}}\equiv H_{N}.
\label{E:PercentileConstraints}
\end{equation}%
We observe that constraints (\ref{E:PercentileConstraints}) are the
necessary conditions for independence. Specifically, any distribution from
set $\Omega _{1}$ must satisfy constraints (\ref{E:PercentileConstraints})
as Corollary 1 
suggested. But some distributions from set $%
\Omega _{0}$, despite satisfying constraints (\ref{E:PercentileConstraints}%
), may not be an element of the set $\Omega _{1}$.

We consider another ambiguity set%
\[
\Omega _{0}^{\prime }=\left\{ G|\int_{-\infty }^{\infty }\xi ^{t}dG\left(
\xi \right) =m_{t}^{\prime},t=0,1,\text{ and }\int_{-\infty }^{\infty }\xi
^{2}dG\left( \xi \right) \leq N^{2}\mu ^{2}+N\sigma ^{2}\right\} .
\]%
It must hold that $\Omega _{0}^{\prime }\supset \Omega _{0}$. The
aggregation model yields that $\bar{Z}_{0}^{\ast
}=\sup_{G\in \Omega _{0}^{\prime }}\mathbb{E}\left( \xi -q\right)
^{+}=Z_{0}^{\ast }=\sup_{G\in \Omega _{0}}\mathbb{E}\left( \xi -q\right) ^{+}
$. Specifically, part (b) of Lemma 2 
implies that the bound on $\mathbb{E}(\xi-q)^+$ is increasing in $\sigma$ and hence,  letting the variance to be $\leq \sigma^2$ rather than $=\sigma^2$ will not alter the result. Next, we consider the following relaxed model:%
\[
\begin{tabular}{ll}
$Z_{r}=\sup_{G\in \Omega _{0}^{\prime }}\mathbb{E}\left( \xi -q\right) ^{+}$
& $\text{s.t. }L_{N}\leq G^{-1}\left( \gamma \right) \leq H_{N},\forall
\gamma \in (0,1).$%
\end{tabular}%
\]%
Conceptually, we relax the variance constraint from $=$ to $\leq $ but at
the same time, we add the percentile constraints. It is trivial to verify
that $Z_{r}^{\ast }\geq Z_{I}^{\ast }$ as $\Omega _{0}^{\prime }\cap \left\{
G|L_{N}\leq G^{-1}\left( \gamma \right) \leq H_{N}\right\} \supset \Omega
_{1}$ (i.e., any feasible distribution from the smaller set $\Omega _{1}$
must be an element of $\Omega _{0}^{\prime }$). The relaxed model $Z_{r}$ is
a one-dimensional model regarding the sum $\xi $. We observe that $\left(
\xi -q\right) ^{+}$ is continuous and convex in $\xi $. Thus, the extreme
distribution for $\xi $ is a two-point distribution satisfying%
\begin{equation}\label{E:TwoPointDominatingXi}
\left\{
\begin{array}{l}
\Pr \left( \xi =N\mu -\sqrt{N}\sigma _{0}\sqrt{\frac{1-\gamma}{\gamma }} \equiv L_{\gamma} \right) =\gamma , \\
\Pr \left( \xi =N\mu +\sqrt{N}\sigma _{0}\sqrt{\frac{\gamma }{1-\gamma }} \equiv H_{\gamma} \right) =1-\gamma ,%
\end{array}%
\right.
\end{equation}%
where $\sigma _{0}\leq \sigma $. We recall that the percentile must be a root for equation $G(x)=\gamma$. The cumulative distribution function for equation (\ref{E:TwoPointDominatingXi}) is a step function satisfying $G(x)=\gamma$ when $L_{\gamma} \leq x < H_{\gamma}$. Thus, the percentile constraints will fail to hold only when both  $L_{\gamma}<L_N$ and $H_N<H_{\gamma}$ occur. If either $L_N \leq L_{\gamma}$ or $H_{\gamma} \leq H_N$ occurs, we can identify an iid distribution such that the sum satisfies the percentile constraints. Hence, the following relaxed model $Z_{r}$ is our focus:
\begin{eqnarray*}
Z_{r} &=&\sup_{G\in \Omega _{0}}E\left( \xi -s\right) ^{+}=\max_{\substack{ %
\gamma \in (0,1) \\ \sigma _{0}\leq \sigma }}\left\{ \left( 1-\gamma \right)
\left( N\mu +\sqrt{N}\sigma _{0}\sqrt{\frac{\gamma }{1-\gamma }}-s\right)
\right\}  \\
\text{s.t. } &&L_{N}\leq N\mu -N\sigma _{0}\sqrt{\frac{1-\gamma }{\gamma }}%
\text{ or }N\mu +\sqrt{N}\sigma _{0}\sqrt{\frac{\gamma }{1-\gamma }}\leq
H_{N}\text{.}
\end{eqnarray*}%
Evidently, $\sigma _{0}\leq \sigma $ is non-binding; otherwise, the sum of
dependent variables attains the same bound as the sum of independent
variables (and the bound reduces to the aggregate bound in Lemma 2
).

We obtain the following results for $Z_{r}$. 1) When $L_{N}\leq N\mu
-N\sigma _{0}\sqrt{\frac{1-\gamma }{\gamma }}$ is binding, it holds that $%
\sigma _{0}=\frac{N\mu -L_{N}}{\sqrt{N}\sqrt{\frac{1-\gamma }{\gamma }}}$
and hence,%
\begin{eqnarray}
Z_{r} &=&\left( 1-\gamma \right) \left( N\mu +\sqrt{N}\sigma _{0}\sqrt{\frac{%
\gamma }{1-\gamma }}-s\right) =\left( 1-\gamma \right) \left( N\mu +\left(
N\mu -L_{N}\right) \frac{\gamma }{1-\gamma }-s\right)   \nonumber \\
&=&\left( N\mu -s\right) +\gamma \left( s-N\mu +N\sigma \sqrt{\frac{1-\gamma
^{\frac{1}{N}}}{\gamma ^{\frac{1}{N}}}}\right) .  \label{E:Z1-Leftmost}
\end{eqnarray}%
Similarly, when $N\mu +\sqrt{N}\sigma _{0}\sqrt{\frac{\gamma }{1-\gamma }}%
\leq H_{N}$ is binding, it holds that%
\begin{eqnarray}
Z_{r} &=&\left( 1-\gamma \right) \left( N\mu +\sqrt{N}\frac{H_{N}-N\mu }{%
\sqrt{N}\sqrt{\frac{\gamma }{1-\gamma }}}\sqrt{\frac{\gamma }{1-\gamma }}%
-s\right) =\left( 1-\gamma \right) \left( H_{N}-s\right)   \nonumber \\
&=&\left( 1-\gamma \right) \left( N\mu +N\sigma \sqrt{\frac{1-(1-\gamma )^{%
\frac{1}{N}}}{(1-\gamma )^{\frac{1}{N}}}}-s\right) .  \label{E:ZN-RightMost}
\end{eqnarray}
In equation (\ref{E:Z1-Leftmost}), we let $\beta =\gamma ^{\frac{1}{N}}$ to
recover the objective function $Z_{1}\left( \beta \right) $ that we develop
in the proof of Proposition 3; 
in equation (\ref{E:Z1-Leftmost}), we
let $1-\beta =\left( 1-\gamma \right) ^{\frac{1}{N}}$ to recover the
objective function $Z_{N}\left( \beta \right) $ that we develop in the proof of
Proposition 3.
We find that $Z_{r}=\max \left\{
\max_{\beta }Z_{1}\left( \beta \right) ,\max_{\beta }Z_{N}\left( \beta
\right) \right\} $. Either $Z_{1}\left( \beta \right) $ or
$Z_{N}\left( \beta \right) $ can be attained by a distribution from set $\Omega_1$, meaning that the equal sign in $Z_{r}\geq Z_{I}$ must hold. In summary, the
two-point distributions developed in Proposition 3 
are
indeed the extreme distributions attaining the maximal expected linear loss.
$\blacksquare $ 

\bibliographystyle{apalike}
\bibliography{BoundIndependentVariable}